\newif\ifJOURNAL  % LMJ (Proceedings of IVC 2014)
\newif\ifWP
\newif\ifBASIC
\newif\ifFULL
\newif\ifLATIN
\newif\ifnotJOURNAL	% derivative conditional
\newif\ifnotFULL	% derivative conditional
\newif\ifnotLATIN	% derivative conditional
  \newcommand*{\GTPI}{Vovk/Shafer:2008CAPM}
  \newcommand*{\GTPII}{GTP2}
  \newcommand*{\CTIV}{Vovk:2012FS-short}
  \newcommand*{\CTV}{Vovk:2011-local}
  \newcommand*{\GTPI}{GTP1}
  \newcommand*{\GTPII}{GTP2}
  \newcommand*{\CTIV}{GTP28arXiv}
  \newcommand*{\CTV}{GTP35arXiv}
  \newcommand*{\GTPI}{Vovk/Shafer:2008CAPM}
  \newcommand*{\GTPII}{GTP2}
  \newcommand*{\CTIV}{Vovk:2012FS} % {Vovk:arXiv0904}
  \newcommand*{\CTV}{Vovk:2011-local} % {Vovk:arXiv1005}
  \newcommand{\Extra}[1]{}
\newif\iftwodates
\renewcommand\maketitle{\begin{titlepage}%
  \let\footnotesize\small
  \let\footnoterule\relax
  \let \footnote \thanks
  \null\vfil
  \vskip 30\p@
  \begin{center}%
    {\LARGE \bf \@title \par}%
    \vskip 3em%
    {\large
     \lineskip .75em%
     \begin{tabular}[t]{c}%
       \@author
     \end{tabular}\par}%
     \vskip 1.5em%
  \end{center}\par
  \vfill
  \begin{center}
    \raisebox{1.5cm}{\includegraphics[width=0.58\textwidth]%
      {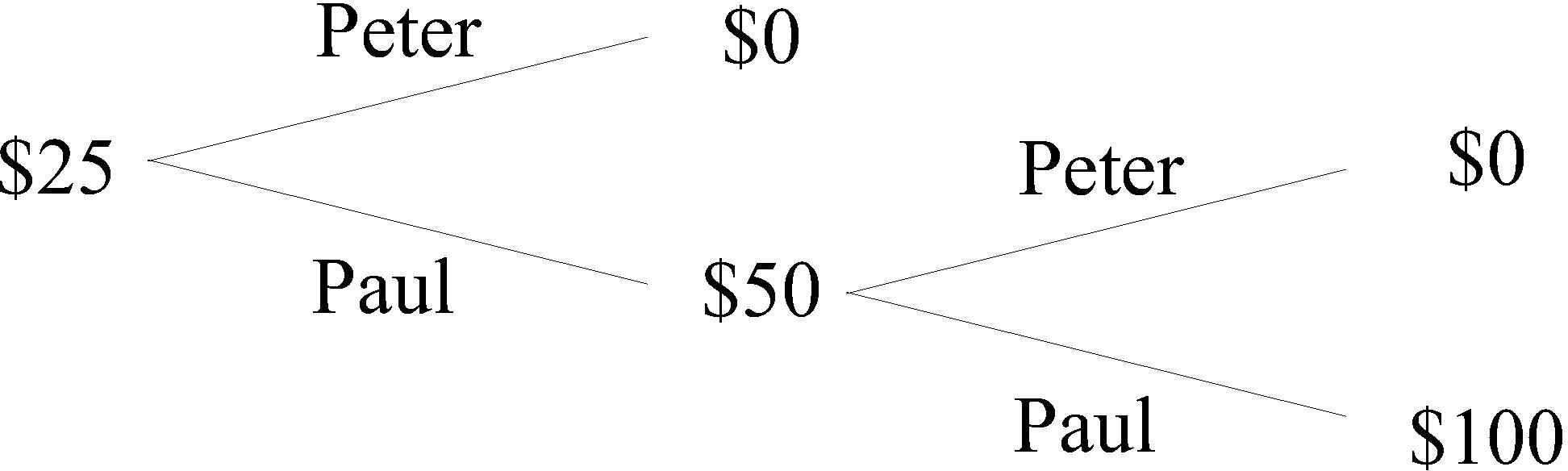}}%
    \hskip 3em%
    \includegraphics[width=0.29\textwidth]%
      {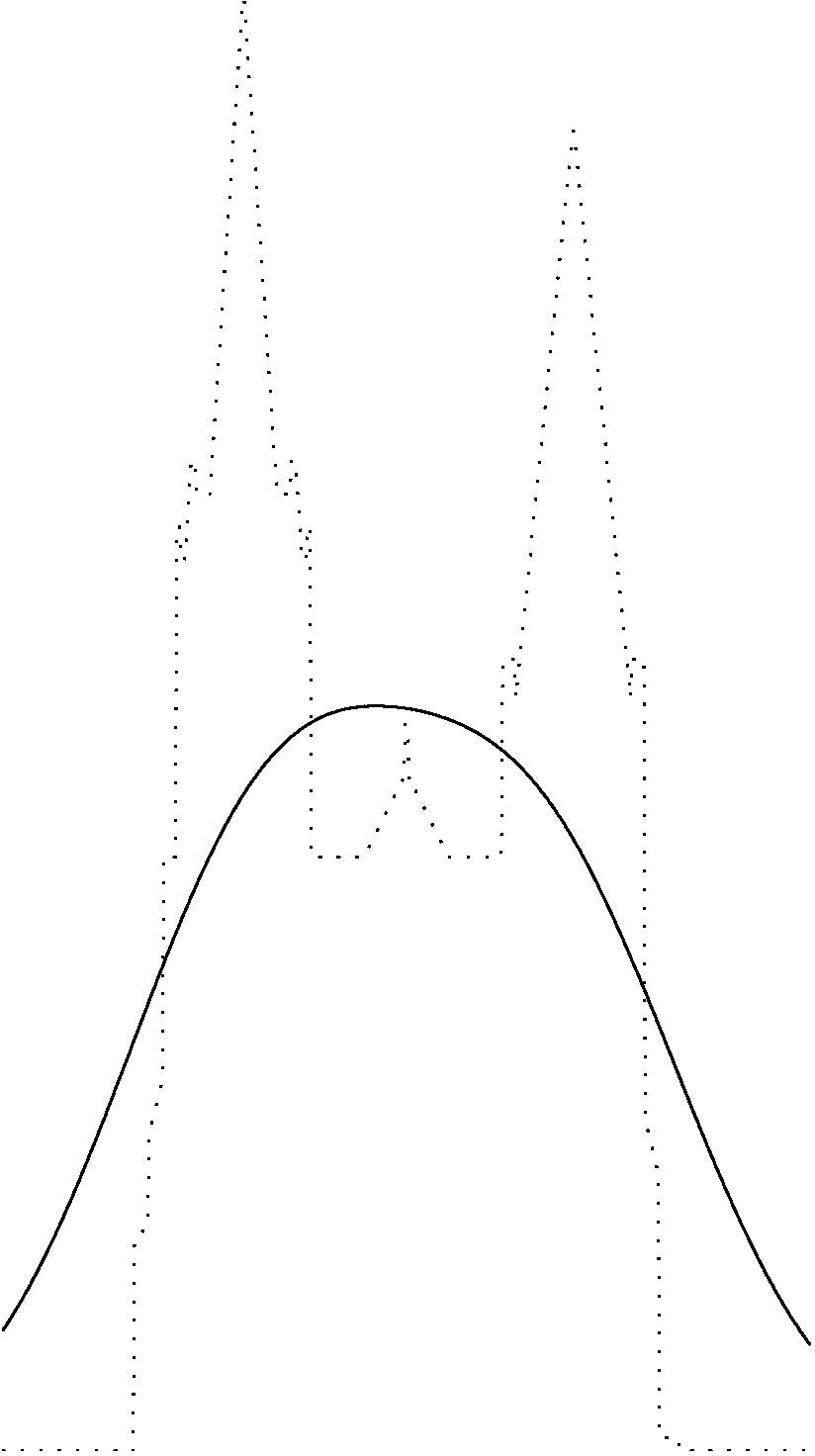}%
  \end{center}
  \@thanks
  \vfill
  \begin{center}
    {\large \bf The Game-Theoretic Probability and Finance Project}
  \end{center}
  \begin{center}
    {\large Working Paper \#\No}
  \end{center}
  \begin{center}
    {\iftwodates\large First posted \firstposted.
    Last revised \@date.\else\large\@date\fi}
  \end{center}
  \begin{center}
    Project web site:\\
    http://www.probabilityandfinance.com
  \end{center}
  \end{titlepage}%
  \setcounter{footnote}{0}%
  \global\let\thanks\relax
  \global\let\maketitle\relax
  \global\let\@thanks\@empty
  \global\let\@author\@empty
  \global\let\@date\@empty
  \global\let\@title\@empty
  \global\let\title\relax
  \global\let\author\relax
  \global\let\date\relax
  \global\let\and\relax
}
\renewenvironment{abstract}{%
  \titlepage
  \null\vfil
  \@beginparpenalty\@lowpenalty
  \begin{center}%
    \Large \bfseries \abstractname
    \@endparpenalty\@M
  \end{center}}%
  {\par\vfill\tableofcontents\endtitlepage}
\renewenvironment{thebibliography}[1]
  {\section*{\refname}%
  \addcontentsline{toc}{section}{\refname}%   Volodya
  \@mkboth{\MakeUppercase\refname}{\MakeUppercase\refname}%
  \list{\@biblabel{\@arabic\c@enumiv}}%
    {\settowidth\labelwidth{\@biblabel{#1}}%
    \leftmargin\labelwidth
    \advance\leftmargin\labelsep
    \@openbib@code
    \usecounter{enumiv}%
    \let\p@enumiv\@empty
    \renewcommand\theenumiv{\@arabic\c@enumiv}}%
    \sloppy
    \clubpenalty4000
    \@clubpenalty \clubpenalty
    \widowpenalty4000%
    \sfcode`\.\@m}
    {\def\@noitemerr
    {\@latex@warning{Empty `thebibliography' environment}}%
  \endlist}
  \newcommand{\Extra}[1]{}
  \newcommand{\Extra}[1]{}
  \renewcommand{\Extra}[1]{\blue{#1}}
  \newcommand*{\blue}[1]{\textcolor{blue}{#1}}
  \newcommand*{\bluebegin}{\begingroup\color{blue}}
  \newcommand*{\blueend}{\endgroup}
\renewcommand{\theenumi}{(\alph{enumi})}
\newcommand*{\st}{\mathrel{|}}		% use \bigm|, \Bigm|,..., as appropriate,
\newcommand*{\dd}{\mathrm{d}}		% differential (closes \int)
\newcommand*{\K}{\mathcal{K}}		% capital
\newcommand*{\BBB}{\mathcal{B}}		% Borel sigma-algebra
\newcommand*{\FFF}{\mathcal{F}}		% sigma-algebra
\DeclareMathOperator{\III}{\boldsymbol{1}}	% indicator function
\DeclareMathOperator{\argmin}{argmin}		% argmin
\DeclareMathOperator{\Ln}{Ln}			% stochastic logarithm
\DeclareMathOperator{\mesh}{mesh}		% mesh
\DeclareMathOperator{\w}{w}			% oscillation over interval
\DeclareMathOperator{\osc}{osc}			% oscillation over partition
\DeclareMathOperator{\vi}{vi}			% variation index
\newcommand*{\lng}{_{\textrm{lg}}}   % long margin requirement
\newcommand*{\shrt}{_{\textrm{sh}}}  % short margin requirement
\newcommand*{\cont}{^{\textrm{c}}}   % continuous part
\newcommand*{\bbbp}{\mathbb{P}}      % auxiliary (probability)
\DeclareMathOperator{\UpProb}{\overline{\bbbp}}    % upper probability
\DeclareMathOperator{\MM}{M}          % number of upcrossings
\DeclareMathOperator{\DD}{D}          % number of downcrossings
\newcommand*{\bbbr}{\mathbb{R}}    % the real numbers
\newcommand*{\bbbd}{\mathbb{D}}    % the dyadic rational numbers
\newcommand*{\bbbz}{\mathbb{Z}}    % the integer numbers
\newcommand*{\bbbq}{\mathbb{Q}}    % the rational numbers
\newcommand*{\bbbn}{\mathbb{N}}    % the natural numbers
  \theoremstyle{plain}
  \newtheorem{theorem}{Theorem} % [section]
  \newtheorem{proposition}{Proposition}
  \newtheorem{corollary}{Corollary}
  \newtheorem{lemma}{Lemma}
  \theoremstyle{definition}
  \newtheorem{remark}{Remark}
  \title{It\^o calculus without probability in idealized financial markets}
  \author{Vladimir Vovk}
  \newcommand*{\No}{36}
  \newcommand*{\firstposted}{August 3, 2011}
  \title{It\^o calculus without probability in idealized financial markets}
  \author{Vladimir Vovk\\
  \texttt{v.vovk{\rm@}rhul.ac.uk}\\
  \texttt{http://vovk.net}}
\begin{document}
\ifJOURNAL
  \bibliographystyle{plainlmj}
  \begin{topmatter}
    %************* add runauthor at beginning of Your file
    %************* add runtitle at beginning of Your file
    %************* add title at beginning of Your file
    %************* add add all authors at beginning of Your file
    %************* add LMJRuntitle and LMJRunauthor  at beginning of Your file
    \title{It\^o calculus without probability in idealized financial markets}
    %\title{...\thanks{...}}%Example 2%
    \author{Vladimir Vovk}
    \institution{Department of Computer Science,
      Royal Holloway, University of London,
      Egham, Surrey TW20 0EX, England}
      % Tel.: +44 1784 443426
      % Fax: +44 1784 439786
    \email{v.vovk@rhul.ac.uk}
    % \dedicated{...}
    %*** Editors Lines **********************************************************
    \Received %
  \end{topmatter}
  %\LMJarticle %Editors version
  %****************************************************************************
  %spell_from this point
\fi

\ifnotJOURNAL
  \maketitle
\fi

\begin{abstract}
  We consider idealized financial markets
  in which price paths of the traded securities
  are c\`adl\`ag functions,
  imposing mild restrictions on the allowed size of jumps.
  We prove the existence of quadratic variation
  for typical price paths,
  where the qualification ``typical'' means that there is a trading strategy
  that risks only one monetary unit and brings infinite capital
  if quadratic variation does not exist.
  \ifFULL\bluebegin
    In fact,
    it brings infinite capital as soon as quadratic variation ceases to exist.
    But I do not prove this is this paper
    (this observation becomes important for the infinite time interval $[0,\infty)$).
  \blueend\fi
  This result allows one to apply numerous known results
  in pathwise It\^o calculus to typical price paths;
  we give a brief overview of such results.
  \ifFULL\bluebegin
    Namely:
    stochastic integration and It\^o's formula
    following F\"ollmer's \cite{Follmer:1981};
    generalizations by Cont and Fourni\'e \cite{Cont/Fournie:2010};
    Norvai{\u s}a's \cite{Norvaisa:2000,Norvaisa:2001} results
    (such as the stochastic logarithm of the price path).
  \blueend\fi
  \iffalse
    \keywords{Continuous time \and C\`adl\`ag price paths \and Incomplete markets
      \and Pathwise quadratic variation \and Pathwise stochastic integration}
    \subclass{60G17 \and 60G05 \and 60G44 \and 91G99}
    \JEL{C58 \and G14}
  \fi
\end{abstract}

\ifJOURNAL
  \Keywords
    continuous time, c\`adl\`ag price paths, incomplete markets,
    pathwise quadratic variation, pathwise stochastic integration
\fi

\section{Introduction}
%\label{sec:introduction}

It\^o calculus, based on the notion of the stochastic integral,
plays an important role in mathematical finance.
However, the usual construction of the stochastic integral
relies on statistical assumptions about security prices,
and it is not easy to come up with their realistic statistical models,
as witnessed by the proliferation of various competing models in recent years.
The pathwise stochastic integral and It\^o calculus
proposed in 1981 by F\"ollmer \cite{Follmer:1981}
and developed by numerous authors
(see, e.g., \cite{Norvaisa:2001,Cont/Fournie:2010})
do not depend on any statistical assumptions.
They, however, depend on the existence of quadratic variation for the allowed paths,
which is postulated.
The goal of this paper is to establish the existence of quadratic variation
of security prices under weak conditions
that can be justified from the economic point of view.

The existence of quadratic variation for typical continuous price paths
was established in \cite{\CTIV},
where it served as a tool for studying properties of typical price paths,
such as their volatility.
This paper proves the existence of quadratic variation
under a weaker assumption than in \cite{\CTIV}:
the price paths are assumed to be c\`adl\`ag without huge jumps.
% never exceeding $c$ in absolute value
% (and a relative version of this assumption).
A related result was obtained in \cite{\CTV}:
that paper shows that the $p$-variation of typical prices paths
is finite for $p>2$.
The assumptions of \cite{\CTV} are not comparable
to the assumptions of this paper:
on one hand, there are no restrictions on the size of jumps in \cite{\CTV},
but on the other hand,
the price paths are assumed nonnegative
(albeit strong, this assumption is natural
in the context of financial markets).
In this paper, we will often be using results and methods of \cite{\CTV}.

This paper, like \cite{\CTIV} and \cite{\CTV},
is written in the tradition of game-theoretic probability
(see, e.g., \cite{Shafer/Vovk:2001,%
% \Takeuchi
Takeuchi/etal:2009,%
% Takeuchi/etal:2010,Kumon/etal:2011,Nakajima/etal:2011
Miyabe/Takemura:2012,Miyabe/Takemura:2013,Miyabe/Takemura:2014}).
In game-theoretic probability,
probability-like notions (such as the upper probabilities of events and upper prices of functions)
are defined in terms of
% perfect-information games
idealized financial markets,
and mathematical finance is a natural field of application.
% for game-theoretic probability
The key technical tool used in this paper will be ``high-frequency limit order strategies'',
introduced in game-theoretic probability by \cite{Takeuchi/etal:2009}.

We start the main part of the paper by defining in Section \ref{sec:QV-def}
the notion of quadratic variation,
a version of F\"ollmer's pathwise definition % of quadratic variation
adapted to our goals.
% (and later shown to be equivalent to F\"ollmer's original definition under our assumptions).
The next section, Section \ref{sec:definitions},
states and discusses our assumption about the jumps of price paths
and defines the notion of a typical price path.
Our main result, the existence of quadratic variation for typical price paths,
is established in Section~\ref{sec:QV-proof}.
Section \ref{sec:multidim} extends this result to typical vector-valued price paths;
% taking values in $\bbbr^m$;
in particular, it proves the existence of quadratic covariation between two price paths.
Section \ref{sec:other-definitions} compares
the notion of quadratic variation used in this paper
with Norvai\u{s}a's \cite{Norvaisa:2001} and F\"ollmer's \cite{Follmer:1981};
we show that under natural conditions
(satisfied in the main example in this paper)
the first two notions of quadratic variation are equivalent,
% the existence of one of them implies the existence of the other two,
% and then all three take the same values
whereas F\"ollmer's notion becomes equivalent to them
when another natural condition is added.
Section~\ref{sec:implications} summarizes some of the known results
% that have been proved
for price paths possessing quadratic variation.

\subsection*{Notation}

The set of all real (resp.\ rational, resp.\ integer) numbers
is denoted $\bbbr$ (resp.\ $\bbbq$, resp.\ $\bbbz$).
The set of all natural numbers is denoted $\bbbn$: $\bbbn:=\{1,2,\ldots\}$;
set $\bbbn_0:=\bbbn\cup\{0\}$.
% The words ``positive'', ``negative'',
% ``increasing'', ``decreasing'',
% ``before'', and ``after'' will be used in the wide sense
% of the inequalities $\le$ or $\ge$, as appropriate;
% we will add qualifiers ``strict'' or ``strictly''
% when meaning the narrow sense of $<$ or $>$.
We use the usual notation $u\vee v:=\max(u,v)$, $u\wedge v:=\min(u,v)$,
and $u^+:=u\vee0$.
If $s<t$, we will use the notation $s\vee u\wedge t$
to mean $(s\vee u)\wedge t=s\vee(u\wedge t)$.
The expression $\inf\emptyset$ is always interpreted as $\infty$.

The set of all c\`adl\`ag
(i.e., continuous on the right and having limits on the left)
functions $\omega:[0,T]\to\bbbr$,
where $T>0$,
will be denoted $D[0,T]$.
If $\omega\in D[0,T]$ and $t\in(0,T]$,
we set $\omega(t-):=\lim_{s\uparrow t}\omega(s)$
and $\Delta\omega(t):=\omega(t)-\omega(t-)$.
% for completeness,
% we also set $\omega(0-):=0$ and $\Delta\omega(0):=\omega(0)$
As usual,
$C^n(\bbbr)$ stands for the set of all functions $f:\bbbr\to\bbbr$
that are $n$ times continuously differentiable.
% As usual,
% $C^n(\bbbr^m)$ for $m\in\bbbn$
% stands for the set of all functions $f:\bbbr^m\to\bbbr$
% that are $n$ times continuously differentiable;
% $C^n(\bbbr^1)$ is abbreviated to $C^n(\bbbr)$.

\section{Pathwise quadratic variation}
\label{sec:QV-def}

Let $\omega:[0,T]\to\bbbr$ be a c\`adl\`ag function,
interpreted as the price path of a financial security
over the time period $[0,T]$ whose end-point $T>0$ is fixed throughout the paper.
In this section we give a modification of F\"ollmer's \cite{Follmer:1981} definition
% (adapted to our bounded time interval $[0,T]$)
of the quadratic variation of $\omega$;
F\"ollmer's definition itself will be discussed in Section~\ref{sec:other-definitions}.

A \emph{partition} (of $[0,T]$) is a finite sequence of numbers
$0=t_0<t_1<\cdots<t_m\le T$;
we also set $t_k:=\infty$ for $k>m$.
\ifFULL\bluebegin
  Having $t_m<T$ in place of $t_m\le T$ would exclude our main example $\pi:=\tau$
  (for the definition of $\tau$, see Section~\ref{sec:QV-proof}).
\blueend\fi
The \emph{mesh} of this partition is
$\max_{k\in\bbbn}\left|(t_{k}\wedge T) - (t_{k-1}\wedge T)\right|$.

% It will sometimes be convenient to identify the partition with the finite set
% $\{t_1,\ldots,t_m\}\subseteq[0,T]$.
% and let $\nu(\pi):=n$ stand for the size $n$ of the partition.
% the countable set $\{t_1,t_2,\ldots\}\subseteq\bbbr$
% (partitions as sets are characterized by the property that their intersection
% with any bounded interval is finite).
Let $\pi=(\pi^0,\pi^1,\pi^2,\ldots)$
be a nested sequence of partitions:
for each $n\in\bbbn$, each element of $\pi^{n-1}$ is an element of $\pi^n$
% $\pi^0\subseteq\pi^1\subseteq\pi^2\subseteq\cdots$
(in this paper we concentrate on nested sequences of partitions).
The $n$th approximation, $n=0,1,2,\ldots$,
to the quadratic variation of $\omega$
along $\pi$ is defined by
\begin{equation}\label{eq:A1}
  A^{n,\pi}_t
  :=
  \sum_{k=1}^{\infty}
  \left(
    \omega(\pi^n_k\wedge t)
    -
    \omega(\pi^n_{k-1}\wedge t)
  \right)^2,
  \quad
  t\in[0,T],
\end{equation}
where, for each $n\in\bbbn_0$,
$\pi^n_k$, $k=0,1,\ldots$, are the elements of $\pi^n$:
$\pi^n=(\pi^n_0,\pi^n_1,\pi^n_2,\ldots)$
and there is $m=m(n)\in\bbbn_0$ such that $0=\pi^n_0<\pi^n_1<\cdots<\pi^n_m\le T$
and $\pi^n_k=\infty$ for all $k>m$.
By $A^{n,\pi}$ we will mean the function $t\in[0,T]\mapsto A^{n,\pi}_t$ in $D[0,T]$.

In this paper we will be interested in the uniform metric on $D[0,T]$:
$$
  \rho(\omega,\omega')
  :=
  \sup_{t\in[0,T]}
  \left|
    \omega(t) - \omega'(t)
  \right|.
$$
(This is an unusual metric for $D[0,T]$,
but standard for the set $C[0,T]$ of continuous functions on $[0,T]$.)
Notice that $D[0,T]$ is complete in the metric $\rho$;
this will be used in the proof of Theorem~\ref{thm:1D} below.

% In this section we will consider the following metric on $D[0,\infty)$:
% $$
%   \rho(\omega,\omega')
%   :=
%   \sum_{n=1}^{\infty}
%   2^{-n}
%   \sup_{t\in[0,n]}
%   \left(
%     \left|
%       \omega(t) - \omega'(t)
%     \right|
%   \right).
% $$
% (This is an unusual metric for $D[0,\infty)$,
% but standard for $C[0,\infty)$:
% see, e.g., \cite{Karatzas/Shreve:1991}, (2.4.1).)

We say that $\omega$ \emph{has quadratic variation along $\pi$},
where $\pi$ is a nested sequence of partitions,
if the sequence $A^{n,\pi}$ converges in the uniform metric $\rho$.
The limit, when it exists, is denoted $A^{\pi}$
and called the \emph{quadratic variation of $\omega$ along $\pi$}.

A sequence $\pi$ of partitions is \emph{dense}
if $\lim_{n\to\infty}\mesh(\pi^n)=0$,
where $\mesh(\pi^n)$ is the mesh of $\pi^n$.
Quadratic variation along a sequence of partitions is usually
(see, e.g., \cite{Follmer:1981,Norvaisa:2001})
defined only for dense sequences of partitions.
The sequences of partitions considered in this paper are not always dense,
% but they do satisfy
and instead we will use the following property.
% that can be used instead.
\ifFULL\bluebegin
  This property is not comparable to denseness.
\blueend\fi
For $\omega\in D[0,T]$,
we say that a nested sequence $\pi$ of partitions $\pi^n$,
$n=0,1,2,\ldots$,
\emph{exhausts} $\omega$ if:
\begin{itemize}
\item
  each $t\in(0,T]$ such that $\Delta\omega(t)\ne0$
  belongs to $\pi^n$ for some $n$
  (equivalently, from some $n$ on);
\item
  each open interval $(u,v)$ in which $\omega$ is not constant
  contains at least one element of $\pi^n$ for some $n$
  (equivalently, from some $n$ on).
\end{itemize}

\ifFULL\bluebegin
  It is not easy to define a suitable notion of denseness
  for $\omega$ that can be constant over some time intervals.
  These are my previous attempts to define a property of $\omega$-denseness
  that would be weaker than the standard property of denseness:
  \begin{itemize}
  \item
    For $\omega\in D[0,T]$,
    we say that a sequence $\pi$ of partitions $\pi^n$,
    $n=0,1,2,\ldots$,
    is \emph{$\omega$-dense}
    if $\omega$ is constant in any open interval $(u,v)$
    such that $(u,v)$ does not contain any elements
    of any of $\pi^n$, $n\in\bbbn_0$.
    This notion of denseness does not guarantee that $A^{\pi}$
    will be an nondecreasing function:
    consider
    $$
      \omega(t)
      :=
      \begin{cases}
        0 & \text{if $t<1$}\\
        1 & \text{if $1\le t<2$}\\
        0 & \text{if $t\ge2$}
      \end{cases}
    $$
    and a sequence of partitions not containing elements in $[1,2]$.
  \item
    For $\omega\in D[0,T]$,
    we say that a sequence $\pi$ of partitions $\pi^n$,
    $n=0,1,2,\ldots$,
    is \emph{$\omega$-dense}
    if, for all $t\in[0,T]$,
    $$
      \lim_{n\to\infty}
      \bigl(
        \min
        \left(
          \pi^n \cap [t,\infty)
        \right)
        -
        \max
        \left(
          \pi^n \cap (-\infty,t]
        \right)
      \bigr)
      =
      0
    $$
    unless $\omega$ is constant in a neighbourhood of $t$.
    This definition does not cover the partitions considered in this paper
    (even for continuous $\omega$:
    consider the case when they are constant over an interval
    and equal to a value that is not dyadic rational).
  \end{itemize}
\blueend\fi

For $\omega\in D[0,T]$ and an interval $I\subseteq[0,T]$ set
$\w_{\omega}I:=\sup_{s_1,s_2\in I}\lvert\omega(s_2)-\omega(s_1)\rvert$.
% No parentheses around I, as in Billingsley's book.
The \emph{oscillation} of $\omega\in D[0,T]$ over a partition $\pi^n$ is defined as
$$
  \osc_{\pi^n}(\omega)
  :=
  \max_{k\in\bbbn}
  \w_{\omega}
  \left[
    \pi^n_{k-1}\wedge T, \pi^n_{k}\wedge T
  \right)
  =
  \max_{k\in\bbbn}
  \w_{\omega}
  \left(
    \pi^n_{k-1}\wedge T, \pi^n_{k}\wedge T
  \right).
$$

\begin{lemma}\label{lem:osc}
  If a nested sequence of partitions $\pi$ exhausts $\omega\in D[0,T]$,
  \begin{equation}\label{eq:osc}
    \lim_{n\to\infty}
    \osc_{\pi^n}(\omega)
    =
    0.
  \end{equation}
\end{lemma}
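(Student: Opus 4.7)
I would argue by contradiction. If (\ref{eq:osc}) fails, some subsequence satisfies $\osc_{\pi^{n_j}}(\omega)\ge\epsilon$ for a fixed $\epsilon>0$; for each $j$ I pick consecutive partition points $a_j:=\pi^{n_j}_{k_j-1}\wedge T<\pi^{n_j}_{k_j}\wedge T=:b_j$ attaining the maximum, so $\w_\omega[a_j,b_j)\ge\epsilon$. Compactness of $[0,T]$ lets me refine further so that $a_j\to a$ and $b_j\to b$ with $0\le a\le b\le T$. The plan is to rule out both $a<b$ and $a=b$ using the two exhaustion bullets together with the c\`adl\`ag regularity of $\omega$.

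Consider first $a<b$. The crucial subclaim is that $\omega$ must be constant on $[a,b)$. Indeed, if $\omega$ were non-constant on some open subinterval $(u,v)\subseteq(a,b)$, the second exhaustion bullet would yield $c\in(u,v)\cap\pi^N$ for some $N$; by nestedness $c\in\pi^{n_j}$ for every $n_j\ge N$, and for $j$ large the inequalities $a_j<c<b_j$ (which follow from $a_j\to a<c<b\leftarrow b_j$) contradict the consecutiveness of $a_j,b_j$ in $\pi^{n_j}$. Hence $\omega\equiv c_0$ on $(a,b)$, and right-continuity at $a$ extends this to $[a,b)$. For $j$ large, the values of $\omega$ on $[a_j,b_j)$ lie arbitrarily close to the three-point set $\{\omega(a-),c_0,\omega(b)\}$, so $\w_\omega[a_j,b_j)$ is eventually bounded by $|\Delta\omega(a)|+|\Delta\omega(b)|+o(1)$; the standing assumption $\ge\epsilon$ thus forces at least one of $a,b$ to be a true jump, and the first exhaustion bullet places that jump point in $\pi^{n_j}$ from some $j$ on. A short case analysis on the relative positions of $a_j$ and $a$, and of $b_j$ and $b$, then shows that the jump point must actually coincide with $a_j$ or $b_j$ (any other placement would insert a partition element strictly between $a_j$ and $b_j$), and in the surviving configurations $[a_j,b_j)$ either collapses into $[a,b)$ itself (on which $\omega\equiv c_0$, a contradiction) or lies on one side of a point of continuity (forcing $\w_\omega\to0$, also a contradiction).

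The case $a=b$ is simpler. Here $[a_j,b_j)$ shrinks to $\{a\}$, so by the existence of $\omega(a-)$ and right-continuity of $\omega$ at $a$ one has $\w_\omega[a_j,b_j)\to 0$ unless $a_j<a\le b_j$ along a further subsequence; in the latter case $\w_\omega[a_j,b_j)\to|\Delta\omega(a)|$. Consequently $|\Delta\omega(a)|\ge\epsilon$, and the first exhaustion bullet places $a$ in $\pi^{n_j}$ for $j$ large. Then either $a=b_j$ (so $[a_j,b_j)=[a_j,a)$, on which $\w_\omega\to 0$, a contradiction) or $a\in(a_j,b_j)$ strictly, contradicting consecutiveness once more.

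The argument is conceptually straightforward, but the main obstacle is the careful bookkeeping of subsequential behaviour: one must repeatedly pass to subsequences and distinguish whether $a_j,b_j$ approach their limits from the left, from the right, or are eventually equal to them, matching these alternatives against the two exhaustion bullets and the one-sided limits of $\omega$. Once this bookkeeping is set up, each branch is killed by a single invocation of either right-continuity, the existence of the left limit, or the nestedness of the partitions.
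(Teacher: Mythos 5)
Your proof is correct, but it takes a genuinely different route from the paper's. The paper argues directly: given $\epsilon>0$, it invokes the standard c\`adl\`ag covering lemma (\cite{billingsley:1968}, Lemma 14.1) to obtain finitely many points $0=t_0<\cdots<t_r=T$ with $\w_{\omega}[t_{i-1},t_i)<\epsilon$, then takes $n$ so large that $\pi^n$ contains every $t_i$ at which $\omega$ jumps and a point of each interval $(t_{i-1},t_i)$ on which $\omega$ is not constant, and concludes in a few lines that $\w_{\omega}(\pi^n_k\wedge T,\pi^n_{k+1}\wedge T)<2\epsilon$ for every $k$. You instead argue by contradiction, extracting convergent subsequences $a_j\to a$, $b_j\to b$ of offending consecutive partition points and splitting on $a<b$ versus $a=b$ and on the sides from which $a_j$ and $b_j$ approach their limits; in effect you re-derive inline, via compactness and one-sided limits, the finite small-oscillation covering that the paper imports from Billingsley. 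What each approach buys: yours is self-contained (it uses only right-continuity, left limits, nestedness, and the two exhaustion bullets, with no external lemma) at the cost of heavier subsequence bookkeeping; the paper's is shorter and avoids all case analysis, at the cost of quoting a known result. One small misstatement in your $a=b$ case: under $a_j<a\le b_j$ the oscillation need not tend to $\lvert\Delta\omega(a)\rvert$ --- if $b_j=a$ along a subsequence it tends to $0$ --- so the inference $\lvert\Delta\omega(a)\rvert\ge\epsilon$ requires first discarding that subcase; since your closing dichotomy (``either $a=b_j$ \ldots{} or $a\in(a_j,b_j)$ strictly'') does dispose of it, this is an ordering slip rather than a gap. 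The remaining branches, including the boundary conventions with $\pi^n_k\wedge T$ and the fact that no real element of $\pi^{n_j}$ can lie strictly between consecutive points even when $\pi^{n_j}_{k_j}=\infty$, all go through.
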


\ifJOURNAL
  \begin{pf}
\fi
\ifnotJOURNAL
  \begin{proof}
\fi
  Let $\epsilon>0$.
  There exist points $0=t_0<t_1<\cdots<t_r=T$
  such that $\w_{\omega}[t_{i-1},t_i)<\epsilon$ for all $i\in\{1,\ldots,r\}$
  (\cite{Billingsley:1968}, Lemma 14.1).
  Let $n$ be so large that $\pi^n$
  contains all $t_i$, $i\in\{1,\ldots,r\}$, with $\Delta\omega(t_i)\ne0$
  and contains a point in each interval $(t_{i-1},t_i)$, $i\in\{1,\ldots,r\}$,
  such that $\omega$ is not constant in $(t_{i-1},t_i)$.
  For any $k\in\bbbn_0$, the interval $(\pi^n_k\wedge T,\pi^n_{k+1}\wedge T)$
  does not contain any points $t_i$ with $\Delta\omega(t_i)\ne0$
  and does not contain any intervals $(t_{i-1},t_i)$, $i\in\{1,\ldots,r\}$,
  where $\omega$ is not constant.
  Therefore, $\w_{\omega}(\pi^n_k\wedge T,\pi^n_{k+1}\wedge T)<2\epsilon$.
  Since $\epsilon$ can be arbitrarily small,
  this completes the proof of (\ref{eq:osc}).
\ifJOURNAL
  \qed
  \end{pf}
\fi
\ifnotJOURNAL
  \end{proof}
\fi

\begin{lemma}\label{lem:increasing}
  Suppose $\omega\in D[0,T]$
  and $\pi$ is a nested sequence of partitions that exhausts $\omega$.
  If the quadratic variation $A^{\pi}$ of $\omega$ along $\pi$ exists,
  it is a nondecreasing c\`adl\`ag function satisfying $A^{\pi}_0=0$
  with jumps $\Delta A^{\pi}_t=(\Delta\omega(t))^2$ for all $t\in(0,T]$.
  % and $\Delta A^{\pi}_0=0$ [silly for elements of D[0,T]
\end{lemma}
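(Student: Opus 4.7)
The plan is to work at the level of the approximants $A^{n,\pi}$ and transfer each property to the uniform limit $A^{\pi}$. First, from the explicit formula one sees that for any $t\in[0,T]$ with $t\in[\pi^n_{K-1},\pi^n_K)$,
\[
  A^{n,\pi}_t
  =
  \sum_{k=1}^{K-1}
  \left(\omega(\pi^n_k)-\omega(\pi^n_{k-1})\right)^2
  +
  \left(\omega(t)-\omega(\pi^n_{K-1})\right)^2,
\]
so $A^{n,\pi}$ is c\`adl\`ag, with $A^{n,\pi}_0=0$. The c\`adl\`ag property of $A^{\pi}$ and the equality $A^{\pi}_0=0$ are then immediate from the completeness of $D[0,T]$ under $\rho$ stated above.

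To obtain monotonicity, I would fix $t_1<t_2$ and show directly that
\[
  A^{n,\pi}_{t_2}-A^{n,\pi}_{t_1}
  \ge
  -\osc_{\pi^n}(\omega)^2.
\]
A short case split on whether $t_1,t_2$ lie in the same partition interval of $\pi^n$ gives this bound: in either case the only potentially negative contribution is a single term of the form $(\omega(t_1)-\omega(\pi^n_{K_1-1}))^2$, which is bounded by $\osc_{\pi^n}(\omega)^2$ because $t_1$ and $\pi^n_{K_1-1}$ lie in the same half-open partition interval. Passing to the limit via Lemma \ref{lem:osc} yields $A^{\pi}_{t_2}-A^{\pi}_{t_1}\ge 0$.

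For the jump identity $\Delta A^{\pi}_t=(\Delta\omega(t))^2$, the key observation is that uniform convergence passes to left limits: $|A^{\pi}_{t-}-A^{n,\pi}_{t-}|\le\rho(A^{\pi},A^{n,\pi})\to 0$, so $\Delta A^{n,\pi}_t\to\Delta A^{\pi}_t$ for every $t\in(0,T]$. A direct inspection of the formula for $A^{n,\pi}$ shows that, regardless of whether or not $t\in\pi^n$, one has
\[
  \Delta A^{n,\pi}_t
  =
  \left(\omega(t)-\omega(\pi^n_{k(n)-1})\right)^2
  -
  \left(\omega(t-)-\omega(\pi^n_{k(n)-1})\right)^2,
\]
where $\pi^n_{k(n)-1}$ is the largest element of $\pi^n$ strictly less than $t$ (for $t\notin\pi^n$ the formula also holds, as $\omega$ is right-continuous so $A^{n,\pi}$ is continuous at $t$ and both sides vanish). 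If $\omega$ is continuous at $t$, this expression is identically $0$, so $\Delta A^{\pi}_t=0=(\Delta\omega(t))^2$.

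If $\omega$ has a genuine jump at $t$, the exhaustion hypothesis ensures $t\in\pi^n$ for all sufficiently large $n$, and the above displayed identity factors as
\[
  \Delta A^{n,\pi}_t
  =
  \Delta\omega(t)\cdot\bigl(\omega(t)+\omega(t-)-2\omega(\pi^n_{k(n)-1})\bigr).
\]
The main step is then to show $\omega(\pi^n_{k(n)-1})\to\omega(t-)$, which is the one point where I expect some care: since $\pi^n_{k(n)-1}$ and $t$ belong to the common half-open interval $[\pi^n_{k(n)-1},\pi^n_{k(n)})=[\pi^n_{k(n)-1},t)$, Lemma \ref{lem:osc} gives $|\omega(\pi^n_{k(n)-1})-\omega(s)|\le\osc_{\pi^n}(\omega)$ for every $s\in[\pi^n_{k(n)-1},t)$, and letting $s\uparrow t$ yields $|\omega(\pi^n_{k(n)-1})-\omega(t-)|\le\osc_{\pi^n}(\omega)\to 0$. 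Substituting gives $\Delta A^{n,\pi}_t\to\Delta\omega(t)\cdot\Delta\omega(t)=(\Delta\omega(t))^2$, and combining with $\Delta A^{n,\pi}_t\to\Delta A^{\pi}_t$ finishes the proof. The main (mild) obstacle is this last oscillation argument, as the predecessor $\pi^n_{k(n)-1}$ need not itself tend to $t$ — only the values of $\omega$ there must converge to $\omega(t-)$.
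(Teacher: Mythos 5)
Your proof is correct and takes essentially the same route as the paper's: the same one-incomplete-term decomposition of $A^{n,\pi}$, control of that term by $\osc_{\pi^n}(\omega)$ via Lemma~\ref{lem:osc}, and passage of jumps to the uniform limit, with your monotonicity step being just a direct quantitative form ($A^{n,\pi}_{t_2}-A^{n,\pi}_{t_1}\ge-\osc_{\pi^n}(\omega)^2$) of the paper's argument by contradiction. One parenthetical remark is wrong, though harmlessly so: for $t\notin\pi^n$ with $\Delta\omega(t)\ne0$, $A^{n,\pi}$ is \emph{not} continuous at $t$ (right-continuity of $\omega$ yields only right-continuity of $A^{n,\pi}$), so the two sides of your displayed jump formula do not vanish there --- they are nonetheless still equal, and you never rely on the false claim where it fails, since in your continuity branch both sides genuinely vanish and in your jump branch exhaustion puts $t$ into $\pi^n$ (the paper's single computation with $t\in(\pi^n_k,\pi^n_{k+1}]$ shows the case split and the appeal to exhaustion at this point are in fact unnecessary).
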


\ifJOURNAL
  \begin{pf}
\fi
\ifnotJOURNAL
  \begin{proof}
\fi
  The equality $A^{\pi}_0=0$ is obvious,
  and $A^{\pi}$ is c\`adl\`ag as the uniform limit of c\`adl\`ag functions.
  (These statements do not rely on $\pi$ exhausting $\omega$,
  but the other statements of the lemma
  do require a density condition for $\pi$:
  they fail, for example, when $\pi^0=\pi^1=\cdots=(0,\infty,\infty,\ldots)$.)

  Let us now prove that $A^{\pi}$ is nondecreasing.
  If it is not,
  there exist $t_1<t_2\le T$ and $\epsilon\in(0,1)$ such that
  $A^{\pi}_{t_1} > A^{\pi}_{t_2} + 2\epsilon$.
  For sufficiently large $n$, we will have
  \begin{equation}\label{eq:strange}
    A^{n,\pi}_{t_1} > A^{n,\pi}_{t_2} + \epsilon.
  \end{equation}
  Let $n$ be so large that, in addition, $\osc_{\pi^n}(\omega)<\epsilon$
  (cf.\ Lemma~\ref{lem:osc}).
  Let $t_1\in[\pi^n_{k'-1},\pi^n_{k'})$ and $t_2\in[\pi^n_{k''},\pi^n_{k''+1})$;
  notice that $k''\ge k'$.
  Since $A^{n,\pi}_{t_2}\ge A^{n,\pi}_{\pi^n_{k''}}$
  and $A^{n,\pi}_{\pi^n_{k+1}}\ge A^{n,\pi}_{\pi^n_{k}}$
  for $k=k',\ldots,k''-1$,
  (\ref{eq:strange}) implies
  $
    A^{n,\pi}_{t_1} > A^{n,\pi}_{\pi^n_{k'}} + \epsilon
  $.
  This, however, contradicts
  % \begin{equation*}   % LMJ
  \begin{multline*} % not LMJ
    A^{n,\pi}_{t_1}
    -
    A^{n,\pi}_{\pi^n_{k'}}
    =
    \left(
      \omega(t_1)
      -
      \omega(\pi^n_{k'-1})
    \right)^2
    -
    \left(
      \omega(\pi^n_{k'})
      -
      \omega(\pi^n_{k'-1})
    \right)^2
    \\   % not LMJ
    \le
    \left(
      \omega(t_1)
      -
      \omega(\pi^n_{k'-1})
    \right)^2
    <
    \epsilon^2.
  % \end{equation*}   % LMJ
  \end{multline*} % not LMJ

  It remains to prove that $\Delta A^{\pi}_t=(\Delta\omega(t))^2$ for all $t\in(0,T]$.
  % ($\Delta A^{\pi}_0=0$ follows from $\Delta A^{n,\pi}_0=0$ for all $n$)
  Fix $t\in(0,T]$ and $\epsilon>0$.
  Let $n$ be so large that $\osc_{\pi^n}(\omega)<\epsilon$
  (cf.\ Lemma~\ref{lem:osc}).
  Define $k\in\bbbn_0$ by the condition $t\in(\pi^n_{k},\pi^n_{k+1}]$.
  Since
  % \begin{equation*}    % LMJ
  \begin{multline*}  % not LMJ
    \Delta A^{n,\pi}_t
    =
    \left(
      \omega(t) - \omega(\pi^n_{k})
    \right)^2
    -
    \left(
      \omega(t-) - \omega(\pi^n_{k})
    \right)^2
    \\  % not LMJ
    =
    (\Delta\omega(t))^2
    +
    2\Delta\omega(t)
    \left(
      \omega(t-) - \omega(\pi^n_{k})
    \right),
  % \end{equation*}   % LMJ
  \end{multline*} % not LMJ
  we have
  $
    \left|\Delta A^{n,\pi}_t-(\Delta\omega(t))^2\right|
    \le
    2\epsilon\lvert\Delta\omega(t)\rvert
  $.
  As $\Delta A^{n,\pi}_t\to\Delta A^{\pi}_t$
  and $\epsilon$ can be arbitrarily small,
  we obtain $\Delta A^{\pi}_t=(\Delta\omega(t))^2$.
\ifJOURNAL
  \qed
  \end{pf}
\fi
\ifnotJOURNAL
  \end{proof}
\fi

\ifFULL\bluebegin
  Paper~\cite{Vovk/Norvaisa} shows that there are no obvious connections
  between the variation index $\vi$ as used in the previous papers on this topic
  and the property of having quadratic variation along a dense sequence of nested partitions.
\blueend\fi

\section{Typical price paths}
\label{sec:definitions}

We consider a perfect-information game between two players
called Reality (financial market) and Sceptic (speculator).
Reality outputs a c\`adl\`ag function $\omega:[0,T]\to\bbbr$,
interpreted as the price path of a financial security,
and Sceptic tries to profit by trading in $\omega$.
First Sceptic presents his trading strategy and then Reality chooses $\omega$.
In the first two subsections of this section
we will formalize this picture
(often following \cite{\CTV})
by defining the allowed moves for Reality and strategies for Sceptic.

\subsection*{Sample space}

Let $\psi:[0,\infty)\to(0,\infty)$ be a nondecreasing function,
fixed through most of this paper.
% (there are two small exceptions in this section
% and one small exception in Section~\ref{sec:multidim}).
% and will be omitted from our notation.
The set of allowed moves for Reality (our \emph{sample space}) is
\begin{equation}\label{eq:moderate}
  \Omega_{\psi}
  :=
  \left\{
    \omega\in D[0,T]
    \bigm|
    \forall t\in(0,T]:
    \left|
      \Delta\omega(t)
    \right|
    \le
    \psi
    \left(
      \sup_{s\in[0,t)}
      \lvert\omega(s)\rvert
    \right)
  \right\}.
\end{equation}
The function $\psi$ determines the allowed size of the jumps.
It can be arbitrarily large (but needs, however, to be known in advance).
Our conclusions (e.g., in Theorems \ref{thm:1D} and \ref{thm:multidim})
will not depend on $\psi$;
therefore, our results become stronger as $\psi$ becomes larger.

\ifFULL\bluebegin
  If we had only assumed that $\omega$ is right-continuous,
  the assumption about the absolute values of jumps bounded by $1$ would have been,
  for the case of infinite horizon,
  \begin{multline*}
    \Omega_1
    :=
    \Bigl\{
      \omega:[0,\infty)\to\bbbr
      \st
      \forall t\in[0,\infty):
      \lim_{s\downarrow t} \omega(s) = \omega(t)\\
      \And
      \limsup_{s\uparrow t} \omega(s) - 1
      \le
      \omega(t)
      \le
      \liminf_{s\uparrow t} \omega(s) + 1
    \Bigr\}.
  \end{multline*}
  But perhaps this assumption is not worth the bother
  (cf.\ Appendix B in \cite{\CTV}).
\blueend\fi

Two natural examples of restrictions on the jumps are:
\begin{itemize}
\item
  The absolute values $\lvert\Delta\omega(t)\rvert$ of $\omega$'s jumps
  never exceed a known constant $c>0$.
  Such price paths $\omega$ belong to $\Omega_c$.
\item
  The price path $\omega$ is known to be nonnegative,
  and the relative values
  $\Delta\omega(t)/\sup_{s\in[0,t)}\lvert\omega(s)\rvert$
  (with $0/0:=0$)
  of $\omega$'s jumps w.r.\ to their largest value so far
  never exceed a known constant $c>0$.
  % \label{p:case-2}
  Such $\omega$ belong to $\Omega_{\psi}$, where $\psi(u)=(c\vee1)u$.
  (There is no need to explicitly restrict downward jumps when $\omega\ge0$:
  they are restricted automatically by the current value of the security.)
\end{itemize}
The first example is the simplest one mathematically
and will be used in the proof of our main result in Section~\ref{sec:QV-proof}.
The second example is more relevant to many real financial markets,
and its generalized and elaborated version will be discussed at the end of this section.

% We use absolute values for two reasons:
% first, we do not assume price paths to be nonnegative,
% and second, even if we assume price paths nonnegative,
% our methods will be applicable in both cases
% but will lead to more pleasant formulas in the case
% where the absolute values of jumps are restricted.

\subsection*{Trading strategies}

For each $t\in[0,T]$,
$\FFF^{\circ}_t$ is defined to be the smallest $\sigma$-algebra on $\Omega_{\psi}$
that makes all functions
$\omega\mapsto\omega(s)$, $s\in[0,t]$,
measurable;
$\FFF_t$ is defined as the universal completion of $\FFF^{\circ}_t$.
A \emph{process} (more fully, adapted process) $S$ is a family of functions
$S_t:\Omega_{\psi}\to[-\infty,\infty]$, $t\in[0,T]$,
each $S_t$ being $\FFF_t$-measurable.
An \emph{event} is an element of the $\sigma$-algebra $\FFF_T$.
% $\FFF_{\infty}:=\sigma(\cup_{t\in[0,\infty)}\FFF_t)$
% (usually denoted by $\FFF$)
Stopping times $\tau:\Omega_{\psi}\to[0,T]\cup\{\infty\}$
w.r.\ to the filtration $(\FFF_t)$
and the corresponding $\sigma$-algebras $\FFF_{\tau}$
are defined as usual;
$\omega(\tau(\omega))$ and $S_{\tau(\omega)}(\omega)$
will often be simplified to $\omega(\tau)$ and $S_{\tau}(\omega)$,
respectively.

The class of allowed strategies for Sceptic is defined in two steps.
A \emph{simple trading strategy} $G$ consists of:
(a) a nondecreasing infinite sequence of stopping times
$\tau_1\le\tau_2\le\cdots$ such that,
for each $\omega\in\Omega_{\psi}$,
$\tau_n(\omega)<\infty$ for only finitely many $n$;
(b) for each $n=1,2,\ldots$,
a bounded $\FFF_{\tau_{n}}$-measurable function $h_n$.
To such $G$ and an \emph{initial capital} $\alpha\in\bbbr$
corresponds the \emph{simple capital process}
\begin{equation}\label{eq:simple-capital}
  \K^{G,\alpha}_t(\omega)
  :=
  \alpha
  +
  \sum_{n=1}^{\infty}
  h_n(\omega)
  \bigl(
    \omega(\tau_{n+1}\wedge t)-\omega(\tau_n\wedge t)
  \bigr),
  \quad
  t\in[0,T];
\end{equation}
the value $h_n(\omega)$ will be called the \emph{position}
taken at time $\tau_n$,
and $\K^{G,\alpha}_t(\omega)$ will sometimes be referred to
as Sceptic's capital at time $t$.
Notice that the sum of finitely many simple capital processes
is again a simple capital process.

A \emph{nonnegative capital process} is any process $S$
that can be represented in the form
\begin{equation}\label{eq:nonnegative-capital}
  S_t(\omega)
  :=
  \sum_{n=1}^{\infty}
  \K^{G_n,\alpha_n}_t(\omega),
\end{equation}
where the simple capital processes $\K^{G_n,\alpha_n}_t(\omega)$
are required to be nonnegative, for all $t$ and $\omega$,
and the nonnegative series $\sum_{n=1}^{\infty}\alpha_n$
is required to converge in $\bbbr$
(intuitively,
the total capital invested has to be finite).
The sum (\ref{eq:nonnegative-capital}) is always nonnegative,
but we allow it to take value $+\infty$.
Since $\K^{G_n,\alpha_n}_0(\omega)=\alpha_n$ does not depend on $\omega$,
$S_0(\omega)$ also does not depend on $\omega$
and will sometimes be abbreviated to $S_0$.

\subsection*{Upper price}

The \emph{upper price} of a set $E\subseteq\Omega_{\psi}$
is defined as
\begin{equation}\label{eq:upper-probability}
  \UpProb(E)
  :=
  \inf
  \bigl\{
    S_0
    \bigm|
    \forall\omega\in\Omega_{\psi}:
    S_T(\omega)
    \ge
    \III_E(\omega)
  \bigr\},
\end{equation}
where $S$ ranges over the nonnegative capital processes
and $\III_E$ stands for the indicator of $E$.
Notice that $\UpProb(\Omega_{\psi})=1$
(in the terminology of \cite{Shafer/Vovk:2001},
our game protocol is ``coherent''):
indeed, $\UpProb(\Omega_{\psi})<1$ would mean that some nonnegative capital process
increases between time $0$ and $T$ for all $\omega\in\Omega_{\psi}$,
and this is clearly impossible for constant $\omega$.
\ifFULL\bluebegin
  Another useful observation would be that we can replace $S_T(\omega)$
  in (\ref{eq:upper-probability}) by $\sup_{t\in[0,T]}S_t(\omega)$,
  as we can stop as soon as $S_t$ reaches level $1$
  (or a level arbitrarily close to 1).
  This can be made rigorous, at least in the case of null events.
  More radical solution:
  replace the axiom of choice by the countable axiom of choice
  and the axiom of determinacy.
\blueend\fi

\begin{remark}
  The notion of upper price (introduced in \cite{Vovk:1993logic})
  is similar to the notion of upper hedging price
  (see, e.g., \cite{Follmer/Schied:2011}, Section 7.3)
  except that the latter replaces the requirement
  $
    \forall\omega:
    S_T(\omega)
    \ge
    \III_E(\omega)
  $
  in (\ref{eq:upper-probability}) by ``$S_T\ge\III_E$ a.s.'',
  thus requiring a statistical model for prices
  (which is, however, only used via its family of events of probability zero).
\end{remark}

\begin{remark}
  An alternative probability-free definition of upper price
  is given in \cite{Perkowski/Promel:2013}, Definition~1,
  and \cite{Perkowski/Promel:2014}, Definition~3.1.
  See \cite{Perkowski/Promel:2013}, Section~2.1, for a comparison of the two definitions;
  an advantage of our definition for the purpose of this paper
  is that our mathematical results
  are at least as strong (and possibly stronger) when they use our definition
  than when they use the definition of \cite{Perkowski/Promel:2013,Perkowski/Promel:2014}
  (see \cite{Perkowski/Promel:2014}, Lemma~10).
\end{remark}

We say that $E\subseteq\Omega_{\psi}$ is \emph{null} if $\UpProb(E)=0$.
A property of $\omega\in\Omega_{\psi}$ will be said to hold
% \emph{almost surely} (a.s.), or
\emph{for typical $\omega$} if the set of $\omega$ where it fails is null.
Correspondingly,
a set $E\subseteq\Omega_{\psi}$ is \emph{almost certain}
if $\UpProb(\Omega_{\psi}\setminus E)=0$.
% The reader will notice that all sets $E$ that are proved in this paper to be null
% are also \emph{strictly null}:
All null sets $E$ are automatically \emph{strictly null}:
there exists a nonnegative capital process $S$ with $S_0=1$
such that $S_T(\omega)=\infty$ for all $\omega\in E$
(indeed, if $E$ is null, we can sum nonnegative capital processes $S^n$, $n\in\bbbn$,
such that $S^n_0=2^{-n}$ and $S^n_T\ge\III_E$).
Notice that the union of countably (in particular, finitely) many null sets
is also a null set.

\subsection*{Our definitions in view of margin requirements}

% The rest of this section assumes some basic knowledge
% of the operation of financial markets.
In the rest of this section we discuss our definitions
in view of the margin requirements that traders usually have to comply with.
Suppose, for concreteness, that $\omega$ is the price path of a common stock.
We then have $\omega(t)\ge0$ for all $t\in[0,T]$.
Our definition (\ref{eq:simple-capital}) implicitly assumes the following picture.
Sceptic starts from the amount $\alpha$ in his margin account,
and he never adds funds to or withdraws funds from the account.
He is allowed to take both positive and negative positions in the stock
(can go both long and short),
but the capital in his account should always stay nonnegative.
Let us consider a somewhat more realistic picture where Sceptic is required
not only to keep the capital in the margin account nonnegative
but also to satisfy margin requirements.
% see, e.g., \cite{sec:2009margin}

In general, different margin requirements apply to long and short positions.
The rules can be summarized as follows.
In the case of a long position,
the capital in the margin account should not only be nonnegative
but should stay nonnegative in the imaginary event
that the stock price immediately drops by $100c\lng$\%,
where $c\lng$ is the minimum margin requirement for long positions.
In the case of a short position,
the capital in the margin account should not only be nonnegative
but should stay nonnegative in the imaginary event
that the stock price immediately rises by $100c\shrt$\%,
where $c\shrt$ is the minimum margin requirement for short positions.
Different values may be used for $c\lng$ and $c\shrt$
at the time when the position is opened and at later times;
they are called the initial and maintenance margin requirements, respectively.
The current initial margin requirements stipulated by the Federal Reserve Board
\cite{FRB:220.12} are the same for both long and short positions: $c\lng=c\shrt=0.5$.
% (not taking into account the proceeds from the sale
% in the case of shorting).
% This requirement has not changed since 1974.
The maintenance margin requirement never exceeds the initial margin requirement,
and we make Sceptic's task harder by setting the former to the latter.
As $\omega\ge0$, we always assume $c\lng\le1$.

If the margin requirement becomes violated, the trader receives
a request, known as a margin call, to add funds to the account.
If the margin call is ignored, the account can be liquidated.
We will assume that the margin account provides ``non-recourse'' loans
on the part of the broker,
so that the trader is not responsible for any possible shortfall after liquidation.

\begin{remark}
  Although being non-recourse is a common feature of some related kinds of loan,
  such as stock loans,
  % examples: CBA protected loan, Capital Protected Loan
  margin loans themselves are legally recourse loans in the USA.
  However, the assumption that margin loans are non-recourse is not unusual
  (\cite{Fortune:2000}, p.~29) and there is a view that ``margin loans,
  while legally recourse loans, might be in a limbo,
  somewhere between recourse and nonrecourse''
  (\cite{Fortune:2000}, p.~38).
\end{remark}

\ifFULL\bluebegin
  In fact, all margin accounts in the USA seem to be ``full recourse''.
  I have read on the Internet
  that, for most brokers, there are no further sanctions after liquidation.
  I thought this is plausible because in the US people can turn in the keys
  without punishment when the equities in their homes become negative;
  but in fact whereas non-recourse mortgages are common in the US,
  margin accounts are always full-recourse
  (non-recourse stock loans are widely advertised,
  but they are much more restrictive than margin loans;
  e.g., they have a more or less fixed term, between 1 and 3 years,
  and do not cover short sales).
\blueend\fi

The trading strategy developed in this paper (see Theorem~\ref{thm:1D})
starts with one monetary unit,
makes sure that the capital is always nonnegative,
and brings infinite capital if the quadratic variation of $\omega$ does not exist;
it, however, assumes that $\omega\in\Omega_{\psi}$.
(What it means for the quadratic variation of $\omega\in\Omega_{\psi}$ to exist
will be defined in the next section and is not important for the current discussion.)
The strategy, however, has two disadvantages:
\begin{itemize}
\item
  whereas ensuring that its capital is always nonnegative when $\omega\in\Omega_{\psi}$,
  it is not guaranteed to satisfy the margin requirements;
\item
  the strategy can lead to a negative capital when $\omega\notin\Omega_{\psi}$.
\end{itemize}
If we, however, apply our result to the sample space $\Omega_{\psi'}$
in place of $\Omega_{\psi}$,
where
\begin{equation}\label{eq:psi-prime}
  \psi'(u)
  :=
  1
  \vee
  \bigl(
    (1+c\shrt)\psi(u)
    +
    c\shrt u
  \bigr),
\end{equation}
we will obtain a trading strategy satisfying the following stronger guarantees:
it still starts with one monetary unit;
it makes sure that the margin requirements are satisfied if $\omega\in\Omega_{\psi}$;
it brings infinite capital if $\omega\in\Omega_{\psi}$
but the quadratic variation of $\omega$ does not exist;
it never loses more than the one monetary unit
(because the strategy always ignores margin calls).

Let us check that the trading strategy constructed for $\Omega_{\psi'}$
will indeed satisfy the margin requirements for each of the constituent accounts
(corresponding to the addends in (\ref{eq:nonnegative-capital})),
supposing $\omega\in\Omega_{\psi}$.
Since $\psi'\ge1$, at each time when the position is long
the capital in the account will be no less
than the value of the stock in the account at this time
(otherwise, a 100\% downward jump in the price of $\omega$
would have led to a negative capital).
Since $c\lng\le1$, the margin requirement will be satisfied.
It remains to consider a time $t$ when the position is short.
The worst case is when the price jumps up by the largest allowed amount
becoming $\omega(t)+\psi(\sup_{s\in[0,t)}\omega(s))$.
The margin requirements will be satisfied
if the capital resulting from the imaginary event
that the price again jumps up by a factor of $1+c\shrt$
is still nonnegative.
This is guaranteed by our strategy since, by (\ref{eq:psi-prime}),
$$
  (1+c\shrt)
  \left(
    \omega(t)
    +
    \psi
    \left(
      \sup_{s\in[0,t)}\omega(s)
    \right)
  \right)
  \le
  \omega(t)
  +
  \psi'
  \left(
    \sup_{s\in[0,t)}\omega(s)
  \right).
$$

% See
% http://www.investopedia.com/ask/answers/05/shortmarginrequirements.asp
% http://en.wikipedia.org/wiki/Regulation_T

\ifFULL\bluebegin
  The ``circuit breakers'' usually used in US stock markets
  (see, e.g., \cite{sec:2010-98,sec:2011-84})
  aim at restricting price jumps relative to the current values of stocks.
  They, however, do not always achieve their goal:
  if the fundamentals change abruptly,
  they can amplify jumps.
\blueend\fi

\section{Existence of quadratic variation}
\label{sec:QV-proof}

In this section we define a suitable sequence of partitions $\tau^n(\omega)$
for each $\omega\in\Omega_{\psi}$
and show that the quadratic variation of $\omega$ along this sequence
exists for typical~$\omega$.

For each $n\in\bbbn_0$,
let $\bbbd^n:=\{k2^{-n}\st k\in\bbbz\}$
and define a sequence of stopping times $\tau^n_k$, $k=0,1,2,\ldots$,
and a sequence $D^n_k$, $k=0,1,2,\ldots$,
of $\FFF_{\tau^n_k}$-measurable functions inductively by
$\tau^n_{0}:=0$, $D^n_0:=\sup(\bbbd^n\cap(-\infty,\omega(0)])$,
\begin{align}
  \tau^n_k(\omega)
  &:=
  \inf
  \left\{
    t\in[\tau^n_{k-1}(\omega),T]
    \st
    \llbracket\omega(\tau^n_{k-1}),\omega(t)\rrbracket
    \cap
    (\bbbd^n\setminus\{D^n_{k-1}(\omega)\})
    \ne
    \emptyset
  \right\},\notag\\
  D^n_k(\omega)
  &\in
  \argmin_
  {
    D
    \in
    \llbracket
      \omega(\tau^n_{k-1}),\omega(\tau^n_k)
    \rrbracket
    \cap
    \left(
      \bbbd^n\setminus\{D^n_{k-1}(\omega)\}
    \right)
  }
  \left|
    D - \omega(\tau^n_k)
  \right|
  \label{eq:D}
\end{align}
for $k=1,2,\ldots$,
where we use the notation
\begin{equation*}
  \llbracket u,v\rrbracket
  :=
  \begin{cases}
    [u,v] & \text{if $u\le v$}\\
    [v,u] & \text{if $u>v$}
  \end{cases}
\end{equation*}
for the convex closure of the set $\{u,v\}$.
Notice that the $\argmin$ in (\ref{eq:D}) is a one-element set,
and so $D^n_k(\omega)$ is determined uniquely.

\ifFULL\bluebegin
  The definition of $D^n_0$ as $\sup(\bbbd^n\cap(-\infty,\omega(0)])$
  (instead of $\inf(\bbbd^n\cap[\omega(0),\infty))$)
  is the only obvious arbitrary element of the definitions above
  (assuming we want a dyadic Lebesgue partition).
\blueend\fi

We will check that $\tau^n_k$ are indeed stopping times
and that $D^n_k$ are $\FFF_{\tau^n_k}$-measurable
in Lemma~\ref{lem:tau} below.
If $\tau^n_k(\omega)=\infty$
(arising from our convention $\inf\emptyset:=\infty$),
we set, e.g., $D^n_k(\omega):=0$.
The intuition behind $D^n_k$ is that it is the current element of $\bbbd^n$
at time $\tau^n_k$;
when $\omega$ is continuous, $D^n_k=\omega(\tau^n_k)$
(this is the case considered in \cite{\CTIV}),
but in general we only have $\left|D^n_k-\omega(\tau^n_k)\right|<2^{-n}$,
assuming $\tau^n_k<\infty$.
Let $\tau^n(\omega)$ be the partition
$0=\tau^n_0(\omega)\le\tau^n_1(\omega)\le\tau^n_2(\omega)\le\cdots$
(in fact, we will have $\tau^n_k(\omega)<\tau^n_{k+1}(\omega)$
unless $\tau^n_k(\omega)=\infty$),
and let $\tau(\omega)$ be the sequence of the partitions $\tau^n(\omega)$.

\begin{theorem}\label{thm:1D}
  Let $\psi:[0,\infty)\to(0,\infty)$ be a nondecreasing function.
  Typical $\omega\in\Omega_{\psi}$
  have quadratic variation along $\tau(\omega)$.
\end{theorem}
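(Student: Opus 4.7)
\textbf{Proof plan for Theorem \ref{thm:1D}.}
The first thing I would do is localize. Because null sets are closed under countable unions, it suffices to fix $N\in\bbbn$ and prove the claim for the sample space
$\Omega_\psi^N := \{\omega\in\Omega_\psi : \sup_{t\in[0,T]}|\omega(t)|\le N\}$; on this space the jumps are bounded by a constant $c:=\psi(N)$. Standard stopping-time truncation lets one pass from the localized strategies to genuine nonnegative capital processes on $\Omega_\psi$. From this point on I may assume $\omega$ and its jumps are bounded.

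Next I would record a purely deterministic fact: for every $\omega\in\Omega_\psi$, the sequence $\tau(\omega)$ is nested (by construction) and exhausts $\omega$ in the sense of Section \ref{sec:QV-def}. Indeed, a jump of size $>2^{-n}$ is automatically recorded by $\tau^m$ for all large $m$, and any interval of non-constancy eventually contains a dyadic value that $\omega$ crosses. With Lemma \ref{lem:osc} this already yields $\osc_{\tau^n}(\omega)\to 0$. Consequently once the limit $A^{\tau}$ exists (in the uniform metric), Lemma \ref{lem:increasing} delivers its desired structural properties, and the real task is to prove uniform convergence.

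The core of the proof is to show that $(A^{n,\tau})$ is Cauchy in the metric $\rho$ for typical $\omega\in\Omega_\psi^N$. The idea, following the ``high-frequency limit order'' strategies of \cite{takeuchi/etal:2009} and the argument in \cite{\CTIV} for the continuous case, is the pathwise It\^o-type identity
\begin{equation*}
  A^{n+1,\tau}_t - A^{n,\tau}_t
  \;=\;
  2\sum_{k}\bigl(\omega(\tau^{n+1}_{k+1}\wedge t)-\omega(\tau^{n+1}_k\wedge t)\bigr)\bigl(\omega(\tau^{n+1}_k\wedge t)-\omega(\tau^n_{\kappa(k)}\wedge t)\bigr)
  \;+\;R^n_t ,
\end{equation*}
where $\kappa(k)$ is the largest index with $\tau^n_{\kappa(k)}\le\tau^{n+1}_k$ and $R^n_t$ is a remainder collecting jump cross-terms; by the exhaustion property and Lemma \ref{lem:increasing} the jumps of $A^n$ and $A^{n+1}$ at each $t$ coincide and equal $(\Delta\omega(t))^2$, so $R^n$ is small and controllable using $|\Delta\omega|\le c$. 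The main sum is precisely the capital gain of the simple trading strategy that, on each interval $[\tau^n_k,\tau^n_{k+1})$, holds the position $h^n(t):=2\bigl(\omega(\tau^{n+1}_{k'})-\omega(\tau^n_k)\bigr)$ where $k'$ is the latest $\tau^{n+1}$-time $\le t$. This position is bounded by $4\cdot 2^{-n}$ on $\Omega_\psi^N$ (since between successive $\tau^n$-times $\omega$ stays in a band of width $\le 2^{-n}+c$), so after dividing by a suitable constant the strategy admits a non-negative running minimum that can be used to build a bona fide nonnegative capital process via the reflection trick ($S_t := K^{G^n,1}_t \vee (\text{supremum of its past losses}+1)$ handled by standard summation with weights).

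Finally I would aggregate: weight strategies across $n$. Given $\varepsilon>0$, a strategy that, starting from capital $2^{-n}$, pays off $\ge 1$ on $\{\rho(A^{n,\tau},A^{n+1,\tau})>\varepsilon\}$ can be summed over $n$ and over a countable collection of $\varepsilon\downarrow 0$ and $N\uparrow\infty$; by the countable union property of null sets, on a full-price set the sequence $A^{n,\tau}$ is Cauchy in $\rho$, hence convergent since $D[0,T]$ is complete under $\rho$. The \emph{main obstacle} is controlling the jump remainder $R^n$ uniformly in $t$ and simultaneously ensuring the trading strategy $G^n$ gives rise to a \emph{nonnegative} capital process on all of $\Omega_\psi^N$ (not merely on paths where $A^n$ behaves well); this is where the boundedness of jumps by $c$ on the localized space, together with the bound on the position $h^n$, becomes essential, and it is also the step for which the jump-tracking properties of $\tau^n$ (namely that $\tau^n$ records any jump $>2^{-n}$) do the real work.
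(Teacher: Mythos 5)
Your skeleton matches the paper's proof in outline: localization to bounded paths and constant $c$ via stopping at the exit time $\sigma_L$ from $[-2^L,2^L]$, the observation that $\tau(\omega)$ is nested and exhausts $\omega$ (so that Lemmas~\ref{lem:osc} and~\ref{lem:increasing} apply), the recognition that the difference of consecutive approximations $S^n:=A^n-A^{n-1}$ is the capital of a simple strategy with position $-2\left(\omega(\tau^n_k)-\omega(\tau)\right)$, bounded by $O(2^{-n})$, and the final aggregation by weighted countable sums. (A small algebraic point: for \emph{nested} partitions the identity is exact --- your remainder $R^n$ is identically zero, as the computation~(\ref{eq:difference}) shows; no separate control of jump cross-terms is needed at this stage.) But there is a genuine gap at the quantitative core. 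From $\lvert h^n\rvert\le 4\cdot2^{-n}$ you cannot obtain a useful lower bound on $S^n$ by ``dividing by a suitable constant'': the number of trades along $\tau^n$ is of order $n^2 2^{2n}$, so a priori $S^n$ can wander by as much as $n^2 2^{n}$, which explodes with $n$. Worse, your ``reflection trick'' $S_t:=\K^{G^n,1}_t\vee(\text{past losses}+1)$ does not define a capital process at all: pointwise maxima are not among the operations available to a trader, and nonnegative capital processes in this framework are exactly countable sums~(\ref{eq:nonnegative-capital}) of nonnegative \emph{simple} capital processes. So the strategy that ``starting from capital $2^{-n}$ pays off $\ge1$ on $\{\rho(A^{n,\tau},A^{n+1,\tau})>\varepsilon\}$'' is never actually constructed in your proposal, and constructing it is the entire difficulty.

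What the paper does instead is a second-order bet (Kolmogorov's trick): $U^n_t:=2^{-2n+8}c^2+n^42^{-2n}+(S^n_t)^2-\sum_k\bigl(S^n_{\tau^n_k\wedge t}-S^n_{\tau^n_{k-1}\wedge t}\bigr)^2$ is again a simple capital process, and stopping it at the time $\sigma^n$ when the compensator sum exceeds $n^42^{-2n}$ makes it nonnegative. This shifts the whole burden to showing that for typical $\omega$ the compensator never reaches $n^42^{-2n}$ before $\sigma_L$, i.e., that $\sum_k\bigl(\omega(\tau^n_{k+1}\wedge\sigma_L)-\omega(\tau^n_k\wedge\sigma_L)\bigr)^2\le 2^{-6}n^4$ eventually. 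That bound requires machinery absent from your proposal: the probability-free Doob upcrossing inequality (Lemma~\ref{lem:Doob}), its averaged version (Lemma~\ref{lem:Stricker}), and Corollary~\ref{cor:Stricker} giving $\MM_{\sigma_L-}(\omega,2^{-n})\vee\DD_{\sigma_L-}(\omega,2^{-n})\le n^22^{2n}$ from some $n$ on for typical $\omega$, combined with a five-case analysis of the increments according to whether $\tau^n_{k+1}$ survives into $\tau^{n-1}$ and the size of the jump there (this is where $\lvert\Delta\omega\rvert\le c$ is really used --- to handle the finitely many large jumps, not to make positions small). Finally a Chebyshev-type step --- $U^n_{\sigma^n\wedge t}\ge(S^n_t)^2$ eventually, while a weighted sum of the stopped $U^n$ forces $U^n<n^62^{-2n}$ eventually --- yields $\lvert S^n_t\rvert<n^32^{-n}$, hence uniform Cauchyness. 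You correctly identified nonnegativity of the auxiliary strategy as the main obstacle, but the tools you propose (small positions plus reflection) cannot overcome it; the upcrossing-counting apparatus is the missing idea.
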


The theorem says that for typical $\omega\in\Omega_{\psi}$
the sequence of functions
\begin{equation}\label{eq:A2}
  A^n_t(\omega)
  :=
  \sum_{k=1}^{\infty}
  \left(
    \omega(\tau^n_k\wedge t)
    -
    \omega(\tau^n_{k-1}\wedge t)
  \right)^2,
  \quad
  n=0,1,2,\ldots,
\end{equation}
converges in the uniform metric to a function $A(\omega)\in D[0,T]$.
(In this section, we omit mentioning the sequence of partitions
in our notation for quadratic variation.)
Since the sequence of partitions $\tau^n(\omega)$ is always nested and exhausts $\omega$
(see Lemma~\ref{lem:tau} below),
by Lemma~\ref{lem:increasing},
the limit $A(\omega)$ will be a nondecreasing function
satisfying $A_0(\omega)=0$
and $\Delta A_t(\omega)=(\Delta\omega(t))^2$ for all $t\in(0,T]$.

The following lemma lists several useful properties of $\tau^n_k$ and $D^n_k$.
\begin{lemma}\label{lem:tau}
  The functions $\tau^n_k:\Omega_{\psi}\to[0,\infty]$
  and $D^n_k:\Omega_{\psi}\to\bbbr$
  satisfy the following properties:
  \begin{enumerate}
  \item\label{it:attained}
    the infimum in the definition of $\tau^n_{k}(\omega)$ is attained,
    provided $\tau^n_k(\omega)<\infty$;
  \item\label{it:increasing}
    $\tau^n_{k}>\tau^n_{k-1}$ unless $\tau^n_{k-1}=\infty$;
  \item\label{it:valid}
    for each $n$,
    $\tau^n_k<\infty$ for only finitely many $k$;
  \item\label{it:nested}
    the sequences $\tau^n$ are nested:
    for all $n,k\in\bbbn_0$ there is $k'$ such that $\tau^{n+1}_{k'}=\tau^{n}_k$;
  \item\label{it:exhausts}
    for each $\omega\in\Omega_{\psi}$,
    the sequence of partitions $\tau^n(\omega)$ exhausts $\omega$;
  \item\label{it:bona-fide}
    $\tau^n_k$ are bona fide stopping times
    and $D^n_k$ are $\FFF_{\tau^n_k}$-measurable.
  \end{enumerate}
\end{lemma}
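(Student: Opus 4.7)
Proof plan. The six claims follow from a sequence of short local arguments driven by one common observation: as long as $\llbracket\omega(\tau^n_{k-1}),\omega(s)\rrbracket$ contains no element of $\bbbd^n\setminus\{D^n_{k-1}\}$, the value $\omega(s)$ is trapped in an open interval of length $2^{-n+1}$ around $D^n_{k-1}$. I would treat them in the order (f), (b), (a), (c), (e), (d), so that each argument may freely quote its predecessors; claim (d) is, I expect, the main obstacle.

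For (f) I induct on $k$: with $\tau^n_{k-1}$ a stopping time and $D^n_{k-1}$ an $\FFF_{\tau^n_{k-1}}$-measurable function, $\tau^n_k$ becomes the first entry after $\tau^n_{k-1}$ of the c\`adl\`ag adapted process $\omega$ into a closed set depending measurably on $\omega(\tau^n_{k-1})$ and $D^n_{k-1}$; such first entry times are stopping times on the universally completed filtration, and $D^n_k$ is then a Borel function of $\omega(\tau^n_{k-1})$, $\omega(\tau^n_k)$, and $D^n_{k-1}$.

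For (b), right-continuity together with discreteness of $\bbbd^n$ ensures that for $t>\tau^n_{k-1}$ sufficiently small, $\omega(t)$ is close enough to $\omega(\tau^n_{k-1})$ that $\llbracket\omega(\tau^n_{k-1}),\omega(t)\rrbracket$ picks up no new grid point; a little care is needed in the boundary case where $\omega(\tau^n_{k-1})$ itself happens to be a grid point, handled by maintaining inductively the invariant $|D^n_{k}-\omega(\tau^n_{k})|<2^{-n}$. For (a), given $\tau^n_k<\infty$ pick $t_m\downarrow\tau^n_k$ with witnesses $D_m\in\llbracket\omega(\tau^n_{k-1}),\omega(t_m)\rrbracket\cap(\bbbd^n\setminus\{D^n_{k-1}\})$; right-continuity bounds the $\omega(t_m)$ and hence the $D_m$, so by discreteness $D_m\equiv D^{\ast}$ along a subsequence, and closedness of $\llbracket\cdot,\cdot\rrbracket$ places $D^{\ast}\in\llbracket\omega(\tau^n_{k-1}),\omega(\tau^n_k)\rrbracket$. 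For (c), if all $\tau^n_k$ were finite then by (b) they would strictly increase to some $t^{\ast}\in(0,T]$, giving $\omega(\tau^n_k)\to\omega(t^{\ast}-)$; the squeeze $D^n_k\in\llbracket\omega(\tau^n_{k-1}),\omega(\tau^n_k)\rrbracket$ forces $D^n_k\to\omega(t^{\ast}-)$, contradicting $|D^n_k-D^n_{k-1}|\ge 2^{-n}$. For (e), the trapping observation implies $\omega$'s oscillation on any cell $(\tau^n_k,\tau^n_{k+1})$ is strictly less than $2^{-n+1}$: thus a jump of size exceeding $2^{-n+1}$ forces $t\in\{\tau^n_k\}$, and an open interval of non-constancy $(u,v)$ whose oscillation exceeds $2^{-n+1}$ cannot be contained in a single cell; letting $n$ grow delivers both exhaustion properties.

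Claim (d) is the delicate one. I would induct on $k$: assuming $\tau^n_k=\tau^{n+1}_{k'}$, enumerate the level-$(n+1)$ stops lying in $(\tau^n_k,\tau^n_{k+1}]$ as $\tau^{n+1}_{k'+1}<\cdots<\tau^{n+1}_{k'+r}$, and show $\tau^{n+1}_{k'+r}=\tau^n_{k+1}$. At every intermediate $\tau^{n+1}_j$, $\omega$ must have crossed only an element of $\bbbd^{n+1}\setminus\bbbd^n$ (otherwise $\tau^n_{k+1}$ would have fired strictly earlier), which keeps the level-$(n+1)$ cell aligned with the level-$n$ cell, so that when $\tau^n_{k+1}$ eventually occurs the triggering $\bbbd^n$-point is also a legitimate trigger at level $n+1$. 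The delicate bookkeeping---handling the fact that $D^n_k$ and $D^{n+1}_{k'}$ can differ when $\omega(\tau^n_k)$ is not itself on the coarse grid, and then verifying that the triggering $\bbbd^n$-point is not equal to the current $D^{n+1}$-label---is where the real obstacle lies.
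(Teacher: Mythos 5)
Your parts (a), (b), (c), (e), (f) are sound, and in places you do more than the paper records: the paper dispatches (a) and (b) with a single appeal to right-continuity, proves (c) by counting upcrossings and downcrossings of the intervals $(i2^{-n},(i+1)2^{-n})$ and citing the fact that a c\`adl\`ag function crosses a non-empty interval only finitely often, declares (d) and (e) obvious, and spends essentially all of its effort on (f), where it rewrites $\{\tau^n_k\le t\}$ as a countable union, over rational $u$ and $D\in\bbbd^n$, of projections of $\BBB_t\times\FFF^{\circ}_t$-measurable sets and then invokes the analytic-set and universal-completion theorems of Dellacherie and Meyer --- exactly the machinery that your citation of the d\'ebut theorem for universally completed filtrations packages, so (f) is the same argument at a higher level of abstraction. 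Your (c) is a genuinely different and more self-contained route than the paper's crossing count: strict increase of the $\tau^n_k$ to a limit $t^*$, convergence of both endpoints of $\llbracket\omega(\tau^n_{k-1}),\omega(\tau^n_k)\rrbracket$ to $\omega(t^*-)$, and the squeeze forcing the grid sequence $D^n_k$ to converge, contradicting $\lvert D^n_k-D^n_{k-1}\rvert\ge2^{-n}$; this is correct and avoids the external citation. Two small caveats: you prove (f) before (a), yet obtaining $\{\tau^n_k\le t\}\in\FFF_t$ (rather than only $\{\tau^n_k<t\}$) tacitly uses the attainment of the infimum, so (a) should come first; and your invariant $\lvert D^n_k-\omega(\tau^n_k)\rvert<2^{-n}$ requires reading the initialization $D^n_0=\lfloor\omega(0)\rfloor$ as the floor in the grid $\bbbd^n$ (plainly the intent, and the reading under which the paper's own parenthetical remark holds), since with the literal integer floor it can fail for $n\ge1$.

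The genuine gap is in (d), and you flag it yourself. First, your intermediate claim that at every fine stop inside $(\tau^n_k,\tau^n_{k+1})$ the path ``must have crossed only an element of $\bbbd^{n+1}\setminus\bbbd^n$'' is false: a level-$(n{+}1)$ stop can be triggered by the coarse point $D^n_k$ itself, which does not fire the level-$n$ clock precisely because it is excluded there (take $n=0$, $D^0_k=0$, $\omega(\tau^0_k)=0.3$, and let $\omega$ descend through $0$: the fine clock stops at the crossing of $0\in\bbbd^0$ while the coarse clock does not). Second, the verification you leave as ``the real obstacle'' --- that the coarse trigger point $g\in\bbbd^n\setminus\{D^n_k\}$ of $\tau^n_{k+1}$ differs from the current fine label $D^{n+1}_j$ --- is the entire content of (d), and it follows from the trapping you already set up: by minimality of $\tau^n_{k+1}$, every value $\omega(s)$ with $s\in[\tau^n_k,\tau^n_{k+1})$ lies in the open interval $I$ of reals reachable from $\omega(\tau^n_k)$ without meeting $\bbbd^n\setminus\{D^n_k\}$; any fine label assigned at a stop strictly inside the coarse cell is sandwiched between two such values and hence lies in $I$, whereas $g\notin I$ and $g$ lies between $\omega(\tau^{n+1}_j)\in I$ and $\omega(\tau^n_{k+1})$; so $g$ is an admissible fine trigger at time $\tau^n_{k+1}$, and since no fine stop occurs in $(\tau^{n+1}_j,\tau^n_{k+1})$, the fine clock stops exactly at $\tau^n_{k+1}$. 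The one residual case --- the last fine stop coincides with $\tau^n_k$, so its label was chosen from an interval straddling the previous coarse cell and need not lie in $I$ --- is closed by the distance invariants: $\lvert D^{n+1}_j-\omega(\tau^n_k)\rvert<2^{-n-1}$ together with $\lvert D^n_k-\omega(\tau^n_k)\rvert<2^{-n}$ shows that if $D^{n+1}_j$ were a coarse point distinct from $D^n_k$, the coarse $\argmin$ at step $k$ would have selected it (or, if it equals $D^n_{k-1}$, two distinct coarse points would be forced within less than their spacing), a contradiction. So your plan for (d) is repairable along the lines you indicate, but as written it asserts one false step and omits the decisive one; for comparison, the paper offers no argument for (d) at all beyond calling it obvious.
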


\ifJOURNAL
  \begin{pf}
\fi
\ifnotJOURNAL
  \begin{proof}
\fi
  Parts~\ref{it:attained} and~\ref{it:increasing}
  follow from the right-continuity of $\omega$.

  Part~\ref{it:valid} follows from the observation
  that the total number of upcrossings and downcrossings
  of the intervals of the form $(i2^{-n},(i+1)2^{-n})$, $i\in\bbbz$, by $\omega$
  over the time interval $[0,\tau^n_k]$ is at least $k-1$;
  remember that functions in $D[0,T]$ are bounded
  and can cross any non-empty interval only finitely often
  (\cite{Dellacherie/Meyer:1978}, Theorem 4.22).

  Part~\ref{it:nested} can be easily proved by induction in $k$
  if it is strengthened as follows:
  for each $n\in\bbbn_0$ there is a (unique) sequence $0=i_0\le i_1\le\cdots$
  such that $\tau^{n+1}_{i_k}=\tau^n_k$ and $\left|D^{n+1}_{i_k}-D^{n}_{k}\right|\le2^{-n-1}$.

  Part~\ref{it:exhausts} is obvious.

  It remains to prove part~\ref{it:bona-fide}.
  Fix $n\in\bbbn_0$; we will use induction in $k$
  ($\tau^n_0=0$ is obviously a stopping time).
  Fix $k\in\bbbn$ and suppose that $\tau^n_{k-1}$ is a stopping time
  and $D^n_{k-1}$ is $\FFF_{\tau^n_{k-1}}$-measurable.
  Let $t\in(0,T]$.
  The condition $\tau^n_k(\omega)\le t$ on $\omega$ can be rewritten as:
  \begin{align}
    \exists u\in(0,t)\cap\bbbq
    \;
    \exists D\in\bbbd^n:
    &\enspace\tau^n_{k-1}\le u \And D^n_{k-1}=D\And{}\\
    % \notag\\ % with "\notag", it looks as if the tag (\ref{eq:event-1}) covers both lines
    &\enspace\exists s\in(u,t]:
    \llbracket\omega(u),\omega(s)\rrbracket
    \cap
    (\bbbd^n\setminus\{D\})
    \ne
    \emptyset
    \label{eq:event-1}
  \end{align}
  (this uses the right-continuity of $\omega$).
  To complete the proof that $\tau^n_k$ is a stopping time,
  it suffices to prove that for any rational number $u\in(0,t)$
  and any $D\in\bbbd^n$,
  the event (\ref{eq:event-1}) belongs to $\FFF_t$.

  It is clear that (\ref{eq:event-1}) can be represented as the union
  of events of the form
  % \begin{equation*}   % LMJ
  \begin{multline*} % not LMJ
    \left\{
      \exists s\in(u,t]:
      \omega(u)\le D' \And \omega(s)\ge D''
    \right\}
    \\   % not LMJ
    =
    \left\{
      \omega(u)\le D'
    \right\}
    \cap
    \left\{
      \exists s\in(u,t]:
      \omega(s)\ge D''
    \right\}
  % \end{equation*}    % LMJ
  \end{multline*}  % not LMJ
  (perhaps with ``$\le$'' and ``$\ge$'' interchanged)
  for some $D',D''\in\bbbd^n$.
  Therefore, it suffices to prove that the event
  \begin{equation}\label{eq:event-2}
    \exists s\in(u,t]:
    \omega(s)\ge D''
  \end{equation}
  is in $\FFF_t$
  (the case with ``$\le$'' in place of ``$\ge$'' is treated analogously).

  The event~(\ref{eq:event-2}) is the projection onto $\Omega_{\psi}$
  of the set $A:=\{(s,\omega)\in(u,t]\times\Omega_{\psi}\st\omega(s)\ge D''\}$.
  % this is obvious and similar to \cite{Dellacherie/Meyer:1978}, IV.51(c)
  % p. 121/192 in the file; projection theorem: p. 70/192
  Since c\`adl\`ag processes are progressively measurable,
  $A\in\BBB_t\times\FFF^{\circ}_t$,
  where $\BBB_t$ is the Borel $\sigma$-algebra on $[0,t]$
  and $\BBB_t\times\FFF^{\circ}_t$ is the product $\sigma$-algebra.
  This implies that the projection (\ref{eq:event-2}) is an $\FFF^{\circ}_t$-analytic set
  (according to \cite{Dellacherie/Meyer:1978}, Theorem III.13(3)),
  which in turn implies that this set is in the universal completion $\FFF_t$ of $\FFF^{\circ}_t$
  (according to \cite{Dellacherie/Meyer:1978}, Theorem III.33).

  % References to DM1:
  % definition of analytic sets: p. 47/192
  % analyticity of projections: p. 49/192 (Theorem 13 and the remark before it)
  % universal measurability of analytic sets: p. 64/192
  % projection theorem: p. 70/192
  % Remark IV.51(c) on right-closed sets and projections: p. 121/192

  We can see that $\tau^n_k$ is a stopping time.
  By induction in $k$, each $D^n_k$ is $\tau^n_k$-measurable:
  indeed, for $D\in\bbbd^n$,
  $D^n_k=D$ is equivalent to $\omega(\tau^n_k)\in[D,D+2^{-n})$ when $D>D^n_{k-1}$,
  is equivalent to $\omega(\tau^n_k)\in(D-2^{-n},D]$ when $D<D^n_{k-1}$,
  and is impossible when $D=D^n_{k-1}$.
\ifJOURNAL
  \qed
  \end{pf}
\fi
\ifnotJOURNAL
  \end{proof}
\fi

\subsection*{Auxiliary results}

The rest of this section is devoted to the proof of Theorem~\ref{thm:1D}.
First we notice that it suffices to prove Theorem~\ref{thm:1D}
only in the case where $\psi$ is constant, $\psi=c$ for $c>0$.
Indeed, suppose Theorem~\ref{thm:1D} holds for all $\psi=c$
and let us prove it for general nondecreasing $\psi$.
Since functions in $D[0,T]$ are bounded
and unions of countably many null sets are null,
it suffices to prove, for a fixed $L\in\bbbn$,
that typical $\omega\in\Omega_{\psi}$
satisfying $\sup_{t\in[0,T]}\lvert\omega(t)\rvert\le2^L$
have quadratic variation along $\tau(\omega)$.
The latter can be achieved by: (a) running a trading strategy
risking at most one monetary unit and bringing infinite capital
when $\omega\in\Omega_c$ for $c:=\psi(2^L)$
and $\omega$ does not have quadratic variation along $\tau(\omega)$,
and (b) stopping trading at the time
\begin{equation}\label{eq:sigma-L}
  \sigma_L(\omega)
  :=
  \inf
  \bigl\{
    t\in[0,T]\st\lvert\omega(t)\rvert\ge2^L
  \bigr\}
\end{equation}
(this is a stopping time:
see the proof of Lemma~\ref{lem:tau}\ref{it:bona-fide}).

Next we prove several auxiliary lemmas,
assuming $\psi=c$ for $c\ge1$
(there is no loss of generality in the further assumption $c\ge1$).

\begin{lemma}
  For each $n\in\bbbn$,
  the process $S^n_t:=A^n_t-A^{n-1}_t$ is a simple capital process.
\end{lemma}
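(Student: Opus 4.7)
The plan is to exhibit an explicit simple trading strategy whose capital process, starting from zero initial capital, coincides with $S^n_t = A^n_t - A^{n-1}_t$. The key fact behind the construction is the elementary identity
\[
  (b_m - b_0)^2
  =
  \sum_{k=1}^m (b_k - b_{k-1})^2
  +
  2 \sum_{k=1}^m (b_{k-1} - b_0)(b_k - b_{k-1}),
\]
applied within each $\tau^{n-1}$-interval to the finer subdivision given by $\tau^n$ (which subdivides $\tau^{n-1}$ by Lemma~\ref{lem:tau}\ref{it:nested}).

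For each $k\in\bbbn_0$ let $\phi_k(\omega)$ be the largest element of the partition $\tau^{n-1}(\omega)$ that does not exceed $\tau^n_k(\omega)$, and set
\[
  h_k(\omega)
  :=
  -2\bigl(\omega(\tau^n_k)-\omega(\phi_k)\bigr)
\]
(with $h_k:=0$ when $\tau^n_k=\infty$). The simple trading strategy I would use is the one that takes position $h_k$ at time $\tau^n_k$ and holds it until $\tau^n_{k+1}$, with zero initial capital. Verifying that this is a legitimate simple strategy requires two checks:
\begin{itemize}
\item
  \emph{Measurability.} The time $\phi_k$ is a stopping time bounded by $\tau^n_k$, so $\omega(\phi_k)$ and $\omega(\tau^n_k)$ are $\FFF_{\tau^n_k}$-measurable, hence so is $h_k$.
\item
  \emph{Boundedness.} From the definition of $\tau^{n-1}$ one checks that on every half-open interval $[\tau^{n-1}_{j-1},\tau^{n-1}_j)$ the value $\omega(t)$ lies in the open $2^{-(n-1)}$-ball around $D^{n-1}_{j-1}$; since both $\tau^n_k$ and $\phi_k$ lie in such an interval (by nestedness) we obtain $|h_k|<2^{3-n}$.
\end{itemize}

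The main step is to prove the telescoping identity
\[
  A^n_t - A^{n-1}_t
  =
  \sum_{k=0}^\infty
  h_k(\omega)\bigl(\omega(\tau^n_{k+1}\wedge t)-\omega(\tau^n_k\wedge t)\bigr)
\]
for every $t\in[0,T]$. I would argue this interval by interval. Fix $t$ and let $j$ be the index with $t\in[\tau^{n-1}_{j-1},\tau^{n-1}_{j})$, and let $k^*$ satisfy $t\in[\tau^n_{k^*-1},\tau^n_{k^*})$. Within each completed $\tau^{n-1}$-interval $[\tau^{n-1}_{j'-1},\tau^{n-1}_{j'}]$ ($j'<j$), writing $a_k:=\omega(\tau^n_k)-\omega(\tau^n_{k-1})$ and applying the boxed identity to the subdivision by the $\tau^n$-points inside shows that the contribution of that interval to $A^{n-1}_t-A^n_t$ equals $2\sum_k(\omega(\tau^n_{k-1})-\omega(\tau^{n-1}_{j'-1}))a_k$; since $\tau^{n-1}_{j'-1}=\phi_{k-1}$ for each such $k$, this is precisely $-\sum h_{k-1}a_k$. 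An analogous calculation handles the partial interval $[\tau^{n-1}_{j-1},t]$, ending with the incomplete term $h_{k^*-1}(\omega(t)-\omega(\tau^n_{k^*-1}))$. Summing over $j'\le j$ yields the displayed identity.

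The only nontrivial step is the telescoping identity, but once the notation is set up it reduces to a mechanical application of $(b_m-b_0)^2=\sum(b_k-b_{k-1})^2+2\sum(b_{k-1}-b_0)(b_k-b_{k-1})$. Combined with the admissibility checks, this exhibits $S^n_t$ as $\K^{G,0}_t$ for the simple trading strategy $G=\bigl((\tau^n_k)_{k\ge0},(h_k)_{k\ge0}\bigr)$, proving the lemma.
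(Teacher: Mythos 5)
Your proof is correct and takes essentially the same approach as the paper: you construct the identical strategy (position $h_k=-2\left(\omega(\tau^n_k)-\omega(\phi_k)\right)$ taken at time $\tau^n_k$, where $\phi_k$ is the paper's $\tau=\max\{\tau^{n-1}_{k'}\st\tau^{n-1}_{k'}\le\tau^n_k\}$), the only difference being that the paper verifies the capital identity via the local increment computation $S^n_t-S^n_{\tau^n_k}=-2\left(\omega(\tau^n_k)-\omega(\tau)\right)\left(\omega(t)-\omega(\tau^n_k)\right)$ on each interval $[\tau^n_k,\tau^n_{k+1}]\cap[0,T]$, while you sum an equivalent telescoping identity over the coarse intervals. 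Your explicit boundedness and measurability checks are correct additions that the paper leaves implicit.
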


\ifJOURNAL
  \begin{pf}
\fi
\ifnotJOURNAL
  \begin{proof}
\fi
  % We will abbreviate $S^n_t$ to $S_t$.
  We will show that $S^n_t$ is the capital process
  of a simple trading strategy that changes its position only at times $\tau^n_k$.
  We have, for $t\in[\tau^n_k,\tau^n_{k+1}]\cap[0,T]$:
  \begin{align}
    S^n_t - S^n_{\tau^n_k}
    &=
    \left(
      A^n_t - A^{n-1}_t
    \right)
    -
    \left(
      A^n_{\tau^n_k} - A^{n-1}_{\tau^n_k}
    \right)\notag\\
    &=
    \left(
      A^n_t - A^n_{\tau^n_k}
    \right)
    -
    \left(
      A^{n-1}_t - A^{n-1}_{\tau^n_k}
    \right)\notag\\
    &=
    \Bigl(
      \omega(t) - \omega(\tau^n_k)
    \Bigr)^2
    -
    \Bigl(
      \left(
        \omega(t) - \omega(\tau)
      \right)^2
      -
      \left(
        \omega(\tau^n_k) - \omega(\tau)
      \right)^2
    \Bigr)\notag\\
    &=
    -2
    \left(
      \omega(\tau^n_k) - \omega(\tau)
    \right)
    \left(
      \omega(t) - \omega(\tau^n_k)
    \right),\label{eq:difference}
  \end{align}
  where $\tau:=\max\{\tau^{n-1}_{k'}\st\tau^{n-1}_{k'}\le\tau^n_k\}$.
  Therefore,
  it suffices to take position
  $
    -2
    \left(
      \omega(\tau^n_k) - \omega(\tau)
    \right)
  $
  at time $\tau^n_k$.
\ifJOURNAL
  \qed
  \end{pf}
\fi
\ifnotJOURNAL
  \end{proof}
\fi

The proof of Theorem~\ref{thm:1D} involves the following simple capital process
(based on a standard idea going back
to at least Kolmogorov \cite{Kolmogorov:1929LLN}):
\begin{equation}\label{eq:Kolmogorov}
  U^n_t
  :=
  2^{-2n+8} c^2
  +
  n^4 2^{-2n}
  +
  (S^n_t)^2
  -
  \sum_{k=1}^{\infty}
  \left(
    S^n_{\tau^n_k\wedge t}
    -
    S^n_{\tau^n_{k-1}\wedge t}
  \right)^2.
\end{equation}

\begin{lemma} % \label{lem:Kolmogorov}
  For each $n\in\bbbn$,
  $U^n_t$ is indeed a simple capital process.
\end{lemma}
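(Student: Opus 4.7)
The plan is to derive $U^n_t$ from a discrete analogue of It\^o's formula applied to $(S^n_t)^2$ and then read off a simple trading strategy in $\omega$.

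First, I would use the elementary identity $y^2-x^2=(y-x)^2+2x(y-x)$ with $x:=S^n_{\tau^n_{k-1}\wedge t}$, $y:=S^n_{\tau^n_k\wedge t}$, and sum over $k\ge1$. The left-hand side telescopes: by Lemma~\ref{lem:tau}\ref{it:valid} only finitely many $\tau^n_k$ are finite, so $\tau^n_k\wedge t=t$ for all sufficiently large $k$, and since $S^n_0=A^n_0-A^{n-1}_0=0$, the telescoped sum is $(S^n_t)^2$. This gives
\begin{equation*}
  (S^n_t)^2
  =
  \sum_{k=1}^{\infty}\bigl(S^n_{\tau^n_k\wedge t}-S^n_{\tau^n_{k-1}\wedge t}\bigr)^2
  +
  2\sum_{k=1}^{\infty}S^n_{\tau^n_{k-1}\wedge t}\bigl(S^n_{\tau^n_k\wedge t}-S^n_{\tau^n_{k-1}\wedge t}\bigr),
\end{equation*}
so that $U^n_t = 2^{-2n+8}c^2 + n^4 2^{-2n} + 2\sum_{k\ge1}S^n_{\tau^n_{k-1}\wedge t}(S^n_{\tau^n_k\wedge t}-S^n_{\tau^n_{k-1}\wedge t})$.

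Second, I would use formula~(\ref{eq:difference}) to convert the $S^n$-increments into $\omega$-increments. For $t\in[\tau^n_{k-1},\tau^n_k]$, setting $\tau'_k:=\max\{\tau^{n-1}_j\st\tau^{n-1}_j\le\tau^n_{k-1}\}$, formula~(\ref{eq:difference}) gives
\begin{equation*}
  S^n_{\tau^n_k\wedge t}-S^n_{\tau^n_{k-1}\wedge t}
  =
  -2\bigl(\omega(\tau^n_{k-1})-\omega(\tau'_k)\bigr)\bigl(\omega(\tau^n_k\wedge t)-\omega(\tau^n_{k-1}\wedge t)\bigr),
\end{equation*}
and outside this interval the increment is zero. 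Consequently,
\begin{equation*}
  U^n_t
  =
  \alpha
  +
  \sum_{k=1}^{\infty}h_k(\omega)\bigl(\omega(\tau^n_k\wedge t)-\omega(\tau^n_{k-1}\wedge t)\bigr),
\end{equation*}
with $\alpha:=2^{-2n+8}c^2+n^4 2^{-2n}$ and $h_k(\omega):=-4\,S^n_{\tau^n_{k-1}}(\omega)\bigl(\omega(\tau^n_{k-1})-\omega(\tau'_k)\bigr)$, which is manifestly the form~(\ref{eq:simple-capital}) of a simple capital process with initial capital $\alpha$.

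Finally, I would verify that this really is an admissible simple trading strategy. Measurability is immediate: $S^n_{\tau^n_{k-1}}$ is $\FFF_{\tau^n_{k-1}}$-measurable because $S^n$ is adapted, and $\tau'_k$ is determined by the $\FFF_{\tau^n_{k-1}}$-measurable data $(\tau^{n-1}_j)_j$ stopped at $\tau^n_{k-1}$. For boundedness of each $h_k$ over $\Omega_c$, each $\omega$-increment $\omega(\tau^n_j)-\omega(\tau^n_{j-1})$ is bounded by $2^{-n+1}+c$ (crossing from one dyadic level to an adjacent one, with a jump of size at most $c$), so $A^n_{\tau^n_{k-1}}$, $A^{n-1}_{\tau^n_{k-1}}$, and hence $|S^n_{\tau^n_{k-1}}|$ and $|\omega(\tau^n_{k-1})-\omega(\tau'_k)|$ are bounded by constants depending only on $k$, $n$, and $c$. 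The main (modest) obstacle is keeping the index bookkeeping straight so that the positions $h_k$ are indexed to the correct interval $[\tau^n_{k-1},\tau^n_k]$; once this is done, everything reduces to the two-line discrete It\^o identity above.
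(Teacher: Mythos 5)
Your proposal is correct and takes essentially the same route as the paper: the paper verifies the increment $U^n_t-U^n_{\tau^n_k}=2S^n_{\tau^n_k}\bigl(S^n_t-S^n_{\tau^n_k}\bigr)$ interval by interval and then applies (\ref{eq:difference}), which is exactly your telescoping identity $y^2-x^2=(y-x)^2+2x(y-x)$ read locally. Both arguments land on the identical position, $-4S^n_{\tau^n_k}\bigl(\omega(\tau^n_k)-\omega(\tau)\bigr)$ taken at time $\tau^n_k$ (your $h_k$, with the index shifted by one), so the difference is purely one of presentation.
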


\ifJOURNAL
  \begin{pf}
\fi
\ifnotJOURNAL
  \begin{proof}
\fi
  As in the previous lemma,
  the position is only changed at times $\tau^n_k$.
  When $t\in[\tau^n_k,\tau^n_{k+1}]\cap[0,T]$, we have:
  \begin{align*}
    U^n_t - U^n_{\tau^n_k}
    &=
    \left(
      \left(S^n_t\right)^2
      -
      \sum_{i=1}^{k}
      \left(
        S^n_{\tau^n_i}
        -
        S^n_{\tau^n_{i-1}}
      \right)^2
      -
      \left(
        S^n_{t}
        -
        S^n_{\tau^n_k}
      \right)^2
    \right)
    \\       % not LMJ
    &\quad   % not LMJ
    -
    \left(
      \left(S^n_{\tau^n_k}\right)^2
      -
      \sum_{i=1}^{k}
      \left(
        S^n_{\tau^n_i}
        -
        S^n_{\tau^n_{i-1}}
      \right)^2
    \right)
    \\
    &=
    2 S^n_{\tau^n_k}
    \left(
      S^n_{t}
      -
      S^n_{\tau^n_k}
    \right)
    \\&   % not LMJ
    =
    -4 S^n_{\tau^n_k}
    \left(
      \omega(\tau^n_k)
      -
      \omega(\tau)
    \right)
    \left(
      \omega(t)
      -
      \omega(\tau^n_k)
    \right)
  \end{align*}
  (the last equality follows from (\ref{eq:difference})).
  Therefore, it suffices to take position
  $
    -4 S^n_{\tau^n_k}
    \left(
      \omega(\tau^n_k)
      -
      \omega(\tau)
    \right)
  $
  at time $\tau^n_k$.
\ifJOURNAL
  \qed
  \end{pf}
\fi
\ifnotJOURNAL
  \end{proof}
\fi

Set
\begin{equation}\label{eq:sigma-n}
  \sigma^n
  :=
  \min
  \left\{
    \tau^n_K
    \biggm|
    \sum_{k=1}^{K}
    \left(
      S^n_{\tau^n_k}
      -
      S^n_{\tau^n_{k-1}}
    \right)^2
    >
    n^4 2^{-2n}
  \right\}.
\end{equation}
We will be interested in the nonnegative simple capital process
$U^n_{\sigma^n\wedge t}$;
its nonnegativity (on $\Omega_{\psi}=\Omega_c$) follows from
$(S^n_t - S^n_{\tau^n_k})^2\le2^{-2n+8}c^2$,
where $t\in[\tau^n_k,\tau^n_{k+1}]\cap[0,T]$,
which in turn follows from (\ref{eq:difference}):
% \begin{equation*}   % LMJ
\begin{multline*} % not LMJ
  \left|
    S^n_t - S^n_{\tau^n_k}
  \right|
  =
  2
  \left|
    \omega(\tau^n_k) - \omega(\tau)
  \right|
  \left|
    \omega(t) - \omega(\tau^n_k)
  \right|
  \\  % not LMJ
  \le
  2
  \left(
    2\times2^{-n+1}
  \right)
  \left(
    c + 2\times2^{-n}
  \right)
  \le
  2^{-n+4}c.
% \end{equation*}   % LMJ
\end{multline*} % not LMJ

To analyse the process $U^n_{\sigma^n\wedge t}$
we will need a probability-free version
of Doob's upcrossing inequality (Lemma~\ref{lem:Doob} below)
and its corollary (Lemma~\ref{lem:Stricker})
obtained by a method proposed by Bruneau \cite{Bruneau:1979}
and developed and simplified in \cite{Stricker:1979} and \cite{\CTV}.

Let $\MM_t^{(a,b)}(f)$ (resp.\ $\DD_t^{(a,b)}(f)$)
be the number of upcrossings (resp.\ downcrossings) of an open interval $(a,b)$
by a function $f:[0,T]\to\bbbr$ during the time interval $[0,t]$.
For each $h>0$ set
\begin{equation*}
  \MM_t(f,h)
  :=
  \sum_{k\in\bbbz}
  \MM_t^{(k h,(k+1)h)}(f),
  \quad
  \DD_t(f,h)
  :=
  \sum_{k\in\bbbz}
  \DD_t^{(k h,(k+1)h)}(f).
\end{equation*}
Remember that $\sigma_L$ is the stopping time defined by (\ref{eq:sigma-L}).
\ifFULL\bluebegin
  Before 1 August 2011, I used
  \begin{equation*}
    \sigma^{\circ}_L
    :=
    \inf
    \left\{
      t\st\omega(t)\le-2^L
    \right\}
    \wedge
    T
  \end{equation*}
  before I really needed (\ref{eq:sigma-L}).
\blueend\fi

\begin{lemma}\label{lem:Doob}
  Let $L\in\bbbn$ and $(a,b)\subseteq(-2^L,2^L)$ be a non-empty interval.
  There exists a nonnegative simple capital process $V$
  that starts from $V_0=a+2^L+c$ and satisfies
  \begin{equation}\label{eq:Doob}
    V_t(\omega)
    \ge
    (b-a)\MM_t^{(a,b)}(\omega)
  \end{equation}
  for all $\omega\in\Omega_{\psi}=\Omega_c$ and all $t\in[0,\sigma_L(\omega)]$.
  % such that $\inf_{s\in[0,t]}\omega(s)>-2^L$
\end{lemma}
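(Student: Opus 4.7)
The plan is to implement the classical Doob upcrossing strategy in a pathwise/game-theoretic form: buy one share as soon as the price drops to or below $a$, hold it until the price rises to or above $b$, then stay flat until the next time the price drops to $a$, and so on, with all trading halted at $\sigma_L$. Concretely, I would define stopping times $\tau_1 := \inf\{t \in [0,T] \st \omega(t) \le a\}$, and for $i \ge 1$,
\begin{align*}
  \tau_{2i} &:= \inf\{t \in [\tau_{2i-1},T] \st \omega(t) \ge b\} \wedge \sigma_L,\\
  \tau_{2i+1} &:= \inf\{t \in [\tau_{2i},T] \st \omega(t) \le a\} \wedge \sigma_L,
\end{align*}
take position $h_{2i-1} = 1$ on $[\tau_{2i-1}, \tau_{2i})$, position $h_{2i} = 0$ on $[\tau_{2i}, \tau_{2i+1})$, and position $0$ everywhere after $\sigma_L$. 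That these are bona fide stopping times follows by the same argument as in Lemma~\ref{lem:tau}\ref{it:bona-fide} (hitting times of closed half-lines by a c\`adl\`ag path). Since $\omega \in D[0,T]$ can cross a fixed non-empty interval only finitely often, only finitely many $\tau_n$ are finite, so this defines a legitimate simple trading strategy. Let $V$ be the corresponding simple capital process with initial capital $V_0 = a + 2^L + c$.

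Fix $\omega \in \Omega_c$ and $t \in [0, \sigma_L]$, and let $k := \MM_t^{(a,b)}(\omega)$. Between times $\tau_{2i-1}$ and $\tau_{2i}$ (for $i \le k$) one completed upcrossing is realized, contributing $\omega(\tau_{2i}) - \omega(\tau_{2i-1}) \ge b - a$ to the capital. If at time $t$ there is an open position (i.e.\ $t \in [\tau_{2k+1}, \tau_{2k+2})$), its marked-to-market value is $\omega(t) - \omega(\tau_{2k+1}) \ge \omega(t) - a$, since $\omega(\tau_{2k+1}) \le a$ by the definition of $\tau_{2k+1}$. Because we stopped trading at $\sigma_L$ and $|\omega(s)| < 2^L$ for $s < \sigma_L$ while the jump at $\sigma_L$ itself has size at most $c$, we have $\omega(t) \ge -2^L - c$ for all $t \in [0, \sigma_L]$. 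Combining these bounds,
\begin{equation*}
  V_t \;\ge\; V_0 + k(b-a) + \bigl((-2^L - c) - a\bigr) \;=\; (b-a) k \;=\; (b-a)\MM_t^{(a,b)}(\omega),
\end{equation*}
which is (\ref{eq:Doob}); a fortiori $V_t \ge 0$ on $[0, \sigma_L]$. Beyond $\sigma_L$ all positions are zero, so $V_t = V_{\sigma_L} \ge 0$, and nonnegativity on all of $[0,T]$ follows.

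The only real subtlety is the downside of the open long position: without the stopping at $\sigma_L$ and the constraint $\omega \in \Omega_c$, a large downward jump while long could drive $V$ below zero. The choice $V_0 = a + 2^L + c$ is precisely the smallest initial capital that absorbs the worst-case loss $-2^L - c - a$ coming from the path descending to $-2^L - c$ after a buy at $a$; the $c$ term is exactly what the jump bound $|\Delta\omega| \le c$ buys us at the stopping instant $\sigma_L$. I expect the only slightly delicate point to write out carefully is the verification that the $\tau_n$ are indeed stopping times (arguing exactly as in Lemma~\ref{lem:tau}\ref{it:bona-fide}, via $\FFF^{\circ}_t$-analyticity and universal completion), but this is routine; everything else is a direct computation along the lines sketched above.
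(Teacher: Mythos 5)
Your proposal is correct and is essentially the paper's own proof: the same alternating hitting-time strategy (long one share from each hit of $(-\infty,a]$ until the next hit of $[b,\infty)$, trading halted at $\sigma_L$), the same initial capital $a+2^L+c$ absorbing the worst-case loss $\omega(t)-a\ge -2^L-c-a$ on an open position, and the same appeals to right-continuity for attainment of the hitting infima, to the finiteness of crossings of a c\`adl\`ag path, and to the analytic-set argument of Lemma~\ref{lem:tau}\ref{it:bona-fide} for measurability. The only difference is cosmetic: you build the truncation $\wedge\,\sigma_L$ into the stopping times directly, where the paper introduces the auxiliary times $\tau'_n$ with positions forced to $0$ once $\tau'_n=\sigma_L$.
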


\noindent
(Remember that $V$ being nonnegative means that $V_t(\omega)\ge0$
for all $t$ and $\omega\in\Omega_{\psi}$.)

\ifJOURNAL
  \begin{pf}
\fi
\ifnotJOURNAL
  \begin{proof}
\fi
  The following standard argument will be easy to formalize.
  A simple trading strategy $G$ leading to $V$
  can be defined as follows.
  The initial capital is $a+2^L+c$.
  At first $G$ takes position $0$.
  When $\omega$ first hits $(-\infty,a]$,
  $G$ takes position $1$ until $\omega$ hits $[b,\infty)$,
  at which point $G$ takes position $0$;
  after $\omega$ hits $(-\infty,a]$,
  $G$ maintains position $1$ until $\omega$ hits $[b,\infty)$,
  at which point $G$ takes position $0$; etc.
  The only exception is that trading is stopped at time $\sigma_L$:
  the position at that time becomes 0 and stays 0 afterwards.
  (The essential bit is that trading should be stopped
  when $\omega$ hits $(-\infty,-2^L]$, if this ever happens.)
  Since $\omega$'s jumps never exceed $c$ in absolute value,
  the process $S$ will be nonnegative.

  Formally, we define $\tau_1:=\inf\{t\in[0,T]\st\omega(t)\in(-\infty,a]\}$
  and, for $n=2,3,\ldots$,
  \begin{equation*}
    \tau_n
    :=
    \inf\{t\in[0,T]\st t>\tau_{n-1} \And \omega(t)\in I_n\},
  \end{equation*}
  where $I_n:=[b,\infty)$ for even $n$
  and $I_n:=(-\infty,a]$ for odd $n$.
  Each $\tau_n$ is a stopping time:
  this can be shown analogously
  to the proof of Lemma~\ref{lem:tau}\ref{it:bona-fide}.
  % This is done in detail in \cite{\CTV}, proof of Lemma~1,
  % but the exposition there is overcomplicated.
  We have $\tau_n(\omega)<\infty$ for only finitely many $n$
  since $\omega$ is c\`adl\`ag:
  see, e.g., \cite{Dellacherie/Meyer:1978}, Theorem IV.22.

  Since $\omega$ is a right-continuous function
  and $(-\infty,a]$ and $[b,\infty)$ are closed sets,
  the infimum in the definition of $\tau_n$, $n=1,2,\ldots$, is attained
  when $\tau_n<\infty$.
  Therefore,
  $\omega(\tau_1)\le a$,
  $\omega(\tau_2)\ge b$,
  $\omega(\tau_3)\le a$,
  $\omega(\tau_4)\ge b$, and so on while $\tau_n<\infty$.
  Set, for $n=1,2,\ldots$,
  $$
    \tau'_n
    :=
    \begin{cases}
      \tau_n & \text{if $\tau_n<\sigma_L$}\\
      \sigma_L & \text{if $\tau_{n-1}<\sigma_L\le\tau_n$}\\
      \infty & \text{otherwise}
    \end{cases}
  $$
  (the inequality $\tau_{n-1}<\sigma_L$ is considered to be true
  when $n=1$).
  The position taken by $G$ at the time $\tau'_n$,
  $n=1,2,\ldots$,
  is
  $$
    h_n
    :=
    \begin{cases}
      1 & \text{if $\tau'_n<\sigma_L$ and $n$ is odd}\\
      0 & \text{otherwise},
    \end{cases}
  $$
  and the initial capital is $a+2^L+c$.
  Let $t\in[0,\sigma_L]$ and $n$ be the largest integer such that $\tau_n\le t$
  (with $n:=0$ when $\tau_1>t$; the formulas below will never involve $\tau_0$).
  Now we obtain from (\ref{eq:simple-capital}):
  if $n$ is even,
  \begin{align*}
    V_t(\omega)
    &=
    V_0
    +
    (\omega(\tau_2)-\omega(\tau_1))
    +
    (\omega(\tau_4)-\omega(\tau_3))
    +\cdots+
    (\omega(\tau_n)-\omega(\tau_{n-1}))\\
    &\ge
    a + 2^L + c + (b-a)\MM_t^{(a,b)}(\omega),
  \end{align*}
  and if $n$ is odd,
  \begin{align*}
    V_t(\omega)
    &=
    V_0
    +
    (\omega(\tau_2)-\omega(\tau_1))
    +
    (\omega(\tau_4)-\omega(\tau_3))
    +\cdots+
    (\omega(\tau_{n-1})-\omega(\tau_{n-2}))
    \\&\quad   % not LMJ
    +
    (\omega(t)-\omega(\tau_{n}))\\
    &\ge
    a + 2^L + c + (b-a)\MM_t^{(a,b)}(\omega) 
    +
    (\omega(t)-\omega(\tau_{n}))\\
    &\ge
    a + 2^L + c + (b-a)\MM_t^{(a,b)}(\omega) 
    +
    (-2^L - c - a)
    =
    (b-a)\MM_t^{(a,b)}(\omega);
  \end{align*}
  in both cases, (\ref{eq:Doob}) holds.
  In particular, $V_t(\omega)$ is nonnegative.
\ifJOURNAL
  \qed
  \end{pf}
\fi
\ifnotJOURNAL
  \end{proof}
\fi

\ifFULL\bluebegin
  Notice that Lemma~\ref{lem:Doob} does not use the full strength
  of the assumptions that we made about the jumps of $\omega$:
  it is sufficient to assume that,
  % $\omega\in D[0,T]$ and
  for all $t\in(0,T]$,
  $
    \Delta\omega(t)
    \ge
    -c
  $.
\blueend\fi

\begin{lemma}\label{lem:Stricker}
  Let $L\in\bbbn$.
  For each $n\in\bbbn$,
  there exists a nonnegative simple capital process $V$ such that
  $V_0\le2^L+c$
  and $V_t(\omega)\ge 2^{-2n-L-1}\MM_{t-}(\omega,2^{-n})$
  for all $\omega\in\Omega_c$ and all $t\in[0,\sigma_L(\omega)]$.
\end{lemma}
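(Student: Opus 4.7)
The plan is to sum the nonnegative simple capital processes supplied by Lemma~\ref{lem:Doob} applied to every dyadic interval of length $2^{-n}$ that lies within $(-2^L, 2^L)$, and then divide by a normalization factor chosen so that the total initial capital is at most $2^L+c$.

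Concretely, set $N := 2^{n+L+1}$. For each integer $k$ with $-2^{n+L}\le k\le 2^{n+L}-1$, Lemma~\ref{lem:Doob} supplies a nonnegative simple capital process $V^{(k)}$ with $V^{(k)}_0=k2^{-n}+2^L+c$ and $V^{(k)}_t(\omega)\ge 2^{-n}\MM_t^{(k2^{-n},(k+1)2^{-n})}(\omega)$ for all $\omega\in\Omega_c$ and $t\in[0,\sigma_L]$. Define
$$
  V := \frac{1}{N}\sum_{k=-2^{n+L}}^{2^{n+L}-1} V^{(k)}.
$$
As a nonnegative linear combination of nonnegative simple capital processes, $V$ is itself a nonnegative simple capital process. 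Using $\sum_{k=-2^{n+L}}^{2^{n+L}-1} k = -2^{n+L}$, the initial capital is
$$
  V_0 = \frac{1}{N}\bigl(-2^L + N(2^L+c)\bigr) = 2^L+c-\frac{2^L}{N}\le 2^L+c.
$$

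For the lower bound on $V_t$, I would use that $\lvert\omega(s)\rvert<2^L$ for all $s\in[0,\sigma_L)$ (by the definition (\ref{eq:sigma-L}) of $\sigma_L$), so that every upcrossing of a length-$2^{-n}$ dyadic interval that is completed strictly before any $t\in[0,\sigma_L]$ takes place inside $(-2^L,2^L)$ and therefore corresponds to an interval $(k2^{-n},(k+1)2^{-n})$ with $k$ in our range. Consequently, for $t\in[0,\sigma_L]$,
$$
  V_t(\omega) \ge \frac{2^{-n}}{N}\MM_{t-}(\omega,2^{-n}) = 2^{-2n-L-1}\MM_{t-}(\omega,2^{-n}).
$$

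The only real subtlety is the boundary behaviour at $t=\sigma_L$: if $\omega$ exits $(-2^L,2^L)$ via a sizeable upward jump at $\sigma_L$, that jump could complete an upcrossing of an interval straddling the level $2^L$, which would not be accounted for by any of our $V^{(k)}$. Passing from $\MM_t$ to $\MM_{t-}$ in the statement of the lemma sidesteps this difficulty, since it counts only upcrossings strictly before $t$, all of which take place inside $(-2^L,2^L)$ where the family $\{V^{(k)}\}$ provides the required control.
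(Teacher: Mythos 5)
Your proposal is correct and follows essentially the same route as the paper's own proof: both apply Lemma~\ref{lem:Doob} to each dyadic interval $(k2^{-n},(k+1)2^{-n})$ with $k\in\{-2^{L+n},\ldots,2^{L+n}-1\}$, average the resulting processes with weight $2^{-L-n-1}$, and verify $V_0\le 2^L+c$ via the same telescoping sum $\sum_k k2^{-n}=-2^L$. Your treatment of the boundary case $t=\sigma_L$ via $\MM_{t-}$ is exactly the paper's reason for the left limit in the statement (the possibility $\omega(\sigma_L)\notin[-2^L,2^L]$), only spelled out in slightly more detail.
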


\ifJOURNAL
  \begin{pf}
\fi
\ifnotJOURNAL
  \begin{proof}
\fi
  % \ifFULL\bluebegin
  %   We will follow \cite{\CTV}.
  % \blueend\fi
  By Lemma~\ref{lem:Doob},
  for each $k\in\{-2^{L+n},\ldots,2^{L+n}-1\}$
  there exists a nonnegative simple capital process $V^k$
  that starts from $k2^{-n}+2^L+c$ and satisfies
  \begin{equation*}
    V^{k}_t(\omega)
    \ge
    2^{-n}
    \MM_t^{(k2^{-n},(k+1)2^{-n})} (\omega)
  \end{equation*}
  for all $t\in[0,\sigma_L(\omega)]$.
  Summing $2^{-L-n-1}V^{k}$ over $k=-2^{L+n},\ldots,2^{L+n}-1$
  (i.e., averaging $V^k$),
  we obtain a nonnegative simple capital process $V$ such that
  \begin{align*}
    V_0
    &=
    2^{-L-n-1}
    \sum_{k=-2^{L+n}}^{2^{L+n}-1}
    k2^{-n}
    + 2^L + c
    \le
    2^L + c,\\
    V_{t}(\omega)
    &\ge
    2^{-2n-L-1}
    \MM_t(\omega,2^{-n})
    \text{ for all $t<\sigma_L$};
  \end{align*}
  at $t=\sigma_L(\omega)$ we are only guaranteed to have
  $
    V_{t}(\omega)
    \ge
    2^{-2n-L-1}
    \MM_{t-}(\omega,2^{-n})
  $
  because of the possibility $\omega(t)\notin[-2^L,2^L]$.
\ifJOURNAL
  \qed
  \end{pf}
\fi
\ifnotJOURNAL
  \end{proof}
\fi

\begin{corollary}\label{cor:Stricker}
  Let $L\in\bbbn$.
  For typical $\omega\in\Omega_c$,
  from some $n\in\bbbn$ on,
  $\MM_{\sigma_L-}(\omega,2^{-n})\le n^2 2^{2n}$
  and $\DD_{\sigma_L-}(\omega,2^{-n})\le n^2 2^{2n}$.
  % hold for all $t\le\sigma_L$
\end{corollary}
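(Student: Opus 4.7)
The plan is to obtain the corollary as a Borel--Cantelli style consequence of Lemma~\ref{lem:Stricker}. For each $n\in\bbbn$, let $E_n:=\{\omega\in\Omega_c\st\MM_{\sigma_L-}(\omega,2^{-n})>n^22^{2n}\}$; I want to show that $\bar{\bbbp}(\limsup_n E_n)=0$, since the complement is exactly the set of $\omega$ for which $\MM_{\sigma_L-}(\omega,2^{-n})\le n^22^{2n}$ from some $n$ on.

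First I would invoke Lemma~\ref{lem:Stricker} to produce, for each $n$, a nonnegative simple capital process $V^n$ with $V^n_0\le2^L+c$ and $V^n_t(\omega)\ge 2^{-2n-L-1}\MM_{t-}(\omega,2^{-n})$ for $t\in[0,\sigma_L]$. Stopping the underlying simple strategy at the stopping time $\sigma_L$ extends $V^n$ to a nonnegative simple capital process on $[0,T]$, and since $\sigma_L\le T$, I get $V^n_T(\omega)\ge 2^{-2n-L-1}\MM_{\sigma_L-}(\omega,2^{-n})$. On $E_n$ this forces $V^n_T\ge n^2/2^{L+1}$, so after rescaling by $2^{L+1}/n^2$, the process $\widetilde V^n:=(2^{L+1}/n^2)V^n$ is a nonnegative simple capital process with initial capital $\le 2^{L+1}(2^L+c)/n^2$ and $\widetilde V^n_T\ge\III_{E_n}$.

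Next I would use the countable aggregation built into the definition of a nonnegative capital process (cf.\ (\ref{eq:nonnegative-capital})). For any $N\in\bbbn$, the process $\sum_{n\ge N}\widetilde V^n$ is a nonnegative capital process (the initial capitals $2^{L+1}(2^L+c)/n^2$ form a convergent series), and it dominates $\III_{\bigcup_{n\ge N}E_n}$ at time $T$. Hence
\[
  \bar{\bbbp}\bigl(\textstyle\bigcup_{n\ge N}E_n\bigr)
  \le
  2^{L+1}(2^L+c)\sum_{n\ge N}\frac{1}{n^2}\longrightarrow 0
  \quad\text{as }N\to\infty,
\]
so $\bar{\bbbp}(\limsup_n E_n)=0$, which gives the upcrossings assertion for typical $\omega$.

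For the downcrossings assertion I would run the symmetric argument: Lemma~\ref{lem:Doob} applied to $-\omega$ (which lies in $\Omega_c$ iff $\omega$ does, with the same $\sigma_L$) produces a simple capital process bounding $(b-a)\DD_t^{(a,b)}(\omega)$, yielding a downcrossings analogue of Lemma~\ref{lem:Stricker} and hence the bound on $\DD_{\sigma_L-}(\omega,2^{-n})$. The two exceptional sets being null, their union is null, and the corollary follows. No real obstacle is anticipated---everything is set up by Lemmas~\ref{lem:Doob} and~\ref{lem:Stricker}; the only care needed is in the bookkeeping when stopping $V^n$ at $\sigma_L$ so that the resulting process is a legitimate simple capital process on the full interval $[0,T]$.
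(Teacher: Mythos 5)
Your proposal is correct and follows essentially the same route as the paper: the paper likewise takes the processes $V^n$ from Lemma~\ref{lem:Stricker}, weights them by $n^{-2}$, and aggregates them into a single nonnegative capital process with finite initial capital that becomes infinite on the exceptional set, which is just a repackaging of your tail-sum Borel--Cantelli computation. The paper disposes of the downcrossing case with the single remark that it suffices to treat $\MM$; your symmetry argument via $-\omega$ (noting that $-\omega\in\Omega_c$ with the same $\sigma_L$, and that downcrossings of $\omega$ are upcrossings of $-\omega$) is a valid way to fill in that step.
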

\ifJOURNAL
  \begin{pf}
\fi
\ifnotJOURNAL
  \begin{proof}
\fi
  It suffices to prove that
  $\MM_{\sigma_L-}(\omega,2^{-n})\le n^{1.5} 2^{2n}$ from some $n$ on
  (since $\DD_{\sigma_L-}(\omega,2^{-n})\le\MM_{\sigma_L-}(\omega,2^{-n})+2^{n+L}$).
  \ifFULL\bluebegin
    There are two ways to prove a suitable inequality for $\DD$:
    in the same way as for $\MM$;
    using an inequality like
    $\DD_t(\omega,2^{-n})\le\MM_t(\omega,2^{-n})+2^{n+L}$
    (as in \cite{\CTV}, which does not require shorting $\omega$).
  \blueend\fi
  Consider the event $E$ that the inequality
  $\MM_{\sigma_L-}(\omega,2^{-n}) > n^{1.5} 2^{2n}$
  holds for infinitely many $n$.
  (This is the complementary event to the event
  that we are proving to be almost certain.)
  For each $n\in\bbbn$ let $V^n_t$ be a nonnegative simple capital process
  such that $V^n_0\le2^L+c$ and
  $V^n_{\sigma_L} \ge 2^{-2n-L-1}\MM_{\sigma_L-}(\omega,2^{-n})$
  (see Lemma~\ref{lem:Stricker}).
  Then, for infinitely many $n$, $V^n_{\sigma_L} \ge 2^{-L-1}n^{1.5}$ on $E$.
  To see that the event $E$ is null,
  it suffices to notice that the process $\sum_n n^{-1.5} V^n$
  starts from $V_0<\infty$ and satisfies $V_{\sigma_L}=\infty$ on~$E$.
\ifJOURNAL
  \qed
  \end{pf}
\fi
\ifnotJOURNAL
  \end{proof}
\fi

\subsection*{Proof of Theorem \ref{thm:1D} (for $\psi=c$)}

Let $L\in\bbbn$.
For all $\omega\in\Omega_c$ and for $t:=\sigma_L(\omega)$,
the infinite sum in (\ref{eq:Kolmogorov}) can be bounded above as follows
(cf.\ (\ref{eq:difference})):
\begin{align}
  &\hspace*{-1cm}   % not LMJ
  \sum_{k=0}^{\infty}
  \left(
    S^n_{\tau^n_{k+1}\wedge\sigma_L}
    -
    S^n_{\tau^n_{k}\wedge\sigma_L}
  \right)^2
  \notag\\   % not LMJ
  &=
  2^2
  \sum_{k=0}^{\infty}
  \left(
    \omega(\tau^n_k\wedge\sigma_L) - \omega(\tau\wedge\sigma_L)
  \right)^2
  \left(
    \omega(\tau^n_{k+1}\wedge\sigma_L) - \omega(\tau^n_k\wedge\sigma_L)
  \right)^2
  \notag\\
  &\le
  2^{6-2n}
  \sum_{k=0}^{\infty}
  \left(
    \omega(\tau^n_{k+1}\wedge\sigma_L) - \omega(\tau^n_k\wedge\sigma_L)
  \right)^2.
  \label{eq:sum}
\end{align}

Fix an $\omega\in\Omega_c$ such that, from some $n\in\bbbn$ on,
$\MM_{\sigma_L-}(\omega,2^{-n})\le n^2 2^{2n}$
and $\DD_{\sigma_L-}(\omega,2^{-n})\le n^2 2^{2n}$;
by Corollary~\ref{cor:Stricker},
this condition is satisfied for typical $\omega\in\Omega_c$.
Let $N\in\{2,3,\ldots\}$ be so large that
$\MM_{\sigma_L-}(\omega,2^{-n})\le n^2 2^{2n}$
and $\DD_{\sigma_L-}(\omega,2^{-n})\le n^2 2^{2n}$
for all $n\ge N$.
In particular, for all $m\ge N$:
\begin{itemize}
\item
  the number of $k\in\bbbn_0$ such that $\tau^m_k<\infty$
  does not exceed $2m^22^{2m}+2\le3m^22^{2m}$;
\item
  the number of jumps of $\omega$ of size $2^{-m+1}$ or more
  over the time interval $[0,\sigma_L(\omega))$
  does not exceed $2m^22^{2m}$.
  (By the \emph{size} of a jump of $\omega$ at $t\in(0,T]$
  we mean its absolute value $\lvert\Delta\omega(t)\rvert$.)
\end{itemize}
We will write $t\in\tau^n$ to mean that $t$ is an element of the partition $\tau^n$:
there exists $k\in\bbbn_0$ such that $t=\tau^n_k$.

Suppose $n\ge N$.
We will bound different addends in (\ref{eq:sum}) in different ways:
\renewcommand{\theenumi}{(\Alph{enumi})}
\begin{enumerate}
\item\label{it:first}
  If $\tau^n_k\le\sigma_L\le\tau^n_{k+1}$, we use the trivial bound
  $$
    \left|
      \omega(\tau^n_{k+1}\wedge\sigma_L) - \omega(\tau^n_k\wedge\sigma_L)
    \right|
    \le
    2^{1-n}+c.
  $$
  There are at most two such addends.
  We ignore the zero addends for which $\tau^n_k>\sigma_L$,
  and so assume $\tau^n_{k+1}<\sigma_L$ in the rest of this list.
\item\label{it:not-in}
  If $\tau^n_{k+1}\notin\tau^{n-1}$,
  $$
    \left|
      \omega(\tau^n_{k+1}\wedge\sigma_L) - \omega(\tau^n_k\wedge\sigma_L)
    \right|
    \le
    2^{2-n}.
  $$
  The number of such $k$ is at most $3n^22^{2n}$.
  % (we will omit ``from some $n$ on'',
  % which is applicable to the conjunction of all such statements,
  % not just to individual statements)
\item\label{it:small-jump}
  If $\tau^n_{k+1}\in\tau^{n-1}$
  and the size of the jump of $\omega$ at $\tau^n_{k+1}$ is below $2^{-n+1}$,
  $$
    \left|
      \omega(\tau^n_{k+1}\wedge\sigma_L) - \omega(\tau^n_k\wedge\sigma_L)
    \right|
    \le
    2^{1-n} + 2^{-n+1}
    =
    2^{2-n}.
  $$
  The number of such $k$
  % does not exceed the number of $k'$ for which $\tau^{n-1}_{k'}<\infty$
  % and so does not exceed $3(n-1)^22^{2(n-1)}\le3n^22^{2n}$
  is at most $3n^22^{2n}$.
\item\label{it:intermediate-jump}
  If $\tau^n_{k+1}\in\tau^{n-1}$
  and the size of the jump of $\omega$ at $\tau^n_{k+1}$
  belongs to $[2^{-m+1},2^{-m+2})$,
  where $m\in\{n,n-1,\ldots,N\}$,
  $$
    \left|
      \omega(\tau^n_{k+1}\wedge\sigma_L) - \omega(\tau^n_k\wedge\sigma_L)
    \right|
    \le
    2^{1-n} + 2^{-m+2}.
  $$
  The number of such $k$ is at most $2m^22^{2m}$.
\item\label{it:large-jump}
  If $\tau^n_{k+1}\in\tau^{n-1}$
  and the size of the jump of $\omega$ at $\tau^n_{k+1}$ is $2^{-N+2}$ or more,
  we will use the trivial bound
  $$
    \left|
      \omega(\tau^n_{k+1}\wedge\sigma_L) - \omega(\tau^n_k\wedge\sigma_L)
    \right|
    \le
    2^{1-n}+c.
  $$
  The number of such $k$ is bounded by a constant $C$
  (it is a constant in the sense of not depending on $n$,
  but it depends on $\omega$ and $L$).
\end{enumerate}
\renewcommand{\theenumi}{(\alph{enumi})}
Now we can bound the last sum in (\ref{eq:sum}) as follows:
\begin{multline*}
  \sum_{k=0}^{\infty}
  \left(
    \omega(\tau^n_{k+1}\wedge\sigma_L) - \omega(\tau^n_k\wedge\sigma_L)
  \right)^2
  \le
  2 (2^{1-n}+c)^2
  +
  3n^2 2^{2n}
  \left(
    2^{2-n}
  \right)^2\\
  +
  3n^2 2^{2n}
  \left(
    2^{2-n}
  \right)^2
  +
  \sum_{m=N}^n
  2m^2
  2^{2m}
  \left(
    2^{1-n}
    +
    2^{-m+2}
  \right)^2
  +
  C (2^{1-n}+c)^2
  \le
  2^{-6}n^4,
\end{multline*}
the last inequality being true from some $n$ on.
Remembering the definitions (\ref{eq:Kolmogorov}) and (\ref{eq:sigma-n})
of $U^n_t$ and $\sigma^n$,
we can see that, from some $n$ on,
$\sigma^n(\omega)>\sigma_L(\omega)$ and, therefore,
\begin{equation}\label{eq:U-1}
  U^n_{\sigma^n\wedge t}(\omega)
  =
  U^n_{t}(\omega)
  \ge
  (S^n_t(\omega))^2
\end{equation}
for all $t\le\sigma_L(\omega)$.

We have shown that, for typical $\omega\in\Omega_c$, from some $n$ on,
(\ref{eq:U-1}) holds for all $t\le\sigma_L(\omega)$.
On the other hand,
for typical $\omega\in\Omega_c$ from some $n$ on
we will have
\begin{equation}\label{eq:U-2}
  \forall t\in[0,\sigma_L(\omega)]:
  U^n_{\sigma^n\wedge t}(\omega) < n^62^{-2n}.
\end{equation}
Indeed, if $V^n_t$ is the process $U^n_{\sigma_L\wedge\sigma^n\wedge t}$
stopped when it reaches the value $n^62^{-2n}$,
the nonnegative capital process
$$
  V_t
  :=
  \sum_{n=1}^{\infty}
  n^{-6}
  2^{2n}
  % U^n_{\sigma^n\wedge t}
  V^n_{t}
$$
will have a finite initial value and satisfy $V_T(\omega)=\infty$
for $\omega$ such that,
for infinitely many $n$,
$\exists t\in[0,\sigma_L(\omega)]:U^n_{\sigma^n\wedge t}(\omega)\ge n^62^{-2n}$.
Combining (\ref{eq:U-1}) and (\ref{eq:U-2}),
we obtain that, for typical $\omega\in\Omega_c$,
from some $n$ on,
we have $\lvert S^n_t\rvert < n^32^{-n}$
for all $t\in[0,\sigma_L(\omega)]$.

We can see that,
for typical $\omega\in\Omega_c$,
the uniform distance between $A^{n-1}$ and $A^n$
does not exceed $n^32^{-n}$ from some $n$ on
provided $\sigma_L=T$.
Therefore, for typical $\omega\in\Omega_c$
the sequence $A^n$ is convergent in the uniform metric
provided $\sigma_L=T$.
Since the union of countably many null sets is a null set,
we can omit ``provided $\sigma_L=T$''.

\section{Multidimensional case}
\label{sec:multidim}

The goal of this section is to establish the existence of quadratic covariation
between different price paths.
We will be using a standard expression of quadratic covariation
in terms of quadratic variation
(the ``polarization identity'',
used in \cite{Follmer:1981}, Remark~1 on p.~147).
\ifFULL\bluebegin
  The term ``polarization identity'' is used by Protter \cite{Protter:2005},
  Section II.6, p.~66.
\blueend\fi

Let $\omega^m\in\Omega_{\psi}$, $m=1,\ldots,M$.
In our informal discussions
we will assume that $\omega^m$, $m=1,\ldots,M$,
are the price paths of all securities traded in a financial market.
We will write $\omega$ for the function whose value $\omega(t)$
at time $t\in[0,T]$ is the vector
$(\omega^1(t),\ldots,\omega^M(t))\in\bbbr^M$;
the set of all such functions $\omega$ with components $\omega^m\in\Omega_{\psi}$
will be denoted $\Omega_{\psi}^M$.

In this section we set $\Omega:=\Omega_{\psi}^M$
(little, however, will depend on this specific definition of $\Omega$,
and the definitions and statements below work for a wide class of $\Omega$).
The $\sigma$-algebra $\FFF^{\circ}_t$, $t\in[0,T]$,
is the smallest $\sigma$-algebra on $\Omega$
that makes all functions $\omega\in\Omega\mapsto\omega^m(s)$,
where $m\in\{1,\ldots,M\}$ and $s\in[0,t]$, measurable.
The $\sigma$-algebras $\FFF_t$, processes, stopping times $\tau$,
and $\sigma$-algebras $\FFF_{\tau}$ on $\Omega$
are defined for the sample space $\Omega$
in the same way as for the sample space $\Omega_{\psi}$:
replacing $\Omega_{\psi}$ by $\Omega$ is the only change.
The other definitions of Section~\ref{sec:definitions}
carry over to the case of $\Omega=\Omega_{\psi}^M$,
with the following changes.
In the definition of simple trading strategies,
the bounded $\FFF_{\tau_n}$-measurable functions $h_n$
now take values in $\bbbr^M$,
and the definition (\ref{eq:simple-capital})
of a simple capital process now becomes
\begin{equation*}
  \K^{G,\alpha}_t(\omega)
  :=
  \alpha
  +
  \sum_{n=1}^{\infty}
  h_n(\omega)
  \cdot
  \bigl(
    \omega(\tau_{n+1}\wedge t)-\omega(\tau_n\wedge t)
  \bigr),
  \quad
  t\in[0,T],
\end{equation*}
where ``$\cdot$'' stands for dot product in $\bbbr^M$.

For all $\omega\in\Omega$ and $n\in\bbbn_0$,
let us define a sequence $\tau^n_k(\omega)$, $k=0,1,2,\ldots$,
as the finite set
% \begin{equation*}   % LMJ
\begin{multline*} % not LMJ
  \bigl\{
    \tau^n_k(\omega^m)\st k\in\bbbn_0,\;m\in\{1,\ldots,M\}
  \bigr\}
  \\  % not LMJ
  \cup
  \bigl\{
    \tau^n_k(\omega^m+\omega^l)\st k\in\bbbn_0,\;
    m\in\{1,\ldots,M\},\;l\in\{1,\ldots,M\}
  \bigr\}
% \end{equation*}    % LMJ
\end{multline*}  % not LMJ
ordered in the increasing order
(with repetitions removed, by the definition of a set);
to make the sequence $\tau^n(\omega)$ infinite,
we complement it by $\infty,\infty,\ldots$ on the right.
It is clear that $\tau^n_k(\omega)$ as functions of $\omega\in\Omega$
are stopping times.

The sequence $(\tau^n_0(\omega),\tau^n_1(\omega),\ldots)$
will be denoted $\tau^n(\omega)$;
for a fixed $\omega$, this is a partition of $[0,T]$.
By $\tau(\omega)$ we denote the sequence of partitions $\tau^n(\omega)$, $n\in\bbbn_0$.
% In this section we never omit $\omega$
% from expressions such as $\tau^n_k(\omega)$ or $\tau(\omega)$
% (as we often did in the previous sections).

\begin{theorem}\label{thm:multidim}
  For typical $\omega\in\Omega=\Omega^M_{\psi}$,
  each $\omega^m$, $m\in\{1,\ldots,M\}$, has quadratic variation along $\tau(\omega)$,
  and each $\omega^m+\omega^l$, $(m,l)\in\{1,\ldots,M\}^2$,
  has quadratic variation along $\tau(\omega)$.
\end{theorem}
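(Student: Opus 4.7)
My plan is to reduce Theorem \ref{thm:multidim} to the one-dimensional Theorem \ref{thm:1D} applied separately to each component $\omega^m$ and each pair-sum $\omega^m + \omega^l$. Two easy reductions are needed. First, any simple capital process trading in a scalar path $\xi$ lifts to a simple capital process in $\Omega_{\psi}^M$ with the same trajectories: for $\xi = \omega^m$ the lifted strategy takes positions in the $m$-th component only, and for $\xi = \omega^m + \omega^l$ it takes equal positions in components $m$ and $l$. Consequently, null events detected in the scalar setting lift to null events in $\Omega_{\psi}^M$. Second, $\omega^m$ automatically lies in $\Omega_{\psi}$, but $\omega^m + \omega^l$ need not lie in any natural $\Omega_{\tilde\psi}$ (since the components can cancel, making the supremum of $|\omega^m + \omega^l|$ much smaller than those of the individual paths). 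I would handle this by localising with the stopping time $\sigma_L$ of (\ref{eq:sigma-L}) applied to each component: on the set where $\max_{m'} \sup_{s \le T} |\omega^{m'}(s)| \le 2^L$, the path $\omega^m + \omega^l$ has jumps bounded by the constant $2\psi(2^L)$ and hence lies in $\Omega_{c}$ with $c := 2\psi(2^L)$, so Theorem \ref{thm:1D} applies; a countable union over $L \in \bbbn$ removes the localisation.

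The substantive work is that Theorem \ref{thm:1D} as stated yields quadratic variation of a scalar path $\xi$ along its own partition sequence $\tau(\xi)$, whereas Theorem \ref{thm:multidim} is stated along the refined joint partition $\tau(\omega) \supseteq \tau(\xi)$. I would therefore rerun the proof of Theorem \ref{thm:1D} with $\tau^n(\omega)$ substituted for $\tau^n(\xi)$ throughout: redefine $A^n_t := \sum_k (\xi(\tau^n_k(\omega) \wedge t) - \xi(\tau^n_{k-1}(\omega) \wedge t))^2$, set $S^n := A^n - A^{n-1}$, and define $U^n$ exactly as in (\ref{eq:Kolmogorov}). The derivation (\ref{eq:difference}) still realises $S^n$ as a simple capital process trading only in $\xi$; and the bound $|\xi(\tau^n_k(\omega)) - \xi(\tau)| \le 2 \cdot 2^{-n+1}$, where $\tau$ is the last element of $\tau^{n-1}(\omega)$ at or before $\tau^n_k(\omega)$, remains valid because $\tau^{n-1}(\xi) \subseteq \tau^{n-1}(\omega)$ forces $\xi$ to stay in a single $(n-1)$-dyadic cell throughout $(\tau, \tau^n_k(\omega))$. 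The five-way classification of intervals in the final estimate, now based on whether $\tau^n_{k+1}(\omega) \in \tau^{n-1}(\omega)$ and on the size of the jump of $\xi$ at $\tau^n_{k+1}(\omega)$, is controlled by Corollary \ref{cor:Stricker} applied in turn to each $\omega^{m'}$ and each $\omega^{m'} + \omega^{l'}$; this inflates each count by an $O(M + M^2)$ factor, comfortably absorbed by the $n^4$ slack in $U^n$.

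The main obstacle will be the bookkeeping in this modified classification: one must verify that the extra partition points introduced by other components and by pair-sums do not spoil the telescoping behind (\ref{eq:difference}), and that the enlarged combinatorial counts still fit within the Kolmogorov-type buffer. Once this is verified, summing the resulting nonnegative capital processes over $m \in \{1,\ldots,M\}$, $(m,l) \in \{1,\ldots,M\}^2$, and $L \in \bbbn$ with suitable normalising weights yields Theorem \ref{thm:multidim}.
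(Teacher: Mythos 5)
Your proposal is correct and takes essentially the same route as the paper: the paper packages your ``rerun the proof of Theorem~\ref{thm:1D} along the joint partition'' step as the abstract notion of an array of stopping times of dyadic type (Proposition~\ref{prop:1D}), whose three conditions are exactly your checks (nested/exhausting, the oscillation bound between refined partition points inherited from $\tau^{n-1}(\xi)\subseteq\tau^{n-1}(\omega)$, and the polynomially inflated crossing counts), and it uses the same ``composite security'' lifting of positions for $\omega^m+\omega^l$. If anything, your localisation via $\sigma_L$ applied componentwise is more careful than the paper's bald assertion that $\omega^m+\omega^l\in\Omega_{2\psi}$, which for non-constant $\psi$ can fail owing to the cancellation you point out, but is harmless for exactly the reason you give (the underlying argument is anyway reduced to constant $\psi=c$ by stopping).
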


Before proving Theorem~\ref{thm:multidim},
let us see how to use it to define quadratic covariation.
We will use the notation $[\omega^m]$
for the quadratic variation of $\omega^m$,
$m\in\{1,\ldots,M\}$,
and $[\omega^m+\omega^l]$ for the quadratic variation of $\omega^m+\omega^l$,
$(m,l)\in\{1,\ldots,M\}^2$,
along $\tau(\omega)$.
Now we define the quadratic covariation processes
between different price paths by
\begin{equation}   % LMJ
% \begin{multline} % not LMJ
  \label{eq:covariation}
  [\omega]^{m,l}_t
  :=
  \frac12
  \left(
    [\omega^m+\omega^l]_t
    -
    [\omega^m]_t
    -
    [\omega^l]_t
  \right),
  \qquad  % LMJ
  % \\   % not LMJ
  t\in[0,T],\;
  (m,l)\in\{1,\ldots,M\}^2;
\end{equation}    % LMJ
% \end{multline}  % not LMJ
they exist for typical $\omega\in\Omega$.
Notice that $[\omega]^{m,m}=[\omega^m]$ and that
$$
  [\omega]^{m,l}_t
  =
  \left(
    [\omega]^{m,l}
  \right)\cont_t
  +
  \sum_{s\in(0,t]}
  \Delta\omega^m(s)
  \Delta\omega^l(s),
$$
where $f\cont$ stands for the continuous part of a function $f$.

\subsection*{Proof of Theorem~\ref{thm:multidim}}

We have already proved that $\omega^m$ have quadratic variation,
but along a different sequence of partitions,
$\tau(\omega^m)$ rather than $\tau(\omega)$.
Now we stop and change our positions
as soon as any security in the market or the sum of two securities
significantly change their value,
so we have one sequence of partitions for the whole market.
But the argument of Section~\ref{sec:QV-proof} still works,
as it depends on relatively few properties of the sequence of partitions,
which are still satisfied.

% Let us step back to the framework of Section~\ref{sec:QV-proof}
To see what properties of the array $\tau^n_k$ of stopping times were used
in Section~\ref{sec:QV-proof},
let $\iota^n_k$ ($n\in\bbbn_0$, $k\in\bbbn_0$)
be a general array of stopping times on $\Omega$ of the same kind as $\tau^n_k$:
$\iota^n(\omega)$, $n\in\bbbn_0$, are nested partitions of $[0,T]$ for each $\omega\in\Omega$
and all $\iota^n_k$ are stopping times;
in particular, $0=\iota^n_0\le\iota^n_1\le\cdots$
and all $\iota^n_k$ take values in $[0,T]\cup\{\infty\}$.
We say that the sequence $\iota$ of random (i.e., depending on $\omega\in\Omega$)
nested partitions $\iota^n$ is \emph{of dyadic type} for a process $W$
(such as $\omega^m$ or $\omega^m+\omega^l$) if
% $\iota(\omega)$ is nested for each $\omega\in\Omega$ and
there exist a polynomial $p$ and a constant $C>0$ such that:
\begin{enumerate}
  % \item\label{it:nested-exhausts}
  % For each $\omega\in\Omega_{\psi}$,
  % the sequence $\iota(\omega)$ of partitions is nested
  % and exhausts $\omega$.
\item\label{it:below}
  For all $\omega\in\Omega$, all $n\in\bbbn_0$, and all $0\le s<t\le T$
  such that $\lvert W_t-W_s\rvert>C2^{-n}$,
  there exists $k$ such that $\iota^n_k(\omega)\in(s,t]$.
\item\label{it:above}
  For typical $\omega\in\Omega$, from some $n$ on,
  the number of $k$ such that $\iota^n_k(\omega)<\infty$
  is at most $p(n)2^{2n}$.
\end{enumerate}
Intuitively, \ref{it:below} says that the partitions are sufficiently fine,
and \ref{it:above} says that they are not too fine.
Condition~\ref{it:below} implies that $\iota(\omega)$ should exhaust $\omega$
and can be regarded as a quantitative version of this requirement.

A process $W$ \emph{has moderate jumps} if
there exists a nondecreasing function $\phi:[0,\infty)\to(0,\infty)$ such that,
for each $\omega\in\Omega$,
\begin{equation*}
  \forall t\in(0,T]:
  \left|
    \Delta W_t(\omega)
  \right|
  \le
  \phi
  \left(
    \sup_{s\in[0,t)}
    \lvert W_s(\omega)\rvert
  \right)
\end{equation*}
(cf.\ (\ref{eq:moderate})).
We let $W(\omega)$, where $W$ is a process and $\omega\in\Omega$,
stand for the function $t\in[0,T]\mapsto W_t(\omega)\in\bbbr$.
The quadratic variation of $W(\omega)$ along $\iota(\omega)$ is defined as before
(cf.\ (\ref{eq:A1}) and (\ref{eq:A2})) with $W(\omega)$ in place of $\omega$,
namely, as the uniform limit as $n\to\infty$ of
\begin{equation*}
  A^{n,\tau}_t(\omega)
  :=
  \sum_{k=1}^{\infty}
  \left(
    W_{\iota^n_k\wedge t}(\omega)
    -
    W_{\iota^n_{k-1}\wedge t}(\omega)
  \right)^2.
\end{equation*}

We saw in Section~\ref{sec:QV-proof} that the sequence $\tau(\omega)$
of partitions defined earlier in this section is of dyadic type for all $\omega\in\Omega$
(with $C:=2$ and $p(n):=(M^2+M)3n^2$ sufficient).
This shows, in combination with Proposition~\ref{prop:invariance} below,
that the following proposition generalizes Theorem~\ref{thm:1D}.

\begin{proposition}\label{prop:1D}
  % Let $\psi:[0,\infty)\to(0,\infty)$ be a nondecreasing function.
  If an array $\iota^n_k$ of stopping times on $\Omega$ is of dyadic type
  for a simple capital process $W$ that has moderate jumps,
  $W(\omega)$ has quadratic variation along $\iota(\omega)$
  for typical $\omega\in\Omega$.
\end{proposition}

\ifJOURNAL
  \begin{pf}
\fi
\ifnotJOURNAL
  \begin{proof}
\fi
  It is easy to check that the proof of Theorem~\ref{thm:1D},
  with suitable changes,
  still works in our current more general situation;
  now we consider paths $W(\omega)$ instead of $\omega$.

  Condition~\ref{it:below} (with $C=2$) % of the definition
  was used in establishing the nonnegativity of the simple capital process (\ref{eq:Kolmogorov})
  over $[0,\sigma^n\wedge T]$
  and our upper bound on the infinite sum (\ref{eq:sum})
  in (\ref{eq:Kolmogorov}).
  For our arguments to go through,
  the definition of the simple capital process (\ref{eq:Kolmogorov})
  should be modified:
  the addend $2^{-2n+8}c^2$ should be replaced by a constant
  (depending only on $c$ and $C$) times $2^{-2n}$,
  and the polynomial $n^4$ in the addend $n^42^{-2n}$
  should be replaced by a different polynomial (depending on $p$).
  To upper bound (\ref{eq:sum}),
  we have used (in the cases \ref{it:first}--\ref{it:large-jump} of Section~\ref{sec:QV-proof})
  the inequality
  $$
    \left|
      \omega(\tau^n_{k+1}(\omega)) - \omega(\tau^n_k(\omega))
    \right|
    \le
    2^{1-n}
    +
    \left|
      \Delta\omega(\tau^n_{k+1})
    \right|
  $$
  assuming $\tau^n_{k+1}(\omega)<\sigma_L(\omega)$.
  This inequality continues to hold with $2^{1-n}$ replaced by $C2^{-n}$
  in our current situation:
  indeed, condition~\ref{it:below} % of the definition of being of dyadic type
  implies that 
  $$
    \left|
      \omega(\iota^n_{k+1}(\omega)-) - \omega(\iota^n_k(\omega))
    \right|
    \le
    C 2^{-n}.
  $$

  Condition~\ref{it:above}
  % (with $p(n):=3n^2$)
  can be used in the cases \ref{it:not-in}--\ref{it:intermediate-jump} in Section~\ref{sec:QV-proof}
  to bound the number of $k$ covered by each of those cases.
  % of the list following (\ref{eq:sum}).
  We need the weaker requirement that $\iota^n_k<\infty$ for only finitely many $k$
  also to satisfy a requirement in the definition of simple trading strategies.
\ifJOURNAL
  \qed
  \end{pf}
\fi
\ifnotJOURNAL
  \end{proof}
\fi

Let $m,l\in\{1,\ldots,M\}$.
The processes $\omega^m$ and $\omega^m+\omega^l$ have moderate jumps
(with $\phi:=\psi$ and $\phi:=2\psi$, respectively).
% The argument in the previous paragraph can be applied
% to $\omega^m+\omega^l$ in place of $\omega^m$
% and $\Omega_{2\psi}$ in place of $\Omega_{\psi}$
% (however, the constant $D$ will have to be multiplied by 2).
Both $\omega^m$ and $\omega^m+\omega^l$ are simple capital processes.
% We can regard $\omega^m+\omega^l$ as the price path of one ``composite security'':
% taking position $h$ in the composite security
% is the same thing as taking position $h$ in $\omega^m$
% and taking position $h$ in $\omega^l$.
We can see that, for typical $\omega\in\Omega$,
$\omega^m$ and $\omega^m+\omega^l$ have quadratic variation along $\tau(\omega)$.

The proof of Theorem~\ref{thm:multidim} is now complete.
However, there remains the question of invariance of our definitions;
e.g., is the quadratic variation $A^{\tau(\omega)}(\omega^m)$ of $\omega^m$
along $\tau(\omega)$ (our new definition)
the same function as the quadratic variation $A^{\tau(\omega^m)}(\omega^m)$ of $\omega^m$
along $\tau(\omega^m)$ (our old definition)?
The proof of Theorem~\ref{thm:1D} shows that they are.
For simplicity, we will only spell out the argument
in the framework of Proposition~\ref{prop:1D}.

\begin{proposition}\label{prop:invariance}
  % Let $\psi:[0,\infty)\to(0,\infty)$ be a nondecreasing function.
  If arrays $\iota^n_k$ and $\kappa^n_k$
  of stopping times on $\Omega$ are of dyadic type for a simple capital process $W$,
  $W(\omega)$ has the same quadratic variation
  along $\iota(\omega)$ and along $\kappa(\omega)$
  for typical $\omega\in\Omega$.
\end{proposition}

\ifJOURNAL
  \begin{pf}
\fi
\ifnotJOURNAL
  \begin{proof}
\fi
  Without loss of generality we can assume that $\iota$ is nested in $\kappa$:
  for all $\omega\in\Omega$, $n\in\bbbn_0$, and $k\in\bbbn_0$,
  there exists $k'\in\bbbn_0$ such that
  $\kappa^n_{k'}(\omega)=\iota^n_{k}(\omega)$.
  (Indeed, this special case will imply that general $\iota$ and $\kappa$
  lead to the same quadratic variation as their union.)
  The argument in the proof of Theorem~\ref{thm:1D}
  (applied to $\iota(\omega)$ and $\kappa(\omega)$
  in place of $\tau^{n-1}(\omega)$ and $\tau^{n}(\omega)$,
  so that $S^n_t(\omega):=A^{n,\kappa}_t(\omega)-A^{n,\iota}_t(\omega)$)
  shows that, for typical $\omega\in\Omega$, the uniform distance
  between $A^{n,\iota}(\omega)$ and $A^{n,\kappa}(\omega)$
  converges to 0 (exponentially fast) as $n\to\infty$.
  Therefore, the uniform limits of
  $A^{n,\iota}(\omega)$ and $A^{n,\kappa}(\omega)$
  coincide.
\ifJOURNAL
  \qed
  \end{pf}
\fi
\ifnotJOURNAL
  \end{proof}
\fi

\ifFULL\bluebegin
  Fix a countable set of sequences of partitions
  satisfying some regularity conditions
  (state exactly which).
  Interesting example (following \cite{Wald:1937}):
  the set of all sequences that can be described in some rich formal language.
  % \ifFULL\bluebegin
    (It is tempting to take the set of all sequences
    that can be described in English,
    but this set is ill-defined).
  % \blueend\fi
\blueend\fi

\ifFULL\bluebegin
  Now, that we have covariations between different securities in our market,
  can we state the game-theoretic CAPM \cite{\GTPI,\GTPII}
  without using nonstandard analysis?
\blueend\fi

\section{F\"ollmer's and Norvai{\u s}a's quadratic variation}
\label{sec:other-definitions}

In this section we adapt F\"ollmer's \cite{Follmer:1981}
and Norvai{\u s}a's \cite{Norvaisa:2001} definitions of quadratic variation
to our framework (in particular, to our bounded time interval $[0,T]$).
\ifFULL\bluebegin
  However, we always explain how the original definitions are different.
\blueend\fi
Let $\omega\in D[0,T]$ and $\pi$ be a nested sequence of partitions that exhausts $\omega$.

We say that $\omega$ \emph{has F\"ollmer's quadratic variation along $\pi$}
if
% for each $T\in(0,\infty)$
the sequence of finite measures
$$
  \xi^n
  :=
  \sum_{k=1}^{\infty}
  \left(
    \omega(\pi_{k}^n\wedge T)-\omega(\pi_{k-1}^n\wedge T)
  \right)^2
  \delta_{\omega(\pi_{k-1}^n\wedge T)}
$$
($\delta_t$ being the Dirac measure at $t$)
on $[0,T]$ weakly converges to a finite measure $\xi$ on $[0,T]$
whose discrete part is given by the squared jumps of $\omega$:
\begin{equation}\label{eq:V-Follmer}
  V(t)
  =
  V\cont(t)
  +
  \sum_{s\in(0,t]}(\Delta\omega(s))^2,
\end{equation}
where $V$ is the distribution function of $\xi$
and $V\cont$ is its continuous part.
We will say that $V$ is \emph{F\"ollmer's quadratic variation
of $\omega$ along $\pi$}.
% (the dependence of $[\omega]$ on $\pi$ is not reflected in this notation)

F\"ollmer's \cite{Follmer:1981} original definition of quadratic variation
is different from the definition above in the following respects:
it does not require the sequence $\pi$ of partitions to be nested;
it does not require $\pi$ to exhaust $\omega$;
it requires $\pi$ to be dense;
it assumes the unbounded time interval $[0,\infty)$.
F\"ollmer uses the notation $[\omega,\omega]$
for the function $V$ and does not use the term ``quadratic variation''
in respect of $V$.

We say that $\omega$ \emph{has Norvai{\u s}a's quadratic variation along $\pi$}
if there exists a function $V\in D[0,T]$ such that, for all $0\le s<t\le T$,
\begin{equation}\label{eq:Norvaisa}
  V(t)-V(s)
  =
  \lim_{n\to\infty}
  \sum_{k=1}^{\infty}
  \left(
    \omega(s\vee\pi_{k}^n\wedge t)-\omega(s\vee\pi_{k-1}^n\wedge t)
  \right)^2
\end{equation}
and, for any $t\in(0,T]$,
\begin{equation}\label{eq:V-Norvaisa}
  \Delta V(t)
  =
  (\Delta\omega(t))^2.
\end{equation}
The function $V$ is called \emph{Norvai{\u s}a's quadratic variation
of $\omega$ along $\pi$}.

Norvai{\u s}a's (\cite{Norvaisa:2001}, p.~1) original definition
is different from the definition in the previous paragraph
in the following respects:
it does not require $\pi$ to exhaust $\omega$;
it requires $\pi$ to be dense;
it only requires $\omega$ to be a regulated, rather than c\`adl\`ag, function.
Norvai{\u s}a calls $V$ the bracket function of $\omega$.

We can also weaken Norvai{\u s}a's requirements to the function $V$.
Namely, we will say that $\omega$ \emph{has weak quadratic variation along $\pi$}
if there exists a function $V\in D[0,T]$ such that
\begin{equation}\label{eq:weak}
  V(t)
  =
  \lim_{n\to\infty}
  \sum_{k=1}^{\infty}
  \left(
    \omega(\pi_{k}^n\wedge t)-\omega(\pi_{k-1}^n\wedge t)
  \right)^2
\end{equation}
% (this is (\ref{eq:Norvaisa}) with $s:=0$)
at all points $t\in[0,T]$ of continuity of $V$
and (\ref{eq:V-Norvaisa}) holds for all $t\in(0,T]$.
The function $V$ is then called the \emph{weak quadratic variation of $\omega$ along $\pi$}.

The last two notions of quadratic variation are equivalent
to the one defined in Section~\ref{sec:QV-def}:

\begin{proposition}\label{prop:equivalence}
  Let $\omega\in D[0,T]$
  and $\pi$ be a nested sequence of partitions that exhausts $\omega$.
  The following three conditions are equivalent:
  \begin{enumerate}
  \item\label{it:my}
    the quadratic variation of $\omega$ along $\pi$ exists
    (in the sense of Section~\ref{sec:QV-def})
    and is $V=A^{\pi}$;
  \item\label{it:Norvaisa}
    Norvai{\u s}a's quadratic variation of $\omega$ along $\pi$ exists
    and is $V$;
  \item\label{it:weak}
    the weak quadratic variation of $\omega$ along $\pi$ exists
    and is $V$.
  \end{enumerate}
\end{proposition}

\ifJOURNAL
  \begin{pf}
\fi
\ifnotJOURNAL
  \begin{proof}
\fi
  First we assume condition~\ref{it:my} and prove condition~\ref{it:Norvaisa}.
  We know that (\ref{eq:Norvaisa}) (with $V:=A^{\pi}$) holds for $s=0$,
  so we assume $s>0$.
  In this case,
  (\ref{eq:Norvaisa}) is equivalent to
  % \begin{equation}  % LMJ
  \begin{multline}    % not LMJ
    \label{eq:to}
    \left(
      \omega(\pi^n_{\overline{k}(s,n)})
      -
      \omega(\pi^n_{\underline{k}(s,n)})
    \right)^2
    -
    \left(
      \omega(s)
      -
      \omega(\pi^n_{\underline{k}(s,n)})
    \right)^2
    \\   % not LMJ
    -
    \left(
      \omega(\pi^n_{\overline{k}(s,n)})
      -
      \omega(s)
    \right)^2
    \to
    0
    \quad
    (n\to\infty),
  % \end{equation}  % LMJ
  \end{multline}    % not LMJ
  where $\underline{k}(s,n)$ is the largest $k$ satisfying $\pi^n_k\le s$
  and $\overline{k}(s,n)$ is the smallest $k$ satisfying $\pi^n_k\ge s$
  ((\ref{eq:to}) assumes that $\omega$ is not constant over $[s,t]$;
  the simple case where it is has to be considered separately).
  We can rewrite (\ref{eq:to}) as
  \begin{equation*}
    \left(
      \omega(\pi^n_{\overline{k}(s,n)})
      -
      \omega(s)
    \right)
    \left(
      \omega(s)
      -
      \omega(\pi^n_{\underline{k}(s,n)})
    \right)
    \to
    0,
  \end{equation*}
  which immediately follows from Lemma~\ref{lem:osc}:
  if $s\in(\underline{k}(s,n),\overline{k}(s,n))$ for all $n$
  (this is the non-trivial case),
  the first factor stays bounded and the second tends to 0 as $n\to\infty$.
  This proves (\ref{eq:Norvaisa}),
  and (\ref{eq:V-Norvaisa}) holds by Lemma~\ref{lem:increasing}.

  It is obvious that \ref{it:Norvaisa} implies \ref{it:weak}.

  Finally, we assume condition~\ref{it:weak} and prove \ref{it:my}.
  We know that $A^{n,\pi}\to V$ at all points of continuity of $V$
  as $n\to\infty$,
  and our goal is to prove that $A^{n,\pi}\to V$ uniformly.
  By the compactness of $[0,T]$,
  it suffices to prove that, for each point $t^*\in[0,T]$,
  $A^{n,\pi}\to V$ uniformly in some neighbourhood of $t^*$.
  Fix such $t^*$.
  We consider two cases separately:
  \begin{itemize}
  \item
    Suppose $\Delta V(t^*)=0$.
    Let $\epsilon\in(0,1)$.
    Choose $t'<t^*$ and $t''>t^*$ such that $\Delta V(t')=0$, $\Delta V(t'')=0$,
    and $V(t)$ belongs to $(V(t^*)-\epsilon,V(t^*)+\epsilon)$
    for all $t\in[t',t'']$.
    From some $n$ on, $A^{n,\pi}_{t'}$
    belongs to $(V(t')-\epsilon,V(t')+\epsilon)$
    and $A^{n,\pi}_{t''}$
    belongs to $(V(t'')-\epsilon,V(t'')+\epsilon)$.
    Therefore, from some $n$ on,
    both $A^{n,\pi}_{t'}$ and $A^{n,\pi}_{t''}$
    belong to $(V(t^*)-2\epsilon,V(t^*)+2\epsilon)$.
    From the proof of Lemma~\ref{lem:increasing}
    we know that, from some $n$ on, $A^{n,\pi}_t$ never drops by more than $\epsilon$
    as $t$ increases
    (see (\ref{eq:strange}), which we show to be impossible from some $n$ on
    when $t_1<t_2$; the simple argument in that proof only depends on conditions
    that are satisfied in our current context).
    Therefore, from some $n$ on,
    $A^{n,\pi}_{t}$ belongs to $(V(t^*)-3\epsilon,V(t^*)+3\epsilon)$
    for all $t\in[t',t'']$.
    We can see that, from some $n$ on,
    $A^{n,\pi}_{t}$ belongs to $(V(t)-4\epsilon,V(t)+4\epsilon)$
    for all $t\in[t',t'']$.
    % which contradicts (\ref{eq:contradiction}).
  \item
    Suppose that $\Delta V(t^*)\ne0$.
    Let $\epsilon\in(0,1)$.
    Choose $t'<t^*$ and $t''>t^*$ such that $\Delta V(t')=0$, $\Delta V(t'')=0$,
    $V(t)$ belongs to the interval $(V(t^*-)-\epsilon,V(t^*-)+\epsilon)$
    for all $t\in[t',t^*)$,
    and $V(t)$ belongs to $(V(t^*)-\epsilon,V(t^*)+\epsilon)$
    for all $t\in[t^*,t'']$.
    From some $n$ on,
    \begin{align}
      A^{n,\pi}_{t'}
      &\in
      \bigl(
        V(t^*-)-2\epsilon,V(t^*-)+2\epsilon
      \bigr),
      \label{eq:difficult-1a}\\
      A^{n,\pi}_{t''}
      &\in
      \bigl(
        V(t^*)-2\epsilon,V(t^*)+2\epsilon
      \bigr).
      \label{eq:difficult-1b}
    \end{align}
    From some $n$ on there is a $k$ such that $t^*=\pi^n_k$
    and $\pi^n_{k-1}\in(t',t^*)$
    (unless $\omega$ is constant over $(t',t^*)$,
    in which case $\pi^n_{k-1}$ should be replaced by $t'$ in the rest of this proof);
    and from some $n$ on, $\osc_{\pi^n}(\omega)<\epsilon$
    (Lemma~\ref{lem:osc}).
    Since $\Delta V(t^*)=(\Delta\omega(t^*))^2$,
    for such $n$ we have
    \begin{equation}\label{eq:difficult-2}
      A^{n,\pi}_{\pi^n_k}
      -
      A^{n,\pi}_{\pi^n_{k-1}}
      =
      \left(
        \omega(t^*) - \omega(\pi^n_{k-1})
      \right)^2
      >
      \Delta V(t^*) - 2\Delta_{\omega}\epsilon,
    \end{equation}
    where $\Delta_{\omega}:=\sup_{t\in(0,T]}\lvert\Delta\omega(t)\rvert$.
    We know that, from some $n$ on,
    $A^{n,\pi}_t$ never drops by more than $\epsilon$ as $t$ increases.
    According to (\ref{eq:difficult-1a}), (\ref{eq:difficult-1b}), and (\ref{eq:difficult-2}),
    we then have
    \begin{equation*}
      \begin{cases}
        A^{n,\pi}_{\pi^n_k}
        -
        A^{n,\pi}_{\pi^n_{k-1}}
        >
        \Delta V(t^*) - 2\Delta_{\omega}\epsilon\\
        A^{n,\pi}_{\pi^n_k}
	<
	V(t^*) + 3\epsilon\\
        A^{n,\pi}_{\pi^n_{k-1}}
        >
	V(t^*-) - 3\epsilon.
      \end{cases}
    \end{equation*}
    This system of three inequalities immediately implies
    \begin{equation*} % \label{eq:difficult-3}
      \begin{cases}
        A^{n,\pi}_{\pi^n_k}
	>
	V(t^*) - 2\Delta_{\omega}\epsilon - 3\epsilon\\
        A^{n,\pi}_{\pi^n_{k-1}}
	<
        V(t^*-) + 2\Delta_{\omega}\epsilon + 3\epsilon.
      \end{cases}
    \end{equation*}
    Combining this with (\ref{eq:difficult-1a}), (\ref{eq:difficult-1b}),
    % (\ref{eq:difficult-3})
    and $\osc_{\pi^n}(\omega)<\epsilon$,
    we can see that
    $$
      \begin{cases}
        A^{n,\pi}_t\in(V(t^*)-2\Delta_{\omega}\epsilon-4\epsilon,V(t^*)+3\epsilon)
	  & \text{if $t\in[t^*,t'']$}\\
        A^{n,\pi}_t\in(V(t^*-)-3\epsilon,V(t^*-)+2\Delta_{\omega}\epsilon+4\epsilon)
	  & \text{if $t\in[t',t^*)$}.
      \end{cases}
    $$
    Therefore, from some $n$ on, we have
    $$
      \left|
        A^{n,\pi}_t - V(t)
      \right|
      <
      2\Delta_{\omega}\epsilon + 5\epsilon
    $$
    for all $t\in[t',t'']$.
    % This contradicts (\ref{eq:contradiction}).
  \end{itemize}
  % We can see that both cases leads to a contradiction.
  In both cases $A^{n,\pi}$ converges to $V$ uniformly
  in some neighbourhood of $t^*$,
  which completes the proof.
\ifJOURNAL
  \qed
  \end{pf}
\fi
\ifnotJOURNAL
  \end{proof}
\fi

On the other hand, F\"ollmer's notion of quadratic variation is different
and even anomalous unless $\pi$ is dense.
Set, e.g.,
$$
  \omega(t)
  :=
  \begin{cases}
    0 & \text{if $t\in[0,T/2)$}\\
    1 & \text{if $t\in[T/2,T)$}\\
    0 & \text{if $t=T$}
  \end{cases}
$$
and consider the sequence of nested partitions
$$
  \pi^1=\pi^2=\cdots
  :=
  (0,T/2,T,\infty,\infty,\ldots),
$$
which exhausts $\omega$.
The three definitions whose equivalence is asserted in Proposition~\ref{prop:equivalence}
give the same quadratic variation $V$ of $\omega$ along $\pi$,
$$
  V(t)
  :=
  \begin{cases}
    0 & \text{if $t\in[0,T/2)$}\\
    1 & \text{if $t\in[T/2,T)$}\\
    2 & \text{if $t=T$},
  \end{cases}
$$
whereas F\"ollmer's does not exist:
the first part of the definition gives
$$
  V(t)
  :=
  \begin{cases}
    1 & \text{if $t\in[0,T/2)$}\\
    2 & \text{if $t\in[T/2,T]$},
  \end{cases}
$$
which fails to satisfy (\ref{eq:V-Follmer}) at $t=T$
($V(0)\ne0$ also looks anomalous).
The anomalies disappear when $\pi$ is dense:

\begin{proposition}\label{prop:non-equivalence}
  Let $\omega\in D[0,T]$
  and $\pi$ be a dense nested sequence of partitions.
  % that exhausts $\omega$
  \begin{enumerate}
  \item\label{it:direction1}
    If the quadratic variation $V$ of $\omega$ along $\pi$ exists
    in the sense of any of the three definitions of Proposition~\ref{prop:equivalence},
    % and $\pi$ is dense
    $V$ is also F\"ollmer's quadratic variation of $\omega$ along $\pi$.
  \item\label{it:direction2}
    If F\"ollmer's quadratic variation $V$ of $\omega$ along $\pi$ exists,
    $V$ is also the quadratic variation of $\omega$ along $\pi$
    in the sense of the three definitions of Proposition~\ref{prop:non-equivalence}.
  \end{enumerate}
\end{proposition}

\ifJOURNAL
  \begin{pf}
\fi
\ifnotJOURNAL
  \begin{proof}
\fi
  We start from \ref{it:direction1}.
  Let $V$ be the quadratic variation of $\omega$ along $\pi$
  in the sense of the definitions of Proposition~\ref{prop:equivalence}.
  Since (\ref{eq:V-Follmer}) is obviously equivalent to (\ref{eq:V-Norvaisa}),
  the definition of F\"ollmer's quadratic variation
  % and (\ref{eq:weak})
  shows that it suffices to prove
  \begin{equation}\label{eq:Follmer}
    \sum_{k\in\bbbn:\pi^n_{k-1}\le t}
    \left(
      \omega(\pi_{k}^n\wedge T)-\omega(\pi_{k-1}^n)
    \right)^2
    \to
    V(t)
  \end{equation}
  as $n\to\infty$,
  where $t\in[0,T]$ is such that $V$ is continuous at $t$,
  i.e., by (\ref{eq:V-Norvaisa}), $\omega$ is continuous at $t$.
  \ifFULL\bluebegin
    (The continuity of $V$ at $t$ is part of the definition of weak convergence.)
  \blueend\fi
  Comparing (\ref{eq:Follmer}) with (\ref{eq:weak}),
  we can see that, furthermore, it suffices to prove
  \begin{equation}\label{eq:remainder}
    \left(
      \omega(t)
      -
      \omega(\pi^n_{\underline{k}(t,n)})
    \right)^2
    -
    \left(
      \omega(\pi^n_{\underline{k}(t,n)+1}\wedge T)
      -
      \omega(\pi^n_{\underline{k}(t,n)})
    \right)^2
    \to
    0,
  \end{equation}
  which immediately follows from the continuity of $\omega$ at $t$
  provided $\pi$ is dense.

  The argument in the previous paragraph
  (cf.\ (\ref{eq:remainder}))
  also establishes \ref{it:direction2}.
  % (in the opposite direction)
  % shows that (\ref{eq:weak}) holds for $t$ satisfying $\Delta V(t)=0$
  \ifFULL\bluebegin
    The condition that $\pi$ be dense might be redundant.
    This is the old argument (with a hole):
    the subtrahend in (\ref{eq:remainder}) tends to zero because $\pi$ exhausts $\omega$,
    and the minuend tends to zero
    because otherwise there would have been a jump of $\omega$ at a point $t'>t$
    such that $\omega$ is constant on $[t,t')$
    and $(t,t')$ does not contain any elements of any of $\pi^n$,
    which would have led to a jump of $V$ at $t$
    (hole: there can be a jump to the left of $t$ is $\omega$ is constant
    in a neighbourhood to the left of $t$).
  \blueend\fi
\ifJOURNAL
  \qed
  \end{pf}
\fi
\ifnotJOURNAL
  \end{proof}
\fi

\begin{remark}
  F\"ollmer's \cite{Follmer:1981} and Norvai{\u s}a's \cite{Norvaisa:2001}
  condition that the sequence of partitions $\pi$ should be dense
  would not in fact be a big obstacle in this paper:
  to make the random sequence of partitions $\tau^n$
  formed by the stopping times $\tau^n_k$
  (as defined in Section~\ref{sec:QV-proof} or Section~\ref{sec:multidim})
  dense
  we can simply complement $\tau^n(\omega)$
  by the points $T k 2^{-n}$, $k=1,\ldots,2^{n}-1$.
  The properties \ref{it:below}--\ref{it:above} of Section~\ref{sec:multidim}
  % (or \ref{it:nested-exhausts-prime}--\ref{it:above-prime})
  % of the sequences of stopping times $\tau^n_k(\omega)$
  % listed in the definition of sequences of partitions of dyadic type
  will be still satisfied after this extension and the sequence will be still nested.
  This step is, however, awkward, and we avoid it.
\end{remark}

\section{Implications of the existence of quadratic variation}
\label{sec:implications}

This section reviews some known implications
(the simpler ones from \cite{Follmer:1981} and \cite{Norvaisa:2001})
of the existence of quadratic variation.
Let $\omega\in D[0,T]$ and $\pi=(\pi^n)$ be a nested sequence of partitions
that exhausts $\omega$.
We are mainly interested in the case
where $\pi=(\pi^n(\omega))$ is the sequence of partitions
formed by the stopping times $\tau^n_k$,
as defined in Section~\ref{sec:QV-proof}.
We know that in this case the quadratic variation of $\omega$ exists
unless $\omega$ is in a null set (Theorem~\ref{thm:1D}).
To simplify notation,
we do not consider the multidimensional case
(Section~\ref{sec:multidim} and Theorem~\ref{thm:multidim}).

Throughout this section, we fix $\omega\in D[0,T]$ and $\pi=(\pi^n)$
such that $\pi$ is a nested sequence of partitions that exhausts $\omega$
and the quadratic variation of $\omega$ along $\pi$ exists
(cf.\ Proposition~\ref{prop:equivalence}).
As in Section~\ref{sec:multidim},
we use square brackets to denote the quadratic variation along $\pi$,
when it exists;
e.g., $[\omega]$ is the quadratic variation of $\omega$ along $\pi$.
A minor difference of the results that we state in this section
from the original ones stated in \cite{Follmer:1981} and \cite{Norvaisa:2001}
is that we do not assume that $\pi$ is dense:
the assumption of denseness becomes redundant under our assumptions
since we can always remove all intervals of constancy
from the domain of $\omega$.
\Extra{This might lead to repeated entries in $\pi^n$, of course.}

\subsection*{Stochastic integration and It\^o's lemma}

Suppose $f:\bbbr\to\bbbr$.
For each $t\in[0,T]$ and $\omega\in D[0,T]$,
define
\begin{equation}\label{eq:stochastic-integral}
  \int_0^t f(\omega(s-)) \dd\omega(s)
  :=
  \lim_{n\to\infty}
  \sum_{k=0}^{\infty}
  f(\omega(\pi^n_k(\omega)\wedge t))
  \left(
    \omega(\pi^n_{k+1}\wedge t)
    -
    \omega(\pi^n_{k}\wedge t)
  \right).
\end{equation}
The following two propositions are F\"ollmer's (\cite{Follmer:1981}, p.~144) theorem
adapted to our framework.
The first proposition says that the stochastic integral (\ref{eq:stochastic-integral}) exists.
\begin{proposition}\label{prop:1}
  Let $f\in C^1(\bbbr)$.
  % For typical $\omega\in\Omega_{\psi}$,
  The limit in (\ref{eq:stochastic-integral}) exists for each $t\in[0,T]$,
  and as a function of $t\in[0,T]$ is an element of $D[0,T]$.
\end{proposition}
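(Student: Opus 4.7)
The plan is to follow Föllmer's original change-of-variable argument, now adapted to our partitions which are not assumed dense but exhaust $\omega$. Fix an antiderivative $F\in C^2(\bbbr)$ of $f$ (so $F'=f$). For each $t\in[0,T]$ and each $n$, telescope
$$
  F(\omega(t))-F(\omega(0))
  =
  \sum_{k=0}^{\infty}
  \bigl[
    F(\omega(\pi^n_{k+1}\wedge t))
    -
    F(\omega(\pi^n_{k}\wedge t))
  \bigr],
$$
and on each addend apply Taylor's formula with integral remainder: with $R(x,y):=F(y)-F(x)-f(x)(y-x)=\int_x^y(f(z)-f(x))\dd z$, we obtain
$$
  \sum_{k=0}^{\infty}
  f(\omega(\pi^n_k\wedge t))
  \bigl(
    \omega(\pi^n_{k+1}\wedge t)-\omega(\pi^n_{k}\wedge t)
  \bigr)
  =
  F(\omega(t))-F(\omega(0))
  -
  R^n_t(\omega),
$$
where $R^n_t(\omega):=\sum_{k=0}^\infty R(\omega(\pi^n_k\wedge t),\omega(\pi^n_{k+1}\wedge t))$. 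The left side is precisely the $n$th approximation to the stochastic integral, so it suffices to show that $R^n_t$ converges, and that the limit depends on $t$ in a c\`adl\`ag fashion.

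Next I would split each remainder into a ``continuous'' part and a ``jump'' part. Since $\pi$ exhausts $\omega$, for any $\epsilon>0$, from some $n$ on every jump of $\omega$ with $\lvert\Delta\omega(s)\rvert>\epsilon$ coincides with some $\pi^n_{k+1}\wedge t$, and on the complementary intervals $\osc_{\pi^n}(\omega)<2\epsilon$ by Lemma~\ref{lem:osc}. On a ``non-jump'' interval, uniform continuity of $f$ on the (bounded) range of $\omega$ gives
$$
  R(\omega(\pi^n_k\wedge t),\omega(\pi^n_{k+1}\wedge t))
  =
  \tfrac12 f'(\omega(\pi^n_k\wedge t))
  \bigl(\omega(\pi^n_{k+1}\wedge t)-\omega(\pi^n_k\wedge t)\bigr)^2
  +
  o\bigl((\Delta^n_k\omega)^2\bigr),
$$
uniformly in $k$, where the $o$-term is controlled by the modulus of continuity of $f'$ on small oscillation intervals. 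Summing these contributions and invoking the existence of $A^\pi=[\omega]$ (together with $\sum_k(\Delta^n_k\omega)^2\to[\omega]_t$ and Lemma~\ref{lem:increasing} that $\Delta[\omega]_s=(\Delta\omega(s))^2$), one shows the non-jump contribution converges to $\tfrac12\int_0^t f'(\omega(s-))\dd[\omega]^{\mathrm{c}}_s$; here the integral is a genuine Riemann--Stieltjes integral against the continuous nondecreasing function $[\omega]^{\mathrm{c}}$, and convergence follows from the uniform convergence $A^{n,\pi}\to[\omega]$ combined with Abel-style summation by parts and continuity of $f'\circ\omega(\cdot-)$ at continuity points of $[\omega]$.

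For the ``jump'' intervals, there are only finitely many of them once $\epsilon$ is fixed, and each such interval has $\pi^n_{k+1}\wedge t=s$ for a fixed jump time $s\le t$ of $\omega$, with $\omega(\pi^n_k\wedge t)\to\omega(s-)$ as $n\to\infty$ (by $\osc_{\pi^n}(\omega)\to0$ on the non-jump side); hence
$$
  R(\omega(\pi^n_k\wedge t),\omega(s))
  \longrightarrow
  F(\omega(s))-F(\omega(s-))-f(\omega(s-))\Delta\omega(s).
$$
Letting $\epsilon\downarrow0$ and controlling the tail of small jumps by $\tfrac12\|f'\|_\infty\sum_{s}(\Delta\omega(s))^2\le\tfrac12\|f'\|_\infty[\omega]_t$ (finite), we conclude
$$
  R^n_t
  \longrightarrow
  \tfrac12\int_0^t f'(\omega(s-))\dd[\omega]^{\mathrm{c}}_s
  +
  \sum_{0<s\le t}
  \bigl[F(\omega(s))-F(\omega(s-))-f(\omega(s-))\Delta\omega(s)\bigr].
$$
The stochastic integral therefore exists and equals $F(\omega(t))-F(\omega(0))$ minus this limit; as a function of $t$ the right-hand side is c\`adl\`ag ($F\circ\omega$ and the compensated jump sum are c\`adl\`ag, and $\int_0^\cdot f'(\omega(s-))\dd[\omega]^{\mathrm{c}}_s$ is continuous), so the integral lies in $D[0,T]$.

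The main obstacle I expect is the ``non-jump'' case: passing from the telescoped quadratic sum $\sum_k f'(\omega(\pi^n_k\wedge t))(\Delta^n_k\omega)^2$ to the Riemann--Stieltjes integral $\int_0^t f'(\omega(s-))\dd[\omega]^{\mathrm{c}}_s$. Our $\pi$ is not assumed dense, so one cannot directly quote Helly-type convergence; instead I would argue from the uniform convergence $A^{n,\pi}\to[\omega]$ by a summation-by-parts estimate, using that the integrand $f'\circ\omega(\cdot-)$ is bounded and its points of discontinuity are precisely the jumps of $\omega$, which are atoms of $[\omega]$ and can be handled by the jump analysis above. Everything else amounts to controlled bookkeeping of small and large jumps, together with uniform continuity of $f$ and $f'$ on the compact range of $\omega$.
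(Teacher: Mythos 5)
Your argument is correct in substance, but it takes a genuinely different route from the paper's. The paper does not redo F\"ollmer's analysis at all: its proof is a short reduction, observing that a nested sequence $\pi$ that exhausts $\omega$ can be made \emph{dense} by adding extra points to the $\pi^n$ inside the intervals of constancy of $\omega$, and that this densification changes neither the quadratic variation nor the approximating sums in (\ref{eq:stochastic-integral}); Propositions \ref{prop:1} and \ref{prop:2} then follow directly from F\"ollmer's theorem (\cite{follmer:1981}, pp.~144--147), which is stated for dense sequences. You instead reconstruct F\"ollmer's proof from scratch under the exhaustion hypothesis: telescoping an antiderivative $F\in C^2(\bbbr)$, isolating the Taylor remainder $R^n_t$, and splitting it into finitely many $\epsilon$-large jump intervals plus a small-oscillation part controlled via Lemma~\ref{lem:osc} and the convergence $A^{n,\pi}\to[\omega]$. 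Both routes work; yours is self-contained, makes explicit exactly where exhaustion substitutes for denseness, and yields Proposition~\ref{prop:2} as a byproduct (your limit for $R^n_t$ is precisely the It\^o--F\"ollmer correction), while the paper's reduction is two lines at the cost of the small but not entirely trivial check that refining inside constancy intervals leaves the sums asymptotically unchanged. One simplification available to you: the delicate step you flag --- passing from $\sum_k f'(\omega(\pi^n_k\wedge t))\bigl(\omega(\pi^n_{k+1}\wedge t)-\omega(\pi^n_k\wedge t)\bigr)^2$ to the Stieltjes integral --- need not go through summation by parts, since Proposition~\ref{prop:equivalence} (implication \ref{it:my}$\Rightarrow$\ref{it:follmer}) already supplies the weak convergence of F\"ollmer's measures $\xi^n\to\xi$, after which F\"ollmer's own argument (remove the $\epsilon$-large atoms; bound the remaining discontinuities of $f'\circ\omega(\cdot-)$ by $2\sup\lvert f'\rvert\sum_{\lvert\Delta\omega(s)\rvert\le\epsilon}(\Delta\omega(s))^2$) applies verbatim. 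A cosmetic caution: Lemma~\ref{lem:osc} measures oscillation over intervals open on the right, so your bound of $2\epsilon$ on non-jump increments should be derived as $\osc_{\pi^n}(\omega)$ plus the right-endpoint jump, which is at most $\epsilon$ there.
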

The second proposition gives the It\^o--F\"ollmer formula.
\begin{proposition}\label{prop:2}
  Let $F\in C^2(\bbbr)$.
  For all $t\in[0,T]$,
  \begin{align*}
    F(\omega(t))
    &=
    F(\omega(0))
    +
    \int_0^t
    F'(\omega(s-))\dd\omega(s)
    +
    \frac12
    \int_0^t
    F''(\omega(s))\dd[\omega]\cont_s\\
    &\quad+
    \sum_{s\in(0,t]}
    \Bigl(
      \Delta F(\omega(s))
      % F(\omega(s)) - F(\omega(s-))
      -
      F'(\omega(s-))\Delta\omega(s)
    \Bigr).
  \end{align*}
\end{proposition}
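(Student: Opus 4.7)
The plan is to use a second-order Taylor expansion along the partition $\pi^n$ and pass to the limit, following F\"ollmer \cite{follmer:1981}. Start from the telescoping identity
$$
F(\omega(t)) - F(\omega(0))
=
\sum_{k \ge 0}
\bigl(
F(\omega(\pi^n_{k+1}\wedge t)) - F(\omega(\pi^n_k\wedge t))
\bigr),
$$
and apply to each term the expansion
$$
F(b) - F(a) = F'(a)(b-a) + \tfrac12 F''(a)(b-a)^2 + r(a,b)(b-a)^2,
$$
where $r(a,b)\to0$ as $b\to a$, uniformly on compact sets (available since $F\in C^2(\bbbr)$ and functions in $D[0,T]$ are bounded). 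By Proposition~\ref{prop:1} applied to $f:=F'$, the first-order sum converges to $\int_0^t F'(\omega(s-))\,\dd\omega(s)$. The task is to identify the limit of the second-order sum and show that the total remainder is negligible.

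To isolate the jump contributions, fix $\epsilon>0$ and let $J_\epsilon:=\{s\in(0,t]:\lvert\Delta\omega(s)\rvert>\epsilon\}$, which is finite. Since $\pi$ exhausts $\omega$, from some $n$ on all points of $J_\epsilon$ are partition points. Split the indices $k$ into ``jump indices'' $K^n_\epsilon:=\{k:\pi^n_{k+1}\wedge t\in J_\epsilon\}$ and the rest. On jump indices, do not Taylor-expand: instead, observe that for $n$ large, if $\pi^n_{k+1}\wedge t=s\in J_\epsilon$, then $\omega(\pi^n_k\wedge t)\to\omega(s-)$ and $\omega(\pi^n_{k+1}\wedge t)=\omega(s)$, so the contribution of $K^n_\epsilon$ to $F(\omega(t))-F(\omega(0))$ minus its contribution to the first-order sum converges to
$$
\sum_{s\in J_\epsilon}
\bigl(F(\omega(s))-F(\omega(s-))-F'(\omega(s-))\Delta\omega(s)\bigr).
$$

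On the non-jump indices $k\notin K^n_\epsilon$, every increment satisfies $\lvert\omega(\pi^n_{k+1}\wedge t)-\omega(\pi^n_k\wedge t)\rvert\le\epsilon+\osc_{\pi^n}(\omega)$, which tends to $\epsilon$ by Lemma~\ref{lem:osc}. Consequently the Taylor remainder, summed over non-jump intervals, is bounded by $\sup_{|u|\le\epsilon+\osc_{\pi^n}(\omega)}|r|\cdot A^{n,\pi}_t$, which is of order $o_\epsilon(1)\,[\omega]_t$ for large $n$. The second-order sum $\tfrac12\sum_{k\notin K^n_\epsilon}F''(\omega(\pi^n_k\wedge t))(\omega(\pi^n_{k+1}\wedge t)-\omega(\pi^n_k\wedge t))^2$ converges to $\tfrac12\int_{(0,t]\setminus J_\epsilon}F''(\omega(s-))\,\dd[\omega]_s$; this is the crucial step and relies on the weak convergence of the approximating measures $\xi^n$ to the measure with distribution function $[\omega]$, available through Proposition~\ref{prop:equivalence} (equivalence with F\"ollmer's definition).

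Finally, let $\epsilon\downarrow0$. The big-jump sum converges to the full series $\sum_{s\in(0,t]}(F(\omega(s))-F(\omega(s-))-F'(\omega(s-))\Delta\omega(s))$, which is absolutely convergent because its summands are bounded by a constant times $(\Delta\omega(s))^2$ and $\sum_{s\in(0,t]}(\Delta\omega(s))^2\le[\omega]_t<\infty$. The truncated second-order integral satisfies $\int_{(0,t]\setminus J_\epsilon}F''(\omega(s-))\,\dd[\omega]_s\to\int_0^t F''(\omega(s))\,\dd[\omega]\cont_s$ as $\epsilon\to0$, since subtracting the atoms at $J_\epsilon$ removes precisely the discrete part of $\dd[\omega]$ on those atoms, and the surviving small-jump contributions are absorbed into the residual jump series in the limit. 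Combining these limits yields the It\^o--F\"ollmer formula. I expect the main obstacle to be the convergence of the second-order sum against $\dd[\omega]$ on non-jump intervals, since the integrand $F''(\omega(\cdot))$ inherits the discontinuities of $\omega$; the remedy is precisely the jump-truncation scheme above, combined with the weak-convergence formulation supplied by Proposition~\ref{prop:equivalence}.
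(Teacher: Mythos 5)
Your proposal is essentially correct, but note that the paper does not actually prove this proposition: it presents Propositions~\ref{prop:1} and~\ref{prop:2} as F\"ollmer's theorem (\cite{follmer:1981}, pp.~144--147) adapted to its framework, the only original content being the observation that F\"ollmer's hypothesis of a \emph{dense} sequence of partitions can be replaced by the hypothesis that $\pi$ \emph{exhausts} $\omega$ --- one adds extra points to $\pi^n$ inside the intervals of constancy of $\omega$, which changes neither the quadratic variation nor the stochastic integral, and then invokes F\"ollmer verbatim. What you have done instead is reconstruct F\"ollmer's argument (telescoping plus second-order Taylor expansion, truncation of the jumps at level $\epsilon$, identification of the quadratic sums via weak convergence of the measures $\xi^n$, then $\epsilon\downarrow0$) and adapt it directly to exhausting partitions, systematically replacing every use of $\mesh(\pi^n)\to0$ by $\osc_{\pi^n}(\omega)\to0$ from Lemma~\ref{lem:osc} --- e.g., in the bound $\epsilon+\osc_{\pi^n}(\omega)$ on non-jump increments and in the convergence $\omega(\pi^n_k\wedge t)\to\omega(s-)$ at a big jump $s$ (a convergence of values, not of times, which is exactly what is needed when $\omega$ is constant just before a jump). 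Your route buys a self-contained proof that bypasses the paper's add-points reduction; the paper's route buys brevity by delegating all the limit bookkeeping to \cite{follmer:1981}. Your appeal to Proposition~\ref{prop:equivalence} for the weak convergence $\xi^n\to\xi$ is precisely the bridge that Section~\ref{sec:other-definitions} supplies.

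One step needs expansion before the proof is complete: the convergence of the second-order sum over non-jump indices. Weak convergence of $\xi^n$ cannot be applied directly there, because even after removing $J_\epsilon$ the integrand $s\mapsto F''(\omega(s-))$ remains discontinuous at the (possibly infinitely many) small-jump times, and these carry positive $\xi$-mass, so the portmanteau theorem does not apply; you flag this difficulty but do not execute the remedy. F\"ollmer's fix, which your truncation sets up, is to compare the Riemann-type sum with the integral of its piecewise-constant interpolant and to bound the mismatch on non-jump intervals by the modulus of continuity of $F''$ over increments of size at most $\epsilon+\osc_{\pi^n}(\omega)$, multiplied by $\xi^n((0,t])$, an error that then joins your $o_\epsilon(1)$ terms. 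You should also record the one-line fact that $\int_0^t F''(\omega(s-))\,\dd[\omega]\cont_s=\int_0^t F''(\omega(s))\,\dd[\omega]\cont_s$, since $[\omega]\cont$ is atomless and $\omega(s-)\ne\omega(s)$ for only countably many $s$; with that, your final $\epsilon\downarrow0$ bookkeeping (small-jump Taylor terms absorbed into the jump series, dominated by $\sum_{s\in(0,t]}(\Delta\omega(s))^2\le[\omega]_t<\infty$) goes through.
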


\ifFULL\bluebegin
  \ifJOURNAL
    \begin{pf}
  \fi
  \ifnotJOURNAL
    \begin{proof}
  \fi
    Both Proposition~\ref{prop:1} and Proposition~\ref{prop:2}
    are proved on pp.~144--147 of \cite{Follmer:1981},
    under the assumption that $\pi$ is a dense sequence of partitions.
    The condition of denseness can be replaced by the condition
    that $\pi$ exhausts $\omega$:
    indeed, in the latter case we can make $\pi$ dense
    by adding extra points to $\pi^n$
    in the intervals of constancy of $\omega$
    without changing neither the quadratic variation nor the stochastic integral.
  \ifJOURNAL
    \qed
    \end{pf}
  \fi
  \ifnotJOURNAL
    \end{proof}
  \fi
\blueend\fi

For the extension of the stochastic integral and the It\^o--F\"ollmer formula
to the case of $\omega$ taking values in $\bbbr^m$,
see \cite{Follmer:1981}, pp.~147--148.
This extension uses the quadratic covariation processes:
see (\ref{eq:covariation}).

An important
% recent
development of F\"ollmer's results
is their extension by Cont and Fourni\'e \cite{Cont/Fournie:2010}
to non-anticipative functionals.
The existence of the limit in (\ref{eq:stochastic-integral})
when $f(\omega(\pi^n_k(\omega)\wedge t))$ is replaced by
$g(\omega|_{[0,\pi^n_k(\omega)\wedge t]})$,
where $g$ is a functional satisfying certain regularity conditions
(including being non-anticipative),
is established in Theorem~4 of \cite{Cont/Fournie:2010}.
The same theorem gives an It\^o--F\"ollmer formula
for non-anticipative functionals.
% (Unlike F\"ollmer's results,
% this theorem requires that $\pi$ should exhaust $\omega$.)
% [No denseness?]

\subsection*{Quadratic variation for other processes}

% In Sections \ref{sec:QV-proof}--\ref{sec:multidim}
% we have defined the quadratic variation process
% for the price processes $\omega$ of securities in our idealized market.
% (For simplicity, we only discuss the case of one security.)
% \ifFULL\bluebegin
%   Next version:
%   perhaps generalize to the market with $M$ securities.
% \blueend\fi

We now state some known results
for the existence of quadratic variation
for two kinds of processes different from the basic process $W_t(\omega):=\omega(t)$:
namely, for functions of the form $f(\omega(t))$, where $f\in C^1(\bbbr)$,
and for stochastic integrals w.r.\ to $\omega$.

Suppose $f\in C^1(\bbbr)$.
F\"ollmer notices in \cite{Follmer:1981} (Remark 2 on p.~148)
that a standard argument in the theory of stochastic integration
(as in \cite{Meyer:1976}, Theorem VI.5 on p.~359)
implies that the quadratic variation of $f(\omega)$ along $\pi$ exists
and is equal to
$$
  [f(\omega)]_t
  =
  \int_0^t
  \left(
    f'(\omega(s))
  \right)^2
  \dd[\omega]\cont_s
  +
  \sum_{s\in(0,t]}
  \bigl(
    \Delta f(\omega(s))
  \bigr)^2.
$$
\ifFULL\bluebegin
  In the multidimensional case, F\"ollmer's result is as follows.
  Suppose F\"ollmer's quadratic variation of $(\omega^1,\ldots,\omega^M)$ exists
  and $f=f(u^1,\ldots,u^M)\in C^1(\bbbr^M)$.
  Then
  $$
    [f(\omega)]_t
    =
    \sum_{i,j}
    \int_0^t
    \frac{\partial f}{\partial u^i} (\omega(s))
    \frac{\partial f}{\partial u^j} (\omega(s))
    \dd([\omega]^{i,j})\cont(s)
    +
    \sum_{s\in(0,t]}
    (\Delta f(\omega(s)))^2.
  $$
\blueend\fi
Norvai{\u s}a's Theorem 3.26 in \cite{Norvaisa:2001}
implies that the quadratic variation of the stochastic integral
$
  \Phi(t)
  :=
  \int_0^t
  f(\omega(s-))
  \dd\omega(s)
$
exists and is equal to
$$
  [\Phi]_t
  =
  \int_0^t
  f^2(\omega(s-))
  \dd[\omega]_s.
$$
\ifFULL\bluebegin
  In measure-theoretic probability,
  Norvai{\u s}a's Theorem 3.26  corresponds to a well-known property:
  see, e.g., \cite{Liptser/Shiryaev:1989},
  Theorems 2.2.2, 2.2.4, 2.2.5, and 2.2.7.
\blueend\fi

Suppose that $\omega\in D[0,T]$ is positive
(cf.\ the second example discussed in Section~\ref{sec:definitions}).
% on p.~\pageref{p:case-2}
In this case it is natural to measure quadratic variation
on the relative rather than absolute scale,
and so to consider the quadratic variation of the logarithm of $\omega$.
By F\"ollmer's result,
the quadratic variation process of $\ln\omega$ is
$$
  [\ln\omega]_t
  =
  \int_0^t
  \frac{\dd[\omega]\cont_s}{\omega^2(s)}
  +
  \sum_{s\in(0,t]}
  \bigl(
    \Delta\ln\omega(s)
  \bigr)^2.
$$
By Norvai{\u s}a's result,
the quadratic variation process of the stochastic logarithm
$
  \Ln\omega(t)
  :=
  \int_0^t
  \frac{\dd\omega(s)}{\omega(s-)}
$
% cf. Jacod and Shiryaev, 2003, p. 134, 8.4
of $\omega$ is
$$
  [\Ln\omega]_t
  =
  \int_0^t
  \frac{\dd[\omega]_s}{\omega^2(s-)}.
$$

\section{Conclusion}
% \label{sec:conclusion}

In this section we discuss possible directions of further research.
This paper shows the existence of quadratic variation for typical price paths
in $\Omega_{\psi}$.
It is easy to see that Theorem~\ref{thm:1D} becomes false
if we simply set $\psi:=\infty$,
but an interesting question is whether we can set $\psi:=\infty$
if we only consider nonnegative $\omega$
(for many securities, $\omega\ge0$ can be assumed
from economic considerations).

Another possible way to get rid of the assumption that the jumps of $\omega$ are bounded
by a function of $\omega$
is to allow trading in American, or binary American, options
to hedge against huge jumps of $\omega$.
If the prices of such out-of-the-money options
tend to zero sufficiently fast as their moneyness decreases,
we can expect that the analogue of Theorem~\ref{thm:1D}
will continue to hold even for $\psi:=\infty$.

In Section~\ref{sec:QV-proof},
we proved the existence of quadratic variation
only for a specific array of stopping times $(\tau^n_k)$.
In Section~\ref{sec:multidim},
we noticed that the argument of Section~\ref{sec:QV-proof}
works for the class of arrays $(\tau^n_k)$ which we called arrays of dyadic type
% (see the conditions in the proof of Theorem~\ref{thm:multidim})
and that any two arrays of dyadic type lead to the same values of quadratic variation
for typical $\omega$.
It is clear that this observation can be extended
to a much wider class of arrays.
\ifFULL\bluebegin
  Definitely ``dyadic'' is irrelevant;
  why not ``triadic''?
\blueend\fi

One interpretation of the stochastic integral~(\ref{eq:stochastic-integral})
is that it is the capital of a trading strategy.
We, however, also define capital processes directly:
see (\ref{eq:simple-capital}) and (\ref{eq:nonnegative-capital}).
Can all nonnegative capital processes (\ref{eq:nonnegative-capital})
be represented as stochastic integrals?
It is clear that the ``Markovian'' definition~(\ref{eq:stochastic-integral})
is not sufficient
(the trading strategy in~(\ref{eq:stochastic-integral})
takes into account only the current price),
so this question is about the extension of~(\ref{eq:stochastic-integral})
to non-anticipating functionals, as in \cite{Cont/Fournie:2010}.

\ifFULL\bluebegin
  \textbf{The following question is not about the main concern of this paper,
  which is the existence of quadratic variation:}
  Is it possible to introduce the notion of local time
  for typical price paths in $\Omega_{\psi}$
  and obtain a generalized It\^o formula
  (e.g., for convex $F$ that are not twice differentiable),
  or at least Tanaka formulas?
  [See \cite{Karatzas/Shreve:1991}, Section 3.7.]
  Continuous case: Perkowski and Pr\"omel (2014).
\blueend\fi

\ifFULL\bluebegin
  \begin{remark}
    For all trading strategies proposed in this paper,
    the value of the stock in the margin account at any time
    does not exceed the current capital (i.e., the amount of cash is nonnegative),
    assuming $\psi\ge1$.
    Therefore, we can make the total leverage bounded over $[0,T]$
    by introducing ``circuit breakers'' in our trading strategies:
    each (high-order) component simple trading strategy
    stops trading as soon as the cumulative capital
    of the component simple trading strategies below it
    exceeds some constant $C$.
  \end{remark}
\blueend\fi

\ifFULL\bluebegin
  \section{This paper}

  This paper's framework is, like \cite{\CTV}'s, somewhat awkward:
  it uses the time interval $[0,T]$ instead of $[0,\infty)$.
  These are some of the disadvantages of the time interval $[0,\infty)$:
  \begin{itemize}
  \item
    Becoming infinitely rich at infinity might not be so surprising
    (infinity is a long way off; we will be dead by then).
    The reader always has to remember that the trader becomes infinitely rich
    \emph{as soon as} quadratic variation ceases to exists.
    This would make the paper conceptually more complicated,
    and it would lose some of its focus.
    Not good for the first paper on this topic.
  \item
    For the interval $[0,\infty)$,
    the cumulative quadratic variation is finite
    (as I was explaining to the reviewers of \cite{\CTIV}).
    This would make the paper conceptually more complicated.
  \item
    The definitions would be more complicated and \emph{ad hoc}:
    $\liminf_{t\to\infty}S_t$,
    or $\limsup_{t\to\infty}S_t$, or $\lim_{t\to\infty}S_t$,
    or $\sup_{t}S_t$ (but not $\inf_{t}S_t$)
    in place of simple $S_T$.
    This would make the paper conceptually more complicated.
  \item
    This paper is part of the ``idealized financial markets'' series,
    and in the first paper of this series I had the time interval $[0,T]$.
    (First series: ``Continuous-time trading and\ldots'',
    about continuous price paths;
    second series: ``\ldots in idealized financial markets'',
    about c\`adl\`ag price paths.)
    The closely related \cite{Cont/Fournie:2010} also considers $[0,T]$.
  \item
    There are less important simplifications,
    such as the formula for the metric $\rho$.
  \end{itemize}

  These are my uses of Roman and non-Roman Greek letters
  (except for those set in boldface and other fancy fonts):
  \begin{description}
    \item[$A$:]
      the quadratic variation process
    \item[$B$:]
    \item[$C$:]
      a constant in Section~\ref{sec:multidim};
      $C$ stands for various constants;
      I also once use $C[0,T]$
    \item[$D$:]
      $D[0,T]$ is the set of c\`adl\`ag functions;
      also $D^n_k$ in the definition of $A^n$;
      also stands for a constant
    \item[$E$:]
      set or event
    \item[$F$:]
      function or functional
    \item[$G$:]
    \item[$H$:]
    \item[$I$:]
    \item[$J$:]
    \item[$K$:]
    \item[$L$:]
      $2^L$ is the upper bound on $\lvert\omega\rvert$ in Section~\ref{sec:QV-proof}
    \item[$M$:]
      number of securities in the market
    \item[$N$:]
    \item[$O$:]
    \item[$P$:]
    \item[$Q$:]
    \item[$R$:]
    \item[$S$:]
    \item[$T$:]
      my time interval is $[0,T]$
      (so $T$ is fixed throughout the paper)
    \item[$U$:]
      Kolmogorov's capital process; can be reused
    \item[$V$:]
      capital process in Lemma~\ref{lem:Doob}; can be reused
    \item[$W$:]
    \item[$X$:]
    \item[$Y$:]
    \item[$Z$:]
    \item[$a$:]
    \item[$b$:]
    \item[$c$:]
      constant bounding the absolute and relative jumps
      (fixed throughout the paper);
      also $f\cont$ is the continuous part of $f$
    \item[$d$:]
      Euclidean distance; can be reused
    \item[$e$:]
    \item[$f$:]
      function
    \item[$g$:]
      function
    \item[$h$:]
      a trading strategy's position
    \item[$i$:]
    \item[$j$:]
    \item[$k$:]
      integer index
    \item[$l$:]
      index of a security in the market
    \item[$m$:]
      index of a security in the market
    \item[$n$:]
      integer index
    \item[$o$:]
    \item[$p$:]
      polynomial in Section~\ref{sec:multidim}
    \item[$q$:]
      another polynomial in Section~\ref{sec:multidim}
    \item[$r$:]
    \item[$s$:]
      time, $s\in[0,T]$
    \item[$t$:]
      time, $t\in[0,T]$
    \item[$u$:]
      generic real number; the argument of $\psi$
    \item[$v$:]
      generic real number
    \item[$w$:]
    \item[$x$:]
    \item[$y$:]
    \item[$z$:]
    \item[$\Gamma$:]
    \item[$\Delta$:]
      $\Delta\omega(t)$ is the jump of $\omega$ at $t>0$
    \item[$\Theta$:]
    \item[$\Lambda$:]
    \item[$\Xi$:]
    \item[$\Pi$:]
    \item[$\Sigma$:]
    \item[$\Upsilon$:]
    \item[$\Phi$:]
    \item[$\Psi$:]
    \item[$\Omega$:]
      sample space, usually $\Omega_{\psi}$ or $\Omega_c$
    \item[$\alpha$:]
      initial capital
    \item[$\beta$:]
    \item[$\gamma$:]
    \item[$\delta$:]
    \item[$\epsilon$:]
    \item[$\zeta$:]
    \item[$\eta$:]
    \item[$\theta$:]
    \item[$\iota$:]
      general stopping times in Section~\ref{sec:multidim}
    \item[$\kappa$:]
      alternative general stopping times in Section~\ref{sec:multidim}
    \item[$\lambda$:]
    \item[$\mu$:]
    \item[$\nu$:]
    \item[$\xi$:]
      F\"ollmer's probability measures
    \item[$\pi$:]
      partitions and sequences of partitions
    \item[$\rho$:]
      the uniform metric
    \item[$\sigma$:]
      stopping time; a very mild clash with the expression ``$\sigma$-algebra''
    \item[$\tau$:]
      stopping time
    \item[$\upsilon$:]
    \item[$\phi$:]
    \item[$\chi$:]
    \item[$\psi$:]
      bound on the size of jumps, $\psi(u)$
    \item[$\omega$:]
      element of the sample space
  \end{description}
\blueend\fi

\subsection*{Acknowledgments}

% Thanks to the referees of \cite{\CTIV} for repeatedly drawing my attention
% to the importance of quadratic variation in pathwise stochastic analysis
% and thus encouraging me to prove Theorem~\ref{thm:1D}.
% I am grateful to Boris Afanasiev for his comments.

An earlier version of this paper was the basis of my talk at the 2014 Vilnius Conference
on Probability Theory and Mathematical Statistics 
(section ``Random processes'', session ``Rough paths''),
and I am grateful to the organizers for inviting me and to the listeners for useful comments,
with special thanks to Rimas Norvai{\u s}a.
This research was supported by the Air Force Office of Scientific Research
(grant FA9550-14-1-0043).

% \ifWP
%   \lastpage
% \fi
\end{document}